\newcommand{\pointmedian}{\fontfamily{cmr}\selectfont\textperiodcentered}
\newcommand{\cuckersproblem}{\mathfrak{F}\mathrm{er}}
\newcommand{\gct}{{\sc gct}\xspace}
\newcommand{\seq}[1]{\mathbf{#1}}
\newcommand{\inputspaceE}[1]{\Space{In}^{\mathrm{E}}(#1)}
\newcommand{\inputspaceI}[1]{\Space{In}^{\mathrm{I}}(#1)}
\newcommand{\inputs}[1]{\mathrm{In}(#1)}
\newcommand{\outputs}[1]{\mathrm{Out}(#1)}
\newcommand{\history}[1]{\mathfrak{h}(#1)}
\newcommand{\benorvar}[2][\seq{e'}]{x^{\history{#1}(#2)}_{#2}}
\newcommand{\cotree}[2][T]{\textsc{coT}_{#2}(#1)}
\newcommand{\maxflow}{\ensuremath{\mathtt{maxflow}}\xspace}
\newcommand{\prams}{{\sc pram}{\rm s}\xspace}
\newcommand{\pram}{{\sc pram}\xspace}
\newcommand{\acts}{{\sc act}{\rm s}\xspace}
\newcommand{\algcirc}{{\sc algcirc}\xspace}
\newcommand{\srams}{{\sc ram}{\rm s}\xspace}
\newcommand{\sram}{{\sc ram}\xspace}
\newcommand{\crew}{\textsc{crew}\xspace}
\newcommand{\instruction}[1]{\mathtt{#1}}
\newcommand{\copyy}[2]{\instruction{copy}(#1,#2)}
\newcommand{\copyref}[2]{\instruction{copy}(#1,\sharp{}#2)}
\newcommand{\refcopy}[2]{\instruction{copy}(\sharp{}#1,#2)}
\newcommand{\boadd}[3]{\instruction{+}_{#1}(#2,#3)}
\newcommand{\bosubstract}[3]{\instruction{-}_{#1}(#2,#3)}
\newcommand{\bomultiply}[3]{\instruction{\times}_{#1}(#2,#3)}
\newcommand{\boaddconst}[3]{\instruction{+}^{#3}_{#1}(#2)}
\newcommand{\bosubstractconst}[3]{\instruction{-}^{#3}_{#1}(#2)}
\newcommand{\bomultiplyconst}[3]{\instruction{\times}^{#3}_{#1}(#2)}
\newcommand{\bodivide}[3]{\instruction{/}_{#1}(#2,#3)}
\newcommand{\bodivideconst}[3]{\instruction{/}^{#3}_{#1}(#2)}
\newcommand{\bogen}[3]{\star_{#1}(#2,#3)}
\newcommand{\bogenconst}[3]{\star^{#3}_{#1}(#2)}
\newcommand{\boeuclidivide}[3]{\instruction{//}_{#1}(#2,#3)}
\newcommand{\boconst}[2]{\instruction{const}_{#1}(#2)}
\newcommand{\bosqrtn}[2]{\instruction{\sqrt[n]{#1}}(#2)}
\newcommand{\bosqrt}[3][n]{\instruction{\sqrt[#1]{#2}}(#3)}
\newcommand{\rootdegree}[1]{\sqrt[\partial]{#1}}
\newcommand{\plusplus}{\textrm{\scriptsize{$+\!+$}}}
\newcommand{\simref}{\sim_{\mathrm{ref}}}
\newcommand{\amcact}{\alpha_{\mathrm{act}}}
\newcommand{\amcram}{\alpha_{\mathrm{ram}}}
\newcommand{\amcprams}{\alpha_{\mathrm{pram}}}
\newcommand{\amcfull}{\alpha_{\mathrm{full}}}
\newcommand{\amcrealfull}{\alpha_{\realN\mathrm{full}}}
\newcommand{\seqedges}[2]{\mathrm{Seq}_{#1}(#2)}
\newcommand{\command}[1]{\mathtt{#1}}
\newcommand{\instr}[2][M]{\mathrm{Inst}_{#1}(#2)}
\newcommand{\ncproduct}[2]{\tensor*[_{#1}]{\ast}{_{#2}}}
\newcommand{\admss}[2][G]{\mathrm{Adm}_{#2}(#1)}
\newcommand{\hyperplan}[1]{\mathbb{H}(#1)}
\newcommand{\complementset}[1]{#1^{\mathrm{c}}}
\newcommand{\conditional}[3]{\texttt{if}~ {\command{#1}}
  ~ \texttt{goto} ~ {#2} ~ \texttt{else} ~ {#3}}
\newcommand{\Skip}{\mathtt{skip}}
\newcommand{\length}[1]{\left| #1 \right|}
\newcommand{\opencovers}[1]{\mathrm{Cov}(\Space{#1})}
\newcommand{\finiteopencovers}[1]{\mathrm{FCov}(\Space{#1})}
\newcommand{\OptProb}{\mathcal{P}_{\mathrm{opt}}}
\newcommand{\DecProb}{\mathcal{P}_{\mathrm{dec}}}
\newcommand{\MaxOptProb}{\mathrm{Max}\OptProb}
\newcommand{\Parametrization}{\mathsf{Param}}
\newcommand{\AffinePlane}{\mathcal{A}_1}
\newcommand{\Frontier}{\mathsf{Front}}
\newcommand{\Fan}{\mathsf{Fan}}
\newcommand{\projectionAz}[1]{\Pi(#1)}
\newcommand{\Interpret}[1]{\mathopen{|\![} #1 \mathclose{]\!|}}
\begin{document}
%
\title{Lower bounds for algebraic machines, semantically}

\author{\IEEEauthorblockN{Luc Pellissier}
\IEEEauthorblockA{LACL, Faculté des Sciences et Technologie\\
61 avenue du Général de Gaulle\\
94010 Créteil, FRANCE\\
Email: luc.pellissier@lacl.fr}
\and
\IEEEauthorblockN{Thomas Seiller}
\IEEEauthorblockA{CNRS, LIPN -- UMR 7030\\ 
Université Sorbonne Paris Nord\\
99, Avenue Jean-Baptiste Clément\\
93430, Villetaneuse, FRANCE\\
Email: seiller@lipn.fr}}


%


\maketitle

\begin{abstract}
  This paper presents a new semantic method for proving lower bounds in
  computational complexity. We use it to prove that \maxflow, a \Ptime complete 
  problem, is not computable in polylogarithmic time on parallel random access 
  machines (\prams) working with integers, showing that $\NCInteger\neq\Ptime$, 
  where \NCInteger is the complexity class defined by such machines, 
  and \Ptime is the standard class of 
  polynomial time computable problems (on, say, a Turing machine). 
  On top of showing this new separation result, we show our method  
  captures previous lower bounds results from the literature:
  Steele and Yao's lower bounds for algebraic decision trees \cite{SteeleYao82}, 
  Ben-Or's lower bounds for algebraic computation trees \cite{Ben-Or83}, Cucker's
  proof that \NCReal is not equal to \PtimeReal \cite{Cucker92}, and Mulmuley's
  lower bounds for \enquote{\prams without bit operations} \cite{Mulmuley99}.
\end{abstract}

\section{Introduction}

Complexity theory has traditionally been concerned with proving 
\emph{separation results} between complexity classes. Many problems
remain open, such as the much advertised \Ptime vs \NPtime question
which concerns the difference between feasible sequential computability 
in deterministic and non-deterministic models, or equivalently the difference
between feasible computation and feasible verification.
This paper investigates questions related to the \NC vs \Ptime question,
which concerns the difference between efficient sequential computation
and (more than) efficient parallel computation. 

Proving that two classes $B\subset A$ are not equal can be reduced to 
finding lower bounds for problems in $A$: by proving that certain problems 
cannot be solved with less than certain resources on a specific model of 
computation, one can show that two classes are not equal. Conversely, 
proving a separation result $B\subsetneq A$ provides a lower bound for 
the problems that are $A$\emph{-complete} \cite{cooknpcomplete} -- i.e. 
problems that are in some way \emph{universal} for the class $A$.

The proven lower bound results are however very rough, and many separation problems
remain as generally accepted conjectures. For instance, a proof that the class
of non-deterministic exponential problems is not included in what is thought of
as a very small class of circuits was not achieved until very recently
\cite{Williams}.
%

The failure of most techniques of proof has been studied in itself, which lead
to the proof of the existence of negative results that are commonly called
\emph{barriers}. Altogether, these results show that all proof methods we know
are ineffective with respect to proving interesting lower bounds. Indeed, there
are three barriers: relativisation \cite{relativization}, natural proofs
\cite{naturality} and algebrization \cite{algebraization}, and almost every
known proof method hits at least one of them: this shows the need for new
methods\footnote{In the words of S. Aaronson and A. Wigderson
  \cite{algebraization}, \enquote{We speculate that going beyond this limit
    [algebrization] will require fundamentally new methods.}}. However, to this
day, only one research program aimed at proving new separation results is
commonly believed to have the ability to bypass all barriers: Mulmuley and
Sohoni's Geometric Complexity Theory (\gct) program \cite{GCTsurvey2}. This
research program was inspired from an earlier lower bounds result by Mulmuley
\cite{Mulmuley99} which we strengthen in this paper.

\subsection{Mulmuley's result}

The \NC vs \Ptime question is one of the foremost open questions in
computational complexity. In laymen's terms, it asks whether a problem
efficiently computable on a sequential machine can be computed substantially
more efficiently on a parallel machine. It is well known that any problem in
\NC, i.e. that is computable in polylogarithmic time on a parallel machine (with a
polynomial number of processors), belongs to \Ptime, i.e. is computable in
polynomial time on a sequential machine. The converse, however, is expected to
be false. Indeed, although many problems in \Ptime can be shown to be in \NC,
some of them seem to resist efficient parallelisation. In particular it is not
known whether the \maxflow problem, known to be \Ptime-complete
\cite{MaxflowComplete}, belongs to \NC.

As part of the investigations on the \NC vs \Ptime question, a big step forward
is due to K. Mulmuley. In 1999 \cite{Mulmuley99}, he showed that a notion of
machine introduced under the name ``\prams without bit operations'' does not
compute \maxflow in polylogarithmic time. This notion of machine, quite exotic
at first sight, corresponds to an algebraic variant of \prams, where registers
contain integers and individual processors are allowed to perform sums,
subtractions and products of integers. It is argued by Mulmuley that this notion
of machine provides an expressive model of computation, able to compute some non
trivial problems in \NC such as Neff's algorithm for computing approximate roots
of polynomials \cite{Neffsalgo}. Although Mulmuley's result has represented a
big step forward in the quest for a proof that \Ptime and \NC are not equal, the
result was not strenghtened or reused in the last 20 years, and remained the
strongest known lower bound result.

\subsection{Contributions.}

The main contribution of this work is a strengthening of Mulmuley's lower
bounds result. While the latter proves that \maxflow is not computable in
polylogarithmic time in the model of \enquote{\prams without bit operations}, 
we show here that \maxflow is not computable in polylogarithmic time in the more 
expressive model of \prams over integers, making an additional step in the 
direction of a potential proof that \NC is different from \Ptime. Indeed, our
result can be stated as
\begin{theorem}\label{mainthm}\label{Theorem1}
\[\NCInteger \neq \Ptime,\]
where \NCInteger is the set of problems decidable in polylogarithmic 
time by a (not necessarily uniform) family of \prams over $\integerN$.
\end{theorem}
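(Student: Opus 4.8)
The plan is to derive \Cref{Theorem1} from one concrete lower bound. The decision version of \maxflow lies in \Ptime (it is even \Ptime-complete \cite{MaxflowComplete}), so it is enough to show that no polylogarithmic-time, possibly non-uniform, family of \prams over $\integerN$ decides it. All the difficulty sits there, and the route is the one opened by Steele--Yao \cite{SteeleYao82}, Ben-Or \cite{Ben-Or83}, Cucker \cite{Cucker92} and Mulmuley \cite{Mulmuley99}, but passed through the semantic layer of the paper so that a single argument subsumes all these models.

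First I would fix the input length $n$ -- legitimate since non-uniformity is allowed -- and attach to a run of a $p$-processor, time-$t$ \pram over $\integerN$ on inputs of length $n$ a semantic object: a graphing, in the spirit of the Geometry of Interaction, acting on a configuration space built from the register contents, in which the instructions of a single processor ($+$, $-$, $\times$, the arithmetic tests, indirect addressing, concurrent read/write resolution) are realised by a fixed restricted palette of transition maps. The key structural lemma to establish is a \emph{quantitative compilation} statement: the set of accepted inputs of length $n$ is a Boolean combination of at most $\mathrm{poly}(p,t)\cdot 2^{O(t)}$ sign conditions on integer polynomials of degree at most $2^{O(t)}$ in the $n$ input variables -- the exponential in $t$ coming from iterating products across parallel steps, the polynomial factor from bookkeeping of processors and steps. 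Equivalently, one extracts from the associated graphing a topological invariant, in the lineage of the connected-component bounds of Steele--Yao and Ben-Or, which is at most $2^{O(t)}\cdot\mathrm{poly}(p,t)$.

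Second I would feed this into a Milnor--Thom type estimate: intersected with a generic affine line through input space, the accept set has at most $2^{O(t)}\cdot\mathrm{poly}(p,t)$ connected components, i.e.\ along such a line the machine changes its answer at most that often. For $t=\mathrm{polylog}(n)$ and $p=\mathrm{poly}(n)$ this is $2^{\mathrm{polylog}(n)}$, quasi-polynomial in $n$. Against it I would set the high \emph{parametric complexity} of \maxflow, after Mulmuley: there is a family of networks of size $n$ whose edge capacities are affine in a single parameter $\lambda$ such that $\lambda\mapsto\mathrm{opt}(\lambda)$ has $2^{n^{\Omega(1)}}$ breakpoints, so composing with a threshold gives instances arranged along a line on which any correct decider must change its answer $2^{n^{\Omega(1)}}$ times. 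Since $2^{n^{\Omega(1)}}$ eventually dominates $2^{\mathrm{polylog}(n)}$, this contradicts the upper bound; hence \maxflow is not in $\NCInteger$, and $\NCInteger\neq\Ptime$.

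The main obstacle is the quantitative compilation lemma. It requires a genuinely faithful translation from the operational semantics of \prams over $\integerN$ into the semantic object -- controlling the interplay of the discrete branching structure with the algebraic operations, and absorbing indirect addressing and concurrent memory access -- while keeping the claimed degree and count bounds, and crucially without silently retreating to a weaker fragment such as Mulmuley's \enquote{\prams without bit operations}. That is precisely where the semantic framework built earlier in the paper is meant to earn its keep: it is what makes it possible to organise one invariant that is at once small for fast machines and large for \maxflow, uniformly across all the models in the list.
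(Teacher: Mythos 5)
Your high-level scheme matches the paper's: reduce to a lower bound for \maxflow, turn the machine's run into a semi-algebraic object via a graphing interpretation, and play the resulting bound off against the high parametric complexity of \maxflow. The closing geometric step, though, is exactly where the hard content lives and where your sketch smooths over a genuine gap. Counting connected components of the (real) accept set along a generic affine line via Milnor--Thom is not enough, because a \pram over $\integerN$ only has to be correct on \emph{integer} inputs: the real extension of its accept set can have very few connected components while still correctly classifying every lattice point in $\integerN^3\cap K$ on either side of the $\rho$-fan's $2^{\Omega(n)}$ rational breakpoints. The paper instead invokes Mulmuley's purely geometric theorem (\Cref{thm:mulmuley-geometric}): a finite set of algebraic surfaces of total degree $\delta$ cannot \emph{separate} a $\rho$-fan --- i.e.\ refine the partition it induces on the lattice points $\integerN^3\cap K$ --- once $\rho > P(\delta)$. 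That is a genuinely arithmetic statement about lattice points, not a topological count over $\realN$, and it is what your argument must supply; \enquote{$2^{n^{\Omega(1)}}$ eventually dominates $2^{\mathrm{polylog}(n)}$} does not discharge it.

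Two further points. First, before any real-algebraic geometry can be applied, the integer \pram must be rendered algebraic over $\realN$: integer floor division is not a polynomial operation, and the paper handles this by passing to a real-valued \pram $\tilde{M}$ that agrees with $M$ on integers and simulates Euclidean division in constant time (\Cref{prop:constanttimeeuclidian}); your sketch does not address this step. Second, the paper's main lemma (\Cref{thm:graphingsBenOrsystems}) does not produce degree-$2^{O(t)}$ polynomials in the $n$ input variables as you claim; following Ben-Or it introduces a fresh variable per step, keeping the degree at most $\max(2,\rootdegree{G})$ while accepting $O(pk)$ additional variables. This trade is what makes the quantitative bookkeeping (number of systems, degree, variable count) land where it needs to when combined with Milnor--Thom and with Mulmuley's separation theorem, so it is not an optional presentational choice.
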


The second contribution of the paper is the proof method itself, which is 
based on \emph{dynamic semantics} for programs by means of \emph{graphings},
a notion introduced in ergodic theory and recently used to define models 
of linear logic by Seiller 
\cite{seiller-igf,seiller-igg,seiller-ignda,seiller-ige}. The dual nature of
graphings, both continuous and discrete, is essential in the present work, 
as it enables invariants from continuous mathematics, in particular the 
notion of \emph{topological entropy} for dynamical systems, while the 
finite representability of graphings is used in the key lemma (as the number
of \emph{edges} appears in the upper bounds of \Cref{thm:graphingsBenOrsystems}).

In particular, we show how this proof method captures
known lower bounds and separation results in algebraic models of computation,
namely Steele and Yao's lower bounds for algebraic decision trees \cite{SteeleYao82},
Ben-Or's lower bounds on algebraic computation trees \cite{Ben-Or83}, 
Cucker's proof that \NCReal is not equal to \PtimeReal (i.e. answering the 
\NC vs \Ptime problem for computation over the real numbers).

\subsection{A more detailed view of the proof method}

One of the key ingredients in the proof is the representation of 
programs as graphings, and \emph{quantitative soundness} results. 
We refer to the next section for a formal statement, and we only provide
an intuitive explanation for the moment. Since 
a program $P$ is represented as a graphing $\Interpret{P}$, which is in 
some way a dynamical system, the computation $P(a)$ on a given 
input $a$ is represented as a sequence of values 
$\Interpret{a}), \Interpret{P}(\Interpret{a}), \Interpret{P}^2(\Interpret{a}),\dots$. 
Quantitative soundness
states that not only $\Interpret{P}$ computes exactly as $P$, but it does
so in the same number of steps, i.e. if $P(a)$ terminates on a value $b$
in time $k$, then $\Interpret{P}^{k}(\Interpret{a})=\Interpret{b}$.

The second ingredient is the dual nature of graphings, both continuous 
and discrete objects. Indeed, a graphing \emph{representative} is a 
graph-like structure whose edges are represented as continuous 
maps, i.e. a finite representation of a (partial) continuous dynamical 
system. Given a graphing, we define its \emph{$k$th 
cell decomposition}, which separates the input space into cells such that 
two inputs in the same cell are indistinguishable in $k$ steps, i.e. the 
graphing's computational traces on both inputs are equal. We can then
use both the finiteness of the graphing representatives and the 
\emph{topological entropy} of the associated dynamical system to 
provide upper bounds on the size of a further refinement of this 
geometric object, namely the \emph{$k$-th entropic co-tree}
of a graphing -- a kind of final approximation of the graphing by a 
computational tree\footnote{Intuitively, the $k$-th entropic co-tree 
mimicks the behaviour of the graphing for $k$ steps of computation.} 

As we deal with algebraic models of computation, this implies a bound 
on the representation of the  $k$th cell decomposition as a semi-algebraic 
variety. In other words, the $k$th cell decomposition is defined by polynomial 
in\pointmedian{}equalities and we provide bounds on the number and 
degree of the involved polynomials. The corresponding statement is the
main technical result of this paper 
(\Cref{thm:graphingsBenOrsystems}).

This lemma can then be used to obtain lower bounds results. Using 
the Milnor-Thom theorem to bound the number of connected components 
of the $k$th cell decomposition, we then recover the lower bounds of 
Steele and Yao on algebraic decision trees, and the refined result of 
Ben-Or providing lower bounds for algebraic computation trees. A different
argument based on invariant polynomials provides a proof of Cucker's 
result that $\NCReal\neq\PtimeReal$ by showing that a given polynomial 
that belongs to $\PtimeReal$ cannot be computed within $\NCReal$. Lastly,
following Mulmuley's geometric representation of the \maxflow problem,
we are able to strenghten his celebrated result to obtain lower bounds on
the size (depth) of a \pram over the integers computing this problem. This
proves the following theorem, which has \autoref{mainthm} as a corollary.

\begin{theorem}
  \label{theorem:main-pram}
  Let $c$ be a positive integer, $M$ a \pram over $\integerN$ with 
  $2^{O((\log N)^c)}$ processors, with $N$ the length of the inputs. 
  Then $M$ does not decide \maxflow in $O((\log N)^c)$ steps.
\end{theorem}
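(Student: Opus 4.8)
The plan is to specialise the general apparatus set up above -- the representation of a program as a graphing, quantitative soundness, and the semi-algebraic bound on cell decompositions of \Cref{thm:graphingsBenOrsystems} -- to \prams over $\integerN$, and then to feed the resulting geometric estimate into Mulmuley's parametric analysis of \maxflow. First I would argue by contradiction: suppose $M$ is a \pram over $\integerN$ with $p(N)=2^{O((\log N)^{c})}$ processors that decides \maxflow in $t(N)=O((\log N)^{c})$ steps on inputs of length $N$. Representing $M$ by a graphing $\Interpret{M}$, whose number of edges is polynomially bounded in the description of $M$ and hence is again of the form $2^{O((\log N)^{c})}$, quantitative soundness guarantees that the accept/reject verdict of $M$ after $t(N)$ steps is read off from $\Interpret{M}^{t(N)}$ applied to the encoding of the input.

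Next I would invoke \Cref{thm:graphingsBenOrsystems} with $k=t(N)$: the $k$-th cell decomposition of the input space of $\Interpret{M}$ is a semi-algebraic partition whose cells are cut out by at most $B_1(N)$ polynomials, each of degree at most $B_2(N)$, where $B_1(N)$ and $B_2(N)$ are explicitly controlled by the number of edges of $\Interpret{M}$ and by $k$. Inserting $p(N)=2^{O((\log N)^{c})}$ and $t(N)=O((\log N)^{c})$ makes both bounds \emph{quasi-polynomial}, that is, of the form $2^{O((\log N)^{c'})}$ for a constant $c'$ depending only on $c$; and since $M$ decides \maxflow, each cell of this decomposition carries a uniform accept or reject label.

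The decisive step would be to use Mulmuley's geometric representation of \maxflow: there is a family of \maxflow instances of size polynomial in $N$ depending on a single parameter $\lambda$, the capacities being affine functions of $\lambda$ with coefficients of polynomially bounded bit-length, for which the \emph{parametric complexity} $\rho(N)$ -- the number of breakpoints of the piecewise-linear maximum-flow value $v(\lambda)$ -- is super-quasi-polynomial, i.e. not bounded by $2^{O((\log N)^{a})}$ for any $a$ (Mulmuley's lower bound, building on earlier parametric-complexity results). Restricting the $k$-th cell decomposition of $\Interpret{M}$ to the line traced by this parametrization, each defining polynomial becomes a univariate polynomial in $\lambda$ of degree at most $B_2(N)$, so by a Bézout-type count the line crosses at most $B_1(N)\cdot B_2(N)$ cell boundaries -- still quasi-polynomially many. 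On each resulting interval $\lambda$ remains inside a single cell, so $M$ runs through the very same computation; but by Mulmuley's reduction a \pram that correctly decides \maxflow along this family must run through at least $\rho(N)$ distinct computations as $\lambda$ varies. Since $\rho(N)$ is super-quasi-polynomial while the number of intervals is quasi-polynomial, this is the desired contradiction, which proves \autoref{theorem:main-pram}. Finally, \maxflow is \Ptime-complete and a $t$-step \pram activates at most $2^{O(t)}$ processors, so $\NCInteger=\Ptime$ would furnish exactly such a family of \prams over $\integerN$, contradicting \autoref{theorem:main-pram}; hence \autoref{mainthm}.

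The main obstacle is this last step. Mulmuley's original lower bound was tailored to the weaker model of \enquote{\prams without bit operations}, where the quantity one tracks is the number of affine pieces of a minimum of linear forms; to re-use it against \prams over $\integerN$ one must verify that (i) the extra operations available to such machines -- in particular branchings on arbitrary polynomial sign conditions, and Euclidean division -- still yield only a semi-algebraic cell decomposition of quasi-polynomially bounded complexity, which is exactly what \Cref{thm:graphingsBenOrsystems} provides, and (ii) that Mulmuley's parametric family, with its control on coefficient bit-length, genuinely forces super-quasi-polynomially many breakpoints even for a machine endowed with these operations -- in other words that the lower bound depends only on the ambient semi-algebraic geometry and not on the idiosyncrasies of the bit-free model. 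Everything else amounts to carrying the quasi-polynomial bookkeeping through constructions already in place.
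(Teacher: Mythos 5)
Your proposal follows the same high-level strategy as the paper (argue by contradiction, represent the machine as a graphing, apply \Cref{thm:graphingsBenOrsystems} to bound the geometry of the cell decomposition, and collide that bound with the parametric complexity of \maxflow), but there are two genuine gaps.

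First, the integer-to-real reduction is missing. \Cref{thm:graphingsBenOrsystems} is stated for $\crew^p(\amcrealfull)$-graphings, i.e.\ for machines whose operations are the \emph{real} ones (real division, real roots). An integer \pram $M$ uses Euclidean division $\boeuclidivide{i}{j}{k}$, which is not an $\amcrealfull$ operation and is not a polynomial map of the inputs. You therefore cannot feed $\Interpret{M}$ directly into the main lemma, contrary to what you assert in concern~(i). The paper resolves this by first constructing a real-valued \pram $\tilde{M}$ simulating $M$ with only a constant-factor slowdown (\Cref{prop:constanttimeeuclidian} shows Euclidean division is computable in constant time by a real \pram, at the cost of a processor blowup that is absorbed by the $2^{O((\log N)^c)}$ budget), and only then applies \Cref{thm:graphingsBenOrsystems} to $\Interpret{\tilde M}$. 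Your argument must include this step, or the semi-algebraic bound you rely on simply does not apply to $M$.

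Second, your ``restrict to the line and apply B\'ezout'' step underestimates the geometric content. The decision problem $\DecProb$ lives on integer triples $(x,y,z)\in\integerN^3$, not on a one-dimensional parameter line, and the frontier $[\Frontier]$ is a $\rho$-fan in $\realN^3$. What has to be shown is that algebraic surfaces of total degree $\delta$ cannot correctly classify the \emph{integer points} on the two sides of a fan with $\rho\gg P(\delta)$ breakpoints; this uses the $O(n^2)$ bit-length bound on the breakpoint coordinates to guarantee that there are enough integer points in a compact region near the fan to force a contradiction. A B\'ezout count along a single parametrization line neither sees the threshold coordinate $y$ nor addresses the discreteness of the inputs, and by itself does not give the separation impossibility. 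The paper instead quotes this as Mulmuley's geometric theorem (\Cref{thm:mulmuley-geometric}), which is exactly the black box you would need -- you correctly flag this in concern~(ii), but as written your sketch does not supply it and the two-dimensional/integer-point subtleties make the line-restriction version genuinely insufficient rather than merely informal.

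A smaller remark: the paper computes the total degree using the product entropy bound $h(f\times g)\leqslant h(f)+h(g)$ to absorb the $p$ parallel processors; your ``quasi-polynomial bookkeeping'' needs to carry this through explicitly, since the degree bound and not just the number of cells enters Mulmuley's polynomial $P(\delta)$.
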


\section{Programs as Dynamical systems}

\subsection{Abstract models of computation and graphings}
\label{sec:amc-graphings}

We consider computations as dynamical processes, hence model them as a dynamical
system with two main components: a space \(\Space{X}\) that abstracts the notion
of configuration space and a monoid acting on this space that represents the
different operations allowed in the model of computation. Although the notion of
\emph{space} considered can vary (one could consider e.g. topological spaces,
measure spaces, topological vector spaces), we restrict ourselves to topological
spaces in this work.

\begin{definition}
  An \emph{abstract model of computation} (\amc) is a monoid action 
  $\alpha: M\acton\Space{X}$, i.e. a monoid morphism from $M$ to the group of
  endomorphisms of $\Space{X}$.
  The monoid $M$ is often given by a set \(G\) of generators and a set of
  relations \(\relations{R}\). 
 We denote such an \amc as $\AMC[X]{G}{R}{\alpha}$.
\end{definition}

Programs in an \amc $\AMC[X]{G}{R}{\alpha}$ is then defined as \emph{graphings},
i.e. graphs whose vertices are subspaces of the space \(\Space{X}\)
(representing sets of configurations on which the program act in the same way)
and edges are labelled by elements of \(\MonGaR{G}{R}\), together with a global
control state. More precisely, we use here the notion of \emph{topological
  graphings}\footnote{While \enquote{measured} graphings were already considered
  \cite{seiller-igg}, the definition adapts in a straightforward manner to allow for
  other notions such as graphings over topological vector spaces -- which would
  be objects akin to the notion of quiver used in representation theory.} 
  \cite{seiller-igg}.
  
\begin{definition}
  An $\alpha$-graphing representative $G$ w.r.t. a monoid action $\alpha: M\acton\Space{X}$ 
  is defined as a set of \emph{edges} $E^{G}$ together with a map that assigns to
  each element $e\in E^{G}$ a pair $(S^{G}_{e},m^{G}_{e})$ of a subspace 
  $S^{G}_{e}$ of $\Space{X}$ -- the \emph{source} of $e$ -- and an element 
  $m^{G}_{e}\in M$ -- the \emph{realiser} of $e$.
\end{definition}

While graphing representatives are convenient to manipulate, they do provide too
much information about the programs. Indeed, if one is to study programs as
dynamical systems, the focus should be on the \emph{dynamics}, i.e. on how the
object acts on the underlying space. The following notion of \emph{refinement} 
captures this idea that the same dynamics may have different graph-like 
representations.

\begin{definition}[Refinement]
  An $\alpha$-graphing representative $F$ is a refinement of an $\alpha$-graphing 
  representative $G$, noted $F\leqslant G$, if there exists a partition 
  $(E^{F}_{e})_{e\in E^{G}}$ of $E^{F}$ such that $\forall e\in E^{G}$:
  $$\begin{array}{c}
    \left(\cup_{f\in E^{F}_{e}} S^{F}_{f} \right) \mathbin{\triangle} S^{G}_{e}
    = \emptyset;
    \hspace{0.5cm}
    \forall f\neq f'\in E^{F}_{e},~ S^{F}_{f} \mathbin{\triangle} S^{F}_{f'}
    = \emptyset;\\
    \hspace{0.5cm}
    \forall f\in E^{F}_{e},~ m^{F}_{f}=m_{e}^{G}.
    \end{array}
  $$
  This induces an equivalence relation defined as 
  \[ F\simref G \Leftrightarrow 
  \exists H, ~ H\leqslant F \wedge H\leqslant G.\]
\end{definition}

The notion of \emph{graphing} is therefore obtained by considering the quotient 
of  the set of graphing representatives w.r.t. $\simref$. Intuitively, this corresponds
to identifying graphings whose \emph{actions on the underlying space are equal}.

\begin{definition}
  An $\alpha$-\emph{graphing} is an equivalence class of $\alpha$-graphing 
  representatives w.r.t. the equivalence relation $\simref$.
\end{definition}

We can now define the notion of abstract program. These are defined as
graphings 

\begin{definition}
  Given an \amc $\alpha:M\acton\Space{X}$, an \emph{$\alpha$-program} 
  $A$ is a $\bar{\alpha}$-graphing $G^{A}$ w.r.t. the monoid action 
  $\bar{\alpha}=\alpha\times\mathfrak{S}_{k}\acton \Space{X}\times\Space{S^{A}}$,
  where $\Space{S^{A}}$ is a finite set of \emph{control states} of cardinality
  $k$ and $\mathfrak{S}_{k}$ is the group of permutations of $k$ elements.
\end{definition}

Now, as a sanity check, we will show how the notion of graphing do capture the
expected dynamics. For this, we restrict to \emph{deterministic graphings}, and
show the notion relates to the usual notion of dynamical system.

\begin{definition}
  An $\alpha$-graphing representative $G$ is deterministic if for all $x\in\Space{X}$ 
  there is at most one $e\in\ E^{G}$ such that $x\in S^{G}_{e}$.
  An $\alpha$-graphing is \emph{deterministic} if its representatives are deterministic.
   An abstract program is \emph{deterministic} if its underlying graphing is
  deterministic.
 \end{definition}
 
 \begin{lemma}
 There is a one-to-one correspondence between the set of deterministic graphings
 w.r.t. the action $M\acton\Space{X}$ and the set of partial 
 dynamical systems $f:\Space{X}\hookrightarrow\Space{X}$ whose graph
 is contained in the preorder\footnote{When $\alpha$ is a group action
 acting by measure-preserving transformations, this is a \emph{Borel 
 equivalence relation} $\mathcal{R}$, and the condition stated here boils
 down to requiring that $f$ belongs to the \emph{full group} of $\alpha$.}
 $\{(x,y)\mid \exists m\in M, \alpha(m)(x)=y\}$.
 \end{lemma}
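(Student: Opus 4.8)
The plan is to exhibit explicit maps in both directions and check that they are mutually inverse.

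\emph{From graphings to dynamical systems.} Given a deterministic $\alpha$-graphing, choose a representative $G$ and define a partial map $\Phi(G)\colon \Space{X}\hookrightarrow\Space{X}$ by $\Phi(G)(x)=\alpha(m^{G}_{e})(x)$ whenever $x$ lies in the source $S^{G}_{e}$ of an edge $e\in E^{G}$ -- such an $e$ being unique, when it exists, by determinism -- and leaving $\Phi(G)(x)$ undefined otherwise. By construction the graph of $\Phi(G)$ is contained in the stated preorder, with $m=m^{G}_{e}$ as witness. I would then check that $\Phi(G)$ does not depend on the chosen representative: if $F\leqslant G$, every $x\in S^{G}_{e}$ lies in exactly one source $S^{F}_{f}$ with $f\in E^{F}_{e}$, and that edge carries the realiser $m^{F}_{f}=m^{G}_{e}$, so $\Phi(F)=\Phi(G)$; since any two representatives of a graphing share a common refinement, $\Phi$ descends to $\simref$-classes.

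\emph{From dynamical systems to graphings.} Conversely, let $f\colon \Space{X}\hookrightarrow\Space{X}$ be a partial dynamical system whose graph lies in the preorder, so that $\mathrm{dom}(f)=\bigcup_{m\in M}\{x\mid \alpha(m)(x)=f(x)\}$. Fixing a well-order $\preceq$ on $M$, set $S_{m}=\{x\in\mathrm{dom}(f)\mid \alpha(m)(x)=f(x) \text{ and } \alpha(m')(x)\neq f(x)\ \forall m'\prec m\}$; the family $(S_{m})_{m\in M}$ is pairwise disjoint and still covers $\mathrm{dom}(f)$, so the graphing representative $\Psi(f)$ with one edge of source $S_{m}$ and realiser $m$ for each nonempty $S_{m}$ is deterministic. (In the topological setting one also records that the relevant pieces are subspaces; if more regularity is demanded of \emph{partial dynamical system}, it should be demanded of the $S_{m}$ as well, which is harmless.) Applying $\Phi$ to $\Psi(f)$ returns $f$ exactly, both as a map and on domains, so $\Phi\circ\Psi=\mathrm{id}$.

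\emph{The remaining direction, and the obstacle.} It remains to show $\Psi(\Phi(G))\simref G$ for every deterministic graphing $G$. The two representatives $G$ and $\Psi(\Phi(G))$ have the same underlying partial map, but over a common piece $S^{G}_{e}\cap S_{m}$ of the space their edges may carry syntactically distinct realisers $m^{G}_{e}$ and $m$ with $\alpha(m^{G}_{e})$ and $\alpha(m)$ merely \emph{agreeing there}. \textbf{This is the main obstacle}: reconciling them inside the refinement relation forces either faithfulness of $\alpha$ (so that realisers inducing the same endomorphism coincide) or -- more uniformly, and matching the informal reading that $\simref$ identifies graphings whose actions on the underlying space are equal -- treating an edge as carrying the partial endomorphism $\alpha(m^{G}_{e})|_{S^{G}_{e}}$ rather than the abstract monoid element, with $\simref$ identifying edges that induce the same partial map. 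Under this reading, the common refinement whose edges are the nonempty intersections $S^{G}_{e}\cap S_{m}$ -- each inducing on its source the same partial map as the corresponding edges of both $G$ and $\Psi(\Phi(G))$ -- witnesses $\Psi(\Phi(G))\simref G$, so $G\mapsto\Phi(G)$ is the desired bijection; if instead one insists on literal equality of monoid realisers, the statement holds verbatim for faithful actions, which is the case used in the sequel.
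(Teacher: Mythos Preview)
The paper does not prove this lemma; it is stated as a \emph{sanity check} immediately after the informal remark that the quotient by $\simref$ ``corresponds to identifying graphings whose actions on the underlying space are equal.'' So there is no argument in the paper to compare yours against, and your write-up is more careful than the source.

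Your construction of $\Phi$ and the verification that it is constant on $\simref$-classes are fine. More importantly, the obstacle you isolate is genuine and not an artefact of your particular $\Psi$: the refinement relation, as written, demands literal equality of monoid realisers ($m^{F}_{f}=m^{G}_{e}$), so two deterministic representatives that induce the same partial map but witness it with different elements of $M$ need not be $\simref$-equivalent. This means $\Phi$ itself may fail to be injective, not merely that $\Psi\circ\Phi$ fails to hit the identity. Your two proposed repairs---assuming faithfulness of $\alpha$, or reading edges as carrying the induced partial endomorphism rather than the abstract monoid element---are exactly the natural ones, and the second matches the paper's stated intuition. Either reading suffices for everything the paper does downstream, where only the induced partial map $[G]$ is ever used (entropy, cell decompositions, co-trees all factor through $\Phi$). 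You have diagnosed the situation correctly; I would simply flag in your write-up that the lemma is to be read modulo this identification, as the paper evidently intends.
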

 
 Lastly, we define some restrictions of $\alpha$-programs that will be important
 later. First, we will restrict the possible subspaces considered as sources of the
 edges, as unrestricted $\alpha$-programs could compute even undecidable 
 problems by, e.g. encoding it into a subspace used as the source of an edge.
 Given an integer $k\in\omega$, we define the following subspaces of 
$\realN^{\omega}$, for $\star\in\{>,\geqslant,=,\neq,\leqslant,<\}$:
\[ \realN^{\omega}_{k\star 0}=\{(x_{1},\dots,x_{k},\dots)\in
	\realN^{\omega}\mid x_{k}\star 0\}.\]

\begin{definition}[Computational graphings] 
Let $\AMC{G}{R}{\alpha}$ be an \amc. 
A \emph{computational $\alpha$-graphing} is an $\alpha$-graphing $T$ with
distinguished states $\top, \bot \in \Space{S^{A}}$ which admits a finite
representative such that each edge $e$ has its source equal to one among
$\realN^{\omega}$, 
$\realN^{\omega}_{k\geqslant 0}$, 
$\realN^{\omega}_{k\leqslant 0}$, 
$\realN^{\omega}_{k> 0}$, 
$\realN^{\omega}_{k< 0}$, 
$\realN^{\omega}_{k=0}$, and 
$\realN^{\omega}_{k\neq 0}$.
\end{definition}

\begin{definition}[treeings] 
Let $\AMC{G}{R}{\alpha}$ be an \amc. 
An \emph{$\alpha$-treeing} is an acyclic 
and finite $\alpha$-graphing, i.e. an $\alpha$-graphing $F$ for which 
there exists a finite $\alpha$-graphing representative $T$ whose set of control states 
$\Space{S^{T}}=\{0,\dots,s\}$ can be endowed with an order $<$ such that every edge of 
$T$ is state-increasing, i.e. for each edge $e$ of source $S_{e}$, for all $x\in S_{e}$, 
\[ \pi_{\Space{S^{T}}}(\alpha(m_{e})(x))>\pi_{\Space{S^{T}}}(x),\] where 
$\pi_{\Space{S^{T}}}$ denotes the projection onto the control states space.

A \emph{computational $\alpha$-treeing} is an $\alpha$-treeing $T$ which is a 
computational $\alpha$-graphing with the distinguished states $\top$, $\bot$ being 
incomparable maximal elements of the state space.
\end{definition}

\subsection{Quantitative Soundness}

As mentioned in the introduction, we will use the property of 
\emph{quantitative soundness} of the dynamic semantics just introduced.
This result is essential, as it connects the time complexity of programs in
the model considered (e.g. \prams, algebraic computation trees) with the
length of the orbits of the considered dynamical system. We here only state 
quantitative soundness for \emph{computational graphings}, i.e. graphings
that have distinguished states $\top$ and $\bot$ representing acceptance
and rejection respectively. In other words, we consider graphings which 
compute \emph{decision problems}.

Quantitative soundness is expressed with respect to a translation of machines
as graphings, together with a translation of inputs as points of the configuration 
space. In the following section, these operations are defined for each model
of computation considered in this paper. In all these cases, the representation 
of inputs is straightforward.

\begin{definition}
Let $\alpha$ be an abstract model of computation, and $\mathbb{M}$ a
model of computation. A \emph{translation} of $\mathbb{M}$ w.r.t. $\alpha$ is
a pair of maps $\Interpret{\cdot}$ which associate to each machine $M$ in 
$\mathbb{M}$ computing a decision problem a computational $\alpha$-graphing 
$\Interpret{M}$ and to each input $\iota$ a point $\Interpret{\iota}$ in 
$\Space{X}\times\Space{S}$.
\end{definition}

\begin{definition}
Let $\alpha$ be an abstract model of computation, $\mathbb{M}$ a model 
of computation. The \amc $\alpha$ is \emph{quantitatively sound} for $\mathbb{M}$ 
w.r.t. a translation $\Interpret{\cdot}$ if for all 
machine $M$ computing a decision problem and input $\iota$, $M$ accepts $\iota$
(resp. rejects $\iota$) in $k$ steps if and only if $\Interpret{M}^{k}(\Interpret{\iota})=\top$ 
(resp. $\Interpret{M}^{k}(\Interpret{\iota})=\bot$).
\end{definition}

\subsection{The algebraic \amcs}

We now define the actions $\amcfull$ and $\amcrealfull$. Those will capture all algebraic
models of computation considered in this paper, and the main theorem will be
stated for this monoid action.

As we intend to consider \prams at some point, we consider from the beginning 
the memory of our machines to be separated in two infinite blocks
$\integerN^{\omega}$, intended to represent both \emph{shared} and a
\emph{private} memory cells\footnote{Obviously, this could be done without any
explicit separation of the underlying space, but this will ease the constructions 
of the next section.}. 

\begin{definition}
The underlying space of $\amcfull$ is
$\Space{X}= \integerN^{\integerN}\cong\integerN^{\omega}\times
\integerN^{\omega}$. The set of generators
is defined by their action on the underlying space, writing 
$k//n$ the floor $\floor{k/n}$ of $k/n$ with the convention that $k//n=0$ 
when $n=0$:
\begin{itemize}
\item $\boconst{i}{c}$ initialises the register $i$ with the constant $c\in\integerN$: $\amcfull(\boconst{i}{c})(\vec{x}) =
  (\vec{x}\{x_i:= c\})$;
\item $\bogen{i}{j}{k}$ ($\star\in\{+,-,\times,//\}$) performs the algebraic operation $\star$ on the values in registers $j$ and $k$ and store the result in register $i$:  $\amcfull(\bogen{i}{j}{k})(\vec{x}) =
  (\vec{x}\{x_i:= x_{j}\star x_k\})$;
\item $\bogenconst{i}{j}{c}$ ($\star\in\{+,-,\times,//\}$) performs the algebraic operation $\star$ on the value in register $j$ and the constant $c\in\integerN$ and store the result in register $i$:  $\amcfull(\bogenconst{i}{j}{c})(\vec{x}) =
  (\vec{x}\{x_i:= c\star x_{j}\})$;
\item $\copyy{i}{j}$ copies the value stored in register $j$ in register $i$: $\amcfull(\copyy{i}{j})(\vec{x}) =
  (\vec{x}\{x_i:= x_j\})$;
\item $\refcopy{i}{j}$ copies the value stored in register $j$ in the register whose index is the value stored in register $i$: $\amcfull(\refcopy{i}{j})(\vec{x}) =
  (\vec{x}\{x_{x_i}:= x_j\})$;
\item $\copyref{i}{j}$ copies the value stored in the register whose index is the value stored in register $j$ in register $i$: $\amcfull(\copyref{i}{j})(\vec{x}) =
  (\vec{x}\{x_i:= x_{x_j}\})$;
\item $\bosqrtn{i}{j}$ computes the floor of the $n$-th root of the value stored in register $j$ and store the result in register $i$: $\amcfull(\bosqrtn{i}{j})(\vec{x}) =
  (\vec{x}\{x_i:= \sqrt[n]{x_j}\})$.
\end{itemize}
\end{definition}

We also define the real-valued equivalent, which will be essential for the
proof of lower bounds. The corresponding \amc $\amcrealfull$ is defined in 
the same way than the integer-valued one, but with underlying space
\(\Space{X}= \realN^{\integerN}\) and with instructions adapted accordingly:
\begin{itemize}[noitemsep,nolistsep]
\item the division and $n$-th root operations are the usual operations on the reals;
\item the three copy operators are only effective on integers.
\end{itemize}

\begin{definition}
The underlying space of $\amcrealfull$ is
$\Space{X}= \realN^{\integerN}\cong\realN^{\omega}\times
\realN^{\omega}$. The set of generators
is defined by their action on the underlying space, with the convention that $k/n=0$ 
when $n=0$:
\begin{itemize}
\item $\boconst{i}{c}$ initialises the register $i$ with the constant $c\in\realN$: $\amcrealfull(\boconst{i}{c})(\vec{x}) =
  (\vec{x}\{x_i:= c\})$;
\item $\bogen{i}{j}{k}$ ($\star\in\{+,-,\times,/\}$) performs the algebraic operation $\star$ on the values in registers $j$ and $k$ and store the result in register $i$:  $\amcrealfull(\bogen{i}{j}{k})(\vec{x}) =
  (\vec{x}\{x_i:= x_{j}\star x_k\})$;
\item $\bogenconst{i}{j}{c}$ ($\star\in\{+,-,\times,/\}$) performs the algebraic operation $\star$ on the value in register $j$ and the constant $c\in\realN$ and store the result in register $i$:  $\amcrealfull(\bogenconst{i}{j}{c})(\vec{x}) =
  (\vec{x}\{x_i:= c\star x_{j}\})$;
\item $\copyy{i}{j}$ copies the value stored in register $j$ in register $i$: $\amcrealfull(\copyy{i}{j})(\vec{x}) =
  (\vec{x}\{x_i:= x_j\})$;
\item $\refcopy{i}{j}$ copies the value stored in register $j$ in the register whose index is the floor of the value stored in register $i$: $\amcrealfull(\refcopy{i}{j})(\vec{x}) =
  (\vec{x}\{x_{\floor{x_i}}:= x_j\})$;
\item $\copyref{i}{j}$ copies the value stored in the register whose index is the floor of the value stored in register $j$ in register $i$: $\amcrealfull(\copyref{i}{j})(\vec{x}) =
  (\vec{x}\{x_i:= x_{\floor{x_j}}\})$;
\item $\bosqrtn{i}{j}$ computes the $n$-th real root of the value stored in register $j$ and store the result in register $i$: $\amcrealfull(\bosqrtn{i}{j})(\vec{x}) =
  (\vec{x}\{x_i:= \sqrt[n]{x_j}\})$.
\end{itemize}
\end{definition}

\section{Algebraic models of computations as \amcs}

\subsection{Algebraic computation trees}


The first model considered here will be that of \emph{algebraic computation
tree} as defined by Ben-Or \cite{Ben-Or83}. Let us note this model refines
the \emph{algebraic decision trees} model of Steele and Yao \cite{SteeleYao82},
a model of computation consisting in binary trees for which each branching
performs a test w.r.t. a polynomial and each leaf is labelled $\texttt{YES}$ or 
$\texttt{NO}$. Algebraic computation trees only allow tests w.r.t. $0$, while 
additional vertices corresponding to algebraic operations can be used to
construct polynomials.

\begin{definition}[algebraic computation trees, {\cite{Ben-Or83}}]
  An \emph{algebraic computation tree} on $\realN^n$ is a binary tree $T$ with a
  function assigning:
  \begin{itemize}
  \item to any vertex $v$ with only one child (simple vertex) an operational
    instruction of the form
    $f_v = f_{v_i} \star f_{v_j}$,
     $f_v = c \star f_{v_i}$, or
     $f_v = \sqrt{f_{v_i}}$,
    where $\star \in \{ +, -, \times, /\}$, ${v_i}, {v_j}$ are ancestors of $v$
    and $c\in \realN$ is a constant;
  \item to any vertex $v$ with two children a test instruction of the form
   $
      f_{v_i} \star 0
   $,
    where $\star \in \{ >,=,\geqslant \}$, and $v_i$ is an ancestor of $v$ 
    or $f_{v_i} \in \{x_1,\dots,x_n\}$;
  \item to any leaf an output $\texttt{YES}$ or $\texttt{NO}$.
  \end{itemize}
\end{definition}


Let $W \subseteq \realN^n$ be any set and $T$ be an algebraic computation
tree. We say that $T$ computes the membership problem for $W$ if for all $x \in
\realN^n$, the traversal of $T$ following $x$ ends on a leaf labelled
\texttt{YES} if and only if $x\in W$.

As algebraic computation trees are \emph{trees}, they will be represented by treeings,
i.e. $\amcrealfull$-programs whose set of control states can be ordered so that any
edge in the graphing is strictly increasing on its control states component.

\begin{definition}
Let $T$ be a computational $\amcrealfull$-treeing. The set of inputs $\inputs{T}$ 
(resp. outputs $\outputs{T}$) is the set of integers $k$ (resp. $i$) such that there 
exists an edge $e$ in $T$ satisfying that:
\begin{itemize} 
\item either $e$ is realised by one of $\boadd{i}{j}{k}$, $\boadd{i}{k}{j}$,
$\bosubstract{i}{j}{k}$, $\bosubstract{i}{k}{j}$, $\bomultiply{i}{j}{k}$, 
$\bomultiply{i}{k}{j}$, $\bodivide{i}{j}{k}$, $\bodivide{i}{k}{j}$
$\boaddconst{i}{k}{c}$, $\bosubstractconst{i}{k}{c}$, $\bomultiplyconst{i}{k}{c}$,
$\bodivideconst{i}{k}{c}$, $\bosqrtn{i}{k}$;
\item or the source of $e$ is one among 
$\realN^{\omega}_{k\geqslant 0}$, 
$\realN^{\omega}_{k\leqslant 0}$, 
$\realN^{\omega}_{k> 0}$, 
$\realN^{\omega}_{k< 0}$, 
$\realN^{\omega}_{k=0}$, and 
$\realN^{\omega}_{k\neq 0}$.
\end{itemize}

The \emph{effective input space} $\inputspaceE{T}$ of an $\amcact$-treeing $T$ is defined 
as the set of indices $k\in\omega$ belonging to $\inputs{T}$ but not to $\outputs{T}$. The
\emph{implicit input space} $\inputspaceI{T}$ of an $\amcact$-treeing $T$ is defined 
as the set of indices $k\in\omega$ such that $k\not \in \outputs{T}$.
\end{definition}

\begin{definition}
Let $T$ be an $\amcrealfull$-treeing, and assume that $1,2,\dots,n \in \inputspaceI{T}$. We say 
that $T$ computes the membership problem for $W\subseteq\realN^{n}$ in $k$ steps if $k$ 
successive iterations of $T$ restricted to 
$\{(x_i)_{i\in\omega} \in \realN^\omega \mid \forall 1\leqslant i\leqslant n, x_i=y_i\}\times\{0\}$ 
reach state $\top$ if and only if $(y_1,y_2,\dots,y_n)\in W$. 
\end{definition}

\begin{remark}
Let $\vec{x}=(x_1,x_2,\dots,x_n)$ be an element of $\realN^n$ and consider two elements 
$a,b$ in the subspace $\{(y_1,\dots, y_n,\dots)\in\realN^{\omega}
\mid\forall 1\geqslant i\geqslant n, y_i=x_i\}\times\{0\}$. One easily checks
that $\pi_{\Space{S}}(T^{k}(a))=\top$ if and only if $\pi_{\Space{S}}(T^{k}(b))
=\top$, where $\pi_{\Space{S}}$ is the projection onto the state space and 
$T^{k}(a)$ represents the $k$-th iteration of $T$ on $a$. It is therefore
possible to consider only a standard representative $\Interpret{\vec{x}}$ of $\vec{x} \in 
\realN^n$, for instance $(x_1,\dots, x_n,0,0,\dots) \in \realN^\omega$, to
decide whether $\vec{x}$ is accepted by $T$.
\end{remark}

\begin{definition}
  Let $T$ be an algebraic computation tree on $\realN^{n}$, and $T^{\circ}$ 
  be the associated directed acyclic graph, built from $T$ by merging all the 
  leaves tagged $\texttt{YES}$ in one leaf $\top$ and all the leaves tagged 
  $\texttt{NO}$ in one leaf $\bot$. Suppose the internal vertices are numbered
  $\{n+1,\dots, n+\ell \}$; the numbers $1,\dots,n$ being reserved for the input.

  We define $\Interpret{T}$ as the $\amcact$-graphing with control states
  $\{n+1,\dots, n+\ell, \top, \bot \}$ and where each internal vertex $i$ of
  $T^{\circ}$ defines either:
  \begin{itemize}
  \item a single edge of source $\realN^{\omega}$ realized by:
    \begin{itemize}
    \item $(\bogen{i}{j}{k}, i \mapsto t)$ ($\star\in\{+,-,\times\}$) if $i$ is associated to $f_{v_i} =
      f_{v_j} \star f_{v_k}$ and $t$ is the child of $i$;
    \item $(\bogenconst{i}{j}{c}, i \mapsto t)$ ($\star\in\{+,-,\times\}$) if $i$ is associated to $f_{v_i} =
     c \star f_{v_k}$ and $t$ is the child of $i$;
    \end{itemize}
  \item a single edge of source $\realN^{\omega}_{k\neq 0}$ realized by:
    \begin{itemize}
    \item $(\bodivide{i}{j}{k}, i \mapsto t)$ if $i$ is associated to
      $f_{v_i} = f_{v_j} / f_{v_k}$ and $t$ is the child of $i$;
    \item $(\bodivideconst{i}{k}{c}, i \mapsto t)$ if $i$ is associated to
      $f_{v_i} = c / f_{v_k}$ and $t$ is the child of $i$;
    \end{itemize}
  \item a single edge of source $\realN^{\omega}_{k\geqslant 0}\times\{i\}$ 
  realized by $(\bosqrt[2]{i}{k}, i \mapsto t)$ if $i$ is associated to 
  $f_{v_i} = \sqrt{f_{v_k}}$ and $t$ is the child of $i$;
  \item two edges if $i$ is associated to 
      $f_{v_i} \star 0$ (where $\star$ ranges in $>$, $\geqslant$) and its two sons 
      are $j$ and $k$. Those are of respective sources
      $\realN^{\omega}_{k\star 0}\times \{i\}$
      and 
      $\realN^{\omega}_{k\bar{\star} 0}
      \times \{i\}$ (where $\bar{\star}='\leqslant'$ if $\star='>'$, $\bar{\star}='<'$ if 
      $\star='\geqslant'$, and $\bar{\star}='\neq'$ if $\star='='$.),
      respectively realized by $(\identity, i \mapsto j)$ and $(\identity, i \mapsto k)$ 
  \end{itemize}
\end{definition}

\begin{proposition}\label{prop:fullyact}
Any algebraic computation tree $T$ of depth $k$ is faithfully and quantitatively 
interpreted as the $\amcrealfull$-program $\Interpret{T}$.
I.e. $T$ computes the membership problem for $W\subseteq\realN^{n}$ if
and only if $\Interpret{T}$ computes the membership problem for $W$ in $k$ steps
-- that is $\pi_{\Space{S}}(\Interpret{T}^k(\Interpret{\vec{x}}))=\top$.
%
\end{proposition}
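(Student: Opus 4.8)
The plan is to fix an arbitrary input $\vec{x}=(x_1,\dots,x_n)\in\realN^n$ and build a step-by-step dictionary between the traversal of $T$ on $\vec{x}$ and the orbit, under $\Interpret{T}$, of the configuration consisting of $\Interpret{\vec{x}}=(x_1,\dots,x_n,0,0,\dots)$ together with the control state of the root of $T^{\circ}$; by the preceding Remark it is harmless to work with this standard representative rather than with the whole subspace of $\realN^{\omega}$ it determines. Let $v_{(0)},v_{(1)},v_{(2)},\dots$ be the sequence of vertices of $T^{\circ}$ visited along the branch selected by $\vec{x}$, with $v_{(0)}$ the root, each $v_{(d+1)}$ the child of $v_{(d)}$ dictated by the instruction at $v_{(d)}$, the sequence stopping at the first leaf (which is $\top$ if it was tagged $\texttt{YES}$ and $\bot$ if it was tagged $\texttt{NO}$). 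I would then prove, by induction on $d$, the invariant: $\Interpret{T}^{d}(\Interpret{\vec{x}})$ is defined and equals $(\vec{y},v_{(d)})$, where $y_i=x_i$ for $1\leq i\leq n$ and $y_w=f_w(\vec{x})$ for every internal vertex $w$ among $v_{(0)},\dots,v_{(d-1)}$.

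The inductive step is a case analysis on the type of $v_{(d)}$, read off from the definition of $\Interpret{T}$. If $v_{(d)}$ is a simple vertex with instruction $f_{v_{(d)}}=f_{v_i}\star f_{v_j}$ (or $c\star f_{v_i}$, or $\sqrt{f_{v_i}}$), then $v_i,v_j$ are ancestors of $v_{(d)}$, hence occur among $v_{(0)},\dots,v_{(d-1)}$, so by the induction hypothesis registers $i$ and $j$ already hold $f_{v_i}(\vec{x})$ and $f_{v_j}(\vec{x})$; the unique edge of $\Interpret{T}$ leaving state $v_{(d)}$ is realised by $(\bogen{v_{(d)}}{i}{j},v_{(d)}\mapsto v_{(d+1)})$ (or its constant, division, or square-root variant), whose source contains the current point exactly under the definedness condition of the operation ($x_j\neq 0$ for a division, $x_i\geq 0$ for a square root) — which is precisely the condition under which Ben-Or's tree performs that operation — and applying it writes $f_{v_{(d)}}(\vec{x})$ into register $v_{(d)}$ and moves the control state to $v_{(d+1)}$, re-establishing the invariant. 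If $v_{(d)}$ is a test vertex for $f_{v_{(d)}}\star 0$, it contributes two edges with sources $\realN^{\omega}_{v_{(d)}\star 0}\times\{v_{(d)}\}$ and $\realN^{\omega}_{v_{(d)}\bar{\star}0}\times\{v_{(d)}\}$; these partition $\realN^{\omega}\times\{v_{(d)}\}$, and since register $v_{(d)}$ holds $f_{v_{(d)}}(\vec{x})$ by the induction hypothesis, exactly one of them applies, acts as the identity on registers, and routes the control state to the child corresponding to the sign of $f_{v_{(d)}}(\vec{x})$, i.e. to $v_{(d+1)}$. In both cases exactly one edge is applicable, which incidentally re-proves that $\Interpret{T}$ is deterministic.

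Iterating until $v_{(d)}$ is a leaf, the invariant gives $\pi_{\Space{S}}(\Interpret{T}^{d}(\Interpret{\vec{x}}))=\top$ when the branch of $T$ followed by $\vec{x}$ ends at a $\texttt{YES}$-leaf and $=\bot$ when it ends at a $\texttt{NO}$-leaf, where $d\leq\mathrm{depth}(T)=k$ is the length of that branch. Since $\top$ and $\bot$ are maximal in the state space of the treeing, reaching them is the halting event, so "$\pi_{\Space{S}}(\Interpret{T}^{k}(\Interpret{\vec{x}}))=\top$" is exactly the statement that this halting-with-acceptance occurs within the $k$ allotted steps (equivalently, one may first pad every branch of $T$ to length $k$ by interposing identity-realised simple vertices, which changes neither the accepted set nor the dynamics on states reachable in $k$ steps); either reading makes $\pi_{\Space{S}}(\Interpret{T}^{k}(\Interpret{\vec{x}}))=\top$ hold iff the traversal of $T$ on $\vec{x}$ ends $\texttt{YES}$. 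As this holds for every $\vec{x}\in\realN^n$, comparing with the definitions of "$T$ computes the membership problem for $W$" and of "$\Interpret{T}$ computes the membership problem for $W$ in $k$ steps" yields the asserted equivalence, which is both the faithfulness and the quantitative ($k$-step) content of the statement. The only points that need care are this bookkeeping of the step count against the nominal depth, and the observation that operational vertices reference only ancestors, so that every register is written before it is read; the first is a convention/normalisation matter and the second is the substance of the inductive invariant, while everything else is a direct unwinding of the construction of $\Interpret{T}$.
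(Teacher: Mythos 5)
Your proof is correct, and it takes a genuinely different route from the paper's. The paper disposes of the direction you spend most of your effort on (that $\Interpret{T}$ faithfully reproduces $T$'s run) in a single sentence, asserting it "raises no difficulty," and instead devotes the appendix proof to a \emph{broader} converse: that the membership problem decided by \emph{any} computational $\amcact$-treeing is also decided by \emph{some} algebraic computation tree, argued by structural induction on the number of control states (first the three-state base case $\{1,\top,\bot\}$, then grafting subtrees onto leaves). That construction manufactures a fresh tree from a given treeing, which buys an expressive-equivalence statement between the two models but is more than the proposition literally asserts, since the proposition ties a fixed $T$ to its own translation $\Interpret{T}$. Your lockstep induction on the step count $d$ — tracking the control state and the register contents simultaneously — proves the biconditional for the fixed pair $(T,\Interpret{T})$ uniformly in both directions, and in fact supplies the detail the paper omits; it just does not give the stand-alone converse the paper proves. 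Two small points worth tightening: at a branching vertex $v_{(d)}$ the test is $f_{v_i}\star 0$ for an \emph{ancestor} $v_i$ (or an input variable), not $f_{v_{(d)}}\star 0$ — the branching vertex carries no associated function, so your invariant should refer to the register indexed by $v_i$ rather than by $v_{(d)}$ (your argument is unaffected since that register is covered by the induction hypothesis). And the step-count bookkeeping you flag at the end — that a branch of length $d<k$ should still count as deciding "in $k$ steps," resolved by reading the definition as reaching $\top$ \emph{within} $k$ iterations or by padding branches to depth $k$ — is a real but conventional point, glossed equally in the paper; your explicit acknowledgment of it is appropriate rather than a gap.
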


As a corollary of this proposition, we get quantitative soundness.

\begin{theorem}\label{thm:soundACT}\label{thm:quantsoundact}
  The representation of \acts as $\amcrealfull$-programs is quantitatively sound.
\end{theorem}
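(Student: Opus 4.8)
The plan is to deduce quantitative soundness directly from \Cref{prop:fullyact} by unfolding the definition of a quantitatively sound translation; essentially the only work beyond \Cref{prop:fullyact} is bookkeeping about the \emph{number} of iterations, and in particular handling the fact that different branches of an algebraic computation tree have different lengths. I would first record the per-input, step-indexed statement underlying \Cref{prop:fullyact}. Fix an algebraic computation tree $T$ on $\realN^{n}$ and a point $\vec{x}\in\realN^{n}$, and let $v_{0},v_{1},\dots,v_{\ell}$ be the vertices of $T^{\circ}$ visited along the unique traversal directed by $\vec{x}$, with $v_{0}$ the root and $v_{\ell}\in\{\top,\bot\}$ the leaf reached. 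By induction on $j\leqslant\ell$ one shows that $\Interpret{T}^{j}(\Interpret{\vec{x}})$ is defined, that its state component is $v_{j}$, and that its register component holds exactly the values $f_{v}$ computed by $T$ along $v_{0},\dots,v_{j}$ (registers $1,\dots,n$ retaining the inputs, never overwritten since internal vertices are numbered from $n+1$). The inductive step is a case split on the label of $v_{j}$: an operational vertex reads registers whose indices are ancestors of $v_{j}$, hence already visited and filled; a division (resp. square-root) vertex is realised by an edge of source $\realN^{\omega}_{k\neq 0}$ (resp. $\realN^{\omega}_{k\geqslant 0}$), enabled precisely because the quantity divided by (resp. rooted) is the content of register $k$; and a test vertex $f\star 0$ is realised by two identity edges whose sources partition $\realN^{\omega}$ according to the sign of register $k$, so the edge that fires leads to the same child as $T$. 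This is the reasoning behind \Cref{prop:fullyact}, read along a single branch. Finally, $\top$ and $\bot$ being incomparable maximal elements of the state space of a computational treeing, no edge of $\Interpret{T}$ has a source containing a point in state $\top$ or $\bot$, so the orbit of $\Interpret{\vec{x}}$ halts at step $\ell$ in state $v_{\ell}$, while $v_{j}$ for $j<\ell$ is an internal vertex, hence neither $\top$ nor $\bot$.

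The theorem then follows by unwinding definitions. By definition, $T$ accepts $\vec{x}$ (resp. rejects $\vec{x}$) in $k$ steps iff the traversal directed by $\vec{x}$ has length $\ell=k$ and $v_{k}=\top$ (resp. $v_{k}=\bot$); by the claim above this happens iff $\Interpret{T}^{k}(\Interpret{\vec{x}})$ is defined with state $\top$ (resp. $\bot$), equivalently $\Interpret{T}^{k}(\Interpret{\vec{x}})=\top$ (resp. $=\bot$), since $\top$ and $\bot$ are themselves control states of $\Interpret{T}$. Since the maps $T\mapsto\Interpret{T}$ and $\vec{x}\mapsto\Interpret{\vec{x}}$ are exactly the translation fixed for algebraic computation trees, this is precisely the condition defining its quantitative soundness.

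The one point needing care — and the only thing not literally contained in \Cref{prop:fullyact}, whose statement fixes a single depth $k$ — is that the ``in $k$ steps'' quantifier must be matched on both sides for \emph{every} $k$: this is handled above by noting that when $k\neq\ell$ neither side of the biconditional holds (the state is internal when $k<\ell$, the iterate is undefined when $k>\ell$). Alternatively, one may first pad $T$ into a tree all of whose branches have length equal to its depth $k$, by inserting harmless operational vertices such as $f_{v}=0+f_{v'}$ just before the leaves — which changes neither the membership problem decided nor, after the obvious re-indexing, the accepting behaviour — after which the statement coincides verbatim with \Cref{prop:fullyact}. Everything else is the routine unfolding of definitions already performed there.
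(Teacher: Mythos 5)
Your proposal is correct and follows the same route as the paper, which simply derives the theorem as a corollary of \Cref{prop:fullyact}. You fill in the bookkeeping the paper leaves implicit — in particular the observation that since $\top,\bot$ are maximal in the state order, the orbit of $\Interpret{\vec{x}}$ halts at exactly the leaf step $\ell$, so the biconditional over all $k$ follows from the depth-$k$ statement of \Cref{prop:fullyact} — but this is elaboration, not a different argument.
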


\subsection{Algebraic circuits}

As we will recover Cucker's proof that $\NCReal\neq\PtimeReal$, we
introduce the model of \emph{algebraic circuits} and their representation 
as $\amcrealfull$-programs.

\begin{definition}
An algebraic circuit over the reals with inputs in $\realN^n$ is a finite 
directed graph whose vertices have labels in $\naturalN\times\naturalN$, 
that satisfies the following conditions:
\begin{itemize}
\item There are exactly $n$ vertices $v_{0,1},v_{0,2},\dots,v_{0,n}$ with 
first index $0$, and they have no incoming edges;
\item all the other vertices $v_{i,j}$ are of one of the following types:
\begin{enumerate}
\item arithmetic vertex: they have an associated arithmetic operation 
$\{+,-,\times,/\}$ and there exist natural numbers $l, k, r, m$ with $l, k < i$ 
such that their two incoming edges are of sources $v_{l,r}$ and $v_{k,m}$;
\item constant vertex: they have an associated real number $y$ and no 
incoming edges;
\item sign vertex: they have a unique incoming edge of source $v_{k,m}$ 
with $k < i$.
\end{enumerate}
\end{itemize}
We call \emph{depth} of the circuit the largest $m$ such that there exist 
a vertex $v_{m,r}$, and \emph{size} of the circuit the total number of vertices. 
A circuit of depth $d$ is \emph{decisional} if there is only one vertex $v_{d,r}$ 
at level $d$, and it is a sign vertex; we call $v_{d,r}$ the \emph{end vertex} of 
the decisional circuit. 
\end{definition}

To each vertex $v$ one inductively associates a function $f_v$ of the input 
variables in the usual way, where a sign node with input $x$ returns $1$ if 
$x > 0$ and $0$ otherwise. The accepted set of a decisional circuit 
$C$ is defined as the set $S\subseteq\realN^n$ of points whose image by the 
associated function is $1$, i.e. $S=f_v^{-1}(\{1\})$ where $v$ is the end vertex 
of $C$. 
 
We represent algebraic circuit as computational $\amcrealfull$-treeings as follows. 
The first index in the pairs $(i,j)\in\naturalN\times\naturalN$ are represented as 
states, the second index is represented as an index in the infinite product 
$\realN^\omega$, and vertices are represented as edges.

\begin{definition}
Let $C$ be an algebraic circuit, defined as a finite directed graph $(V,E,s,t,\ell)$ 
where $V\subset\naturalN\times\naturalN$, and 
$\ell:V\rightarrow \{\mathrm{init},+,-,\times,/,\mathrm{sgn}\}\cup\{\mathrm{const}_c\mid c\in \realN\}$ 
is a vertex labelling map. We suppose without loss of generality that for each 
$j\in\naturalN$, there is at most one $i\in\naturalN$ such that $(i,j)\in V$. We 
define $N$ as $\max\{j\in\naturalN \mid \exists i\in \naturalN, (i,j)\in V\}$.

We define the $\amcrealfull$-program $\Interpret{C}$ by choosing as set of control 
states $\{ i\in \naturalN \mid \exists j\in \naturalN, (i,j)\in V\}$ and the collection 
of edges $\{e_{(i,j)}\mid i\in\naturalN^*, j\in\naturalN, (i,j)\in V\}\cup\{e^+_{(i,j)}\mid i\in\naturalN^*, j\in\naturalN, (i,j)\in V, \ell(v)=\mathrm{sgn}\}$ 
realised as follows:
\begin{itemize}
\item if $\ell(v)=\mathrm{const}_c$, the edge $e_{(i,j)}$ is realised as 
$(\boaddconst{j}{c}{n_v}, 0 \mapsto i)$ of source $\realN^{\omega}_{n_v= 0}\times \{0\}$;
\item if $\ell(v)=\star$ ($\star\in\{+,-,\times\}$) of incoming edges $(k,l)$ and $(k',l')$, 
the edge $e_{(i,j)}$ is of source $\realN^\omega\times\{\max(k,k')\}$ and realised by 
$(\bogen{j}{l}{l'}, \max(k,k') \mapsto i)$;
\item if $\ell(v)=/$ of incoming edges $(k,l)$ and $(k',l')$, the edge $e_{(i,j)}$ is 
of source $\realN^\omega_{l'\neq 0}\times\{\max(k,k')\}$ and realised by 
$(\bodivide{j}{l}{l'}, \max(k,k') \mapsto i)$;
\item if $\ell(v)=\mathrm{sgn}$ of incoming edge $(k,l)$, the edges $e_{(i,j)}$ 
and $e^+_{(i,j)}$ are of respective sources
 $\realN^\omega_{n_v=0\wedge x_l\leqslant 0}\times\{k\}$ 
and $\realN^\omega_{n_v=0\wedge x_l>0}\times\{k\}$ realised by 
$(\identity, k \mapsto i)$ and  $(\boadd{j}{n_v}{1}, k \mapsto i)$ respectively.
\end{itemize}
\end{definition}

As each step of computation in the algebraic circuit is translated as going
through a single edge in the corresponding $\amcrealfull$-program, the following
result is straightforward.

\begin{theorem}\label{thm:soundalgcirc}\label{thm:quantsoundalgcirc}
  The representation of \algcirc as $\amcrealfull$-programs is quantitatively sound.
\end{theorem}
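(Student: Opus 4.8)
The plan is to follow the same two steps as for algebraic computation trees: first prove the analogue of \Cref{prop:fullyact} --- that $\Interpret{C}$ \emph{faithfully and quantitatively} interprets the decisional circuit $C$ --- and then read off quantitative soundness of the translation as an immediate corollary, exactly as \Cref{thm:quantsoundact} follows from \Cref{prop:fullyact}. The observation that makes this go through is that, even though the control graph of $\Interpret{C}$ is a directed acyclic graph on the first indices (\enquote{levels}) of $C$ rather than a tree, every edge of $\Interpret{C}$ is state-increasing (an edge attached to a level-$i$ vertex issues from a strictly smaller state $\max(k,k')$ and lands at $i$), so $\Interpret{C}$ is still a computational treeing and the earlier apparatus applies verbatim; the states $\top,\bot$ are the ones reached from the unique sign vertex at the last level.

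Concretely I would fix a decisional circuit $C$ of depth $d$ with end vertex $v_{d,r}$ and an input $\vec{x}=(x_1,\dots,x_n)\in\realN^n$, encoded as the configuration $\Interpret{\vec{x}}=(x_1,\dots,x_n,0,0,\dots)\times\{0\}$: the input registers hold $\vec{x}$; the auxiliary registers of the non-input vertices are pairwise distinct and distinct from $1,\dots,n$ (the standing assumption being that each second index is used by at most one first index) and, together with the bookkeeping registers $n_v$, all hold $0$; and the control state is $0$. Then I would prove by induction on $m=0,1,\dots,d$ the invariant: \emph{after $m$ iterations $\Interpret{C}$ reaches a configuration $(\vec{y},m)$ in which register $j$ holds $f_{v_{i,j}}(\vec{x})$ for every vertex $v_{i,j}$ of $C$ with $i\leqslant m$}. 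The passage from $m$ to $m+1$ uses that the edges issuing from control state $m$ are exactly those attached to the level-$(m+1)$ vertices; that an arithmetic or constant vertex contributes a single such edge, whereas a sign vertex with argument register $l$ contributes the two edges $e_{(m+1,j)},e^{+}_{(m+1,j)}$ of disjoint sources $\realN^{\omega}_{n_v=0\wedge x_l\leqslant 0}\times\{m\}$ and $\realN^{\omega}_{n_v=0\wedge x_l>0}\times\{m\}$ (and the guard $n_v=0$ does hold at that stage, $n_v$ being untouched so far); so $\Interpret{C}$ is deterministic on reachable configurations, and firing the active edge moves the state to $m+1$ and writes into register $j$ precisely $f_{v_{m+1,j}}(\vec{x})$ --- an arithmetic combination, a loaded constant, or the $\{0,1\}$-valued sign of an already-computed ancestor value --- where the induction hypothesis ensures the argument registers already carry the relevant ancestor values. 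Instantiating at $m=d$: the end vertex being a sign vertex, after $d$ steps $\Interpret{C}$ sits in the state reached by $e_{(d,r)}$ or $e^{+}_{(d,r)}$ according to $\mathrm{sgn}(f_{v_{d,r}}(\vec{x}))$, which by construction is $\top$ exactly when $f_{v_{d,r}}(\vec{x})=1$ and $\bot$ otherwise; hence $\Interpret{C}^{d}(\Interpret{\vec{x}})=\top$ iff $\vec{x}$ lies in the accepted set of $C$ and $=\bot$ iff it does not, which is quantitative soundness.

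The hardest point will be the \emph{quantitative} matching between the number of iterations of the graphing and the depth of $C$: one must argue that the control-state component, which by construction only ever takes values among the first indices of $C$, is incremented by exactly one per iteration and reaches $\top/\bot$ after exactly $d$ steps. The subtlety is that $\Interpret{C}$ fires a single generator per iteration whereas a level of $C$ may contain many gates, so one has to either normalise $C$ --- without changing its accepted set or the complexity measure used downstream --- so that each level past the inputs carries a single gate, or else show that the control-state bookkeeping lets $\Interpret{C}$ sweep through all the gates of one level in a single iteration. Granting that point, determinism on reachable configurations and the step count $k=d$ drop out of the level structure, and the remainder is the same routine verification as in the algebraic computation tree case --- which is why the statement can be put forward as straightforward.
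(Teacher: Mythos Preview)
Your approach matches the paper's: the paper gives no detailed proof at all, only the one-line remark preceding the theorem that \enquote{each step of computation in the algebraic circuit is translated as going through a single edge in the corresponding $\amcrealfull$-program}, and declares the result straightforward. Your level-by-level induction on the control state is exactly the unpacking of that sentence, so there is no divergence in method.

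The point you single out as \enquote{hardest} is genuine and is not addressed in the paper either. As the translation is written, each \emph{vertex} $(i,j)$ of $C$ spawns an edge with source $\realN^{\omega}\times\{\max(k,k')\}$ (or a half-space thereof), so if a level $i$ contains several gates whose predecessors share the same maximal level, the resulting edges have overlapping sources and the graphing is not deterministic; and in any case one iteration of $\Interpret{C}$ fires a single edge, not an entire level. Your first proposed fix --- normalising $C$ so that each non-input level carries a single gate --- does work, but it replaces depth by size; that is harmless for the downstream use in Cucker's argument (where both depth and size are polylogarithmically bounded), yet it means the quantitative soundness statement should really be read with $k$ the \emph{size} of the circuit rather than its depth. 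Your second suggestion, that the control-state bookkeeping somehow lets all gates of a level fire in one iteration, is not supported by the definition of graphing iteration and should be discarded. So: your proof plan is correct and faithful to the paper's intent, and the wrinkle you flagged is a real looseness in the translation rather than a gap in your reasoning.
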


\subsection{Algebraic \srams}
\label{sec:algebraic-prams}

In this paper, we will consider algebraic parallel random access machines, that
act not on strings of bits, but on integers. In order to define those properly,
we first define the notion of (sequential) random access machine (\sram) before
considering their parallelisation.

A \sram \emph{command} is a pair $(\ell, I)$ of a \emph{line}
$\ell \in \naturalN^{\star}$ and an \emph{instruction} $\command{I}$ among the
following, where $i, j \in \naturalN$, $\star \in \{+, -, \times, / \}$, $c\in \integerN$
is a constant and $\ell, \ell' \in \naturalN^{\star}$ are lines:
\[\begin{array}{c}
\Skip;\hspace{0.8em}  \command{X_{i} \coloneqq c};\hspace{0.8em}
   \command{X_{i} \coloneqq X_{j} \star X_{k}};\hspace{0.8em}
                     \command{X_{i} \coloneqq X_{j}};\\
                         \command{X_{i} \coloneqq \sharp{}X_{j}};\hspace{0.8em}
                                  \command{\sharp{X_{i}}\coloneqq X_{j}};\hspace{0.8em}
                    \conditional{X_i = 0}{\ell}{\ell'}.
                    \end{array}
\]

A \sram \emph{machine} $M$ is then a finite set of commands such that the set of
lines is $\{1,2,\dots,\length{M}\}$, with $\length{M}$ the \emph{length} of
$M$. We will denote the commands in $M$ by $(i,\instr{i})$, i.e. $\instr{i}$
denotes the line $i$ instruction. 

Following Mulmuley \cite{Mulmuley99}, we will here make the assumption that the input
in the \sram (and in the \pram model defined in the next section) is split into \emph{numeric} 
and \emph{nonumeric} data -- e.g. in the \maxflow problem the 
nonnumeric data would specify the network and the numeric data would specify the 
edge-capacities -- and that indirect references use pointers depending only on 
nonnumeric data\footnote{Quoting Mulmuley: "We assume that the pointer involved in an 
indirect reference is not some numeric argument in the input or a quantity that depends on
it. For example, in the max- flow problem the algorithm should not use an edge-capacity as 
a pointer—which is a reasonable condition. To enforce this restriction, one initially puts an 
invalid-pointer tag on every numeric argument in the input. During the execution of an 
arithmetic instruction, the same tag is also propagated to the result if any operand has that 
tag. Trying to use a memory value with invalid-pointer tag results in error." 
\cite[Page 1468]{Mulmuley99}.}.
We refer the reader to Mulmuley's article for more details.

Machines in the \sram model can be represented as graphings w.r.t. the action $\amcfull$.
Intuitively the encoding works as follows. The notion of \emph{control state} allows to
represent the notion of \emph{line} in the program. Then, the action just
defined allows for the representation of all commands but the conditionals. The
conditionals are represented as follows: depending on the value of $X_{i}$ one
wants to jumps either to the line $\ell$ or to the line $\ell'$; this is easily
modelled by two different edges of respective sources
$\hyperplan{i}=\{\vec{x}~|~ x_{i}=0\}$ and
$\complementset{\hyperplan{i}}=\{\vec{x}~|~ x_{i}\neq0\}$.

\begin{definition}
  Let $M$ be a \sram machine. We define the translation $\Interpret{M}$ as the $\amcram$-program with set of
  control states $\{0,1,\dots,L,L+1\}$ where each line $\ell$ defines (in the following, 
  $\star\in\{+,-,\times\}$ and we write $\ell\plusplus$ the map $\ell\mapsto\ell+1$):
  \begin{itemize}[nolistsep,noitemsep]
  \item a single edge $e$ of source $\Space{X}\times\{\ell\}$ and realised
    by:
    \begin{itemize}[noitemsep,nolistsep]
    \item $(\identity,\ell\plusplus)\text{ if }\instr{\ell}=\Skip$;
    \item $(\boconst{i}{c},\ell\plusplus)\text{ if }\instr{\ell}=\command{X_i \coloneqq c}$;
    \item $(\bogen{i}{j}{k},\ell\plusplus)\text{ if }\instr{\ell}=
    \command{X_i \coloneqq X_j \star X_k} $;
    \item $(\copyy{i}{j},\ell\plusplus)\text{ if }\instr{\ell}=
    \command{X_{i} \coloneqq X_{j}}$;
    \item $(\copyref{i}{j},\ell\plusplus)\text{ if }\instr{\ell}=
    \command{X_{i}\coloneqq \sharp X_{j}}$;
    \item $(\refcopy{i}{j},\ell\plusplus)\text{ if }\instr{\ell}=
     \command{\sharp{}X_{i} \coloneqq X_{j}}$.
     \end{itemize}
  \item an edge $e$ of source $\hyperplan{k}\times\{\ell\}$ realised
    by $(\boeuclidivide{i}{j}{k},\ell\plusplus)$ if $\instr{\ell}$ is
    $\command{X_i \coloneqq X_j/X_k}$;
  \item a pair of edges $e,\complementset{e}$ of respective sources
    $\hyperplan{i}\times\{\ell\}$ and $\complementset{\hyperplan{i}}\times\{\ell\}$
    and realised by respectively $(\identity,\ell\mapsto \ell^{0})$ and
    $(\identity,\ell\mapsto \ell^{1})$, if the line is a conditional
    $\command{if~X_{i}=0~goto~\ell^{0}~else~\ell^{1}}$.
  \end{itemize}
  The translation $\Interpret{\iota}$ of an input $\iota\in\mathbf{Z}^d$ is
  the point $(\bar{\iota},0)$ where $\bar{\iota}$ is the sequence
  $(\iota_1,\iota_2,\dots,\iota_k,0,0,\dots)$.
\end{definition}

Now, the main result for the representation of \srams is the following. The proof
is straightforward, as each instruction corresponds to exactly one edge, except for
the conditional case (but given a configuration, it lies in the source of at most one 
of the two edges translating the conditional). 

\begin{theorem}\label{thm:soundsram}
  The representation of \srams as $\amcfull$-program is quantitatively sound w.r.t.
  the translation just defined.
\end{theorem}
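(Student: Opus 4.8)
The plan is to verify quantitative soundness for the \sram representation by setting up a step-by-step simulation invariant between the operational semantics of the machine $M$ and the iteration of the graphing $\Interpret{M}$. Concretely, I would define, for a \sram configuration consisting of a memory state $\vec{x}\in\integerN^{\omega}$ together with a current line $\ell\in\{1,\dots,L\}$ (plus the terminal lines $0$, $L+1$ playing the roles of $\bot,\top$), the map sending it to the point $(\vec{x},\ell)\in\Space{X}\times\Space{S^{M}}$. The claim to prove is the bisimulation-style invariant: if $M$ started on input $\iota$ reaches configuration $(\vec{x},\ell)$ after exactly $j$ steps, then $\Interpret{M}^{j}(\Interpret{\iota})=(\vec{x},\ell)$, and conversely every iterate of $\Interpret{M}$ on $\Interpret{\iota}$ arises this way. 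Quantitative soundness then follows immediately, since acceptance (resp.\ rejection) in $k$ steps means the machine reaches the accepting (resp.\ rejecting) line at step $k$, which by the invariant is exactly the condition $\Interpret{M}^{k}(\Interpret{\iota})=\top$ (resp.\ $=\bot$).

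The proof of the invariant is a straightforward induction on $j$. The base case $j=0$ holds since $\Interpret{\iota}=(\bar{\iota},0)$ matches the initial configuration. For the inductive step, suppose the machine is at configuration $(\vec{x},\ell)$ after $j$ steps; I would argue by case analysis on the instruction $\instr{\ell}$ that there is exactly one edge of $\Interpret{M}$ whose source contains $(\vec{x},\ell)$, and that its realiser, applied to $(\vec{x},\ell)$, produces precisely the configuration that the operational semantics assigns after executing $\instr{\ell}$. For the non-conditional instructions (\Skip, constant assignment, arithmetic, plain and indirect copies), the definition of $\Interpret{M}$ provides a single edge of full source $\Space{X}\times\{\ell\}$ whose realiser is the corresponding generator of $\amcfull$, so the state component advances by $\ell\plusplus$ and the memory is updated exactly as specified; the one subtlety is the division instruction $\command{X_i\coloneqq X_j/X_k}$, whose edge has restricted source $\hyperplan{k}\times\{\ell\}$ and is realised by $\boeuclidivide{i}{j}{k}$ — here one uses the convention $k//0 = 0$ built into $\amcfull$ so that the edge is in fact defined on all of $\Space{X}\times\{\ell\}$, matching the \sram semantics. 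For a conditional $\conditional{X_i=0}{\ell^{0}}{\ell^{1}}$, the two edges have disjoint sources $\hyperplan{i}\times\{\ell\}$ and $\complementset{\hyperplan{i}}\times\{\ell\}$, whose union is $\Space{X}\times\{\ell\}$, so exactly one fires depending on whether $x_i=0$, sending the state to $\ell^{0}$ or $\ell^{1}$ as the semantics dictates; the memory is unchanged since both edges are realised by $\identity$. Determinism of the graphing (at most one edge per point) is what guarantees that the iterate $\Interpret{M}^{j+1}$ is well-defined and equals the claimed configuration.

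The converse direction — that every iterate of $\Interpret{M}$ corresponds to an actual machine step — is handled by the same case analysis read backwards, together with the observation that as long as the state component lies in $\{1,\dots,L\}$ some edge is always applicable (the sources of the edges attached to line $\ell$ cover $\Space{X}\times\{\ell\}$ in every case), while if the state component is $0$ or $L+1$ no edge applies, matching the fact that the machine has halted. I would also note the harmless bookkeeping point that the input translation only records the finitely many input registers and zeroes out the rest, which is consistent with the \sram convention, and that the projection onto $\Space{S^{M}}$ reading $\top$ or $\bot$ corresponds to the designated accepting/rejecting lines.

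The main obstacle, such as it is, is not conceptual but lies in being careful about the edge cases where the realiser of an edge is a partial map on $\Space{X}$ — namely division and, in the real-valued setting, roots — and checking that the source restrictions in the definition of $\Interpret{M}$ exactly compensate for this partiality so that the composite dynamical system is total on the reachable configurations; once the convention $k//0=0$ is invoked this is immediate, which is why the theorem is stated as "straightforward." The remaining work is purely the mechanical instruction-by-instruction verification sketched above.
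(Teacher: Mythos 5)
Your overall approach matches the paper's: the paper's own justification is the one-sentence observation that each non-conditional instruction corresponds to exactly one edge of full source, while a conditional is split into two edges whose sources partition $\Space{X}\times\{\ell\}$, so that exactly one edge is applicable at each reachable configuration. Your bisimulation invariant and induction on the number of steps is the natural way to make that sentence precise, and the conditional case and the converse (machine halts iff no edge applies) are handled correctly.

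However, your treatment of the division case contains a genuine logical slip. You note that the edge for $\command{X_i\coloneqq X_j/X_k}$ has source $\hyperplan{k}\times\{\ell\}=\{x_k=0\}\times\{\ell\}$, and then assert that the convention $k//0=0$ makes the edge ``in fact defined on all of $\Space{X}\times\{\ell\}$.'' This conflates the domain of the realiser with the source of the edge. The convention makes the map $\amcfull(\boeuclidivide{i}{j}{k})$ total, but the source $S^{G}_{e}$ is still part of the graphing data and determines where the edge fires; with source $\hyperplan{k}\times\{\ell\}$ a configuration with $x_k\neq 0$ at a division line would find no applicable edge, the iteration of $\Interpret{M}$ would stall, and the bisimulation invariant would fail at that step. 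What the totality convention actually justifies is taking the source to be $\Space{X}\times\{\ell\}$ (treating division like the other arithmetic instructions), or alternatively two edges whose sources partition $\Space{X}\times\{\ell\}$. Your proof should state that correction explicitly rather than argue that a restricted-source edge applies outside its source; as written, that inference is not valid and is exactly the kind of edge case the quantitative statement needs to get right.
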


\subsection{The Crew operation and \prams}
\label{sec:crew}

Based on the notion of \sram, we are now able to consider their parallelisation,
namely \prams. A \pram $M$ is given as a finite sequence of \sram machines
$M_{1},\dots,M_{p}$, where $p$ is the number of \emph{processors} of $M$. Each
processor $M_{i}$ has access to its own, private, set of registers
$(\command{X}_{k}^{i})_{k\geqslant 0}$ and a \emph{shared memory} represented as
a set of registers $(\command{X}_{k}^{0})_{k\geqslant 0}$.

One has to deal with conflicts when several processors try to access the shared
memory simultaneously. We here chose to work with the \emph{Concurrent Read,
  Exclusive Write} (\crew) discipline: at a given step at which several
processors try to write in the shared memory, only the processor with the
smallest index will be allowed to do so. In order to model such parallel
computations, we abstract the \crew at the level of monoids. For this, we
suppose that we have two monoid actions
$\MonGaR{G}{R}\acton \Space{X}\times\Space{Y}$ and
$\MonGaR{H}{Q}\acton \Space{X}\times\Space{Z}$, where $\Space{X}$ represents the
shared memory. We then consider the subset $\#\subset G\times H$ of pairs of
generators that potentially conflict with one another -- the conflict relation.

\begin{definition}[Conflicted sum]
  Let $\MonGaR{G}{R}$, $\MonGaR{G'}{R'}$ be two monoids and
  $\# \subseteq G \times G'$. The \emph{conflicted sum of $\MonGaR{G}{R}$ and
    $\MonGaR{G'}{R'}$ over $\#$}, noted
  $\MonGaR{G}{R} \ast_{\#} \MonGaR{G'}{R'}$, is defined as the monoid with
  generators \((\{1\} \times G) \cup (\{2\} \times G')\) and relations
  \[ \begin{array}{c}(\{1\} \times R) \cup (\{2\} \times R') \cup \{ (\mathbf{1}, e)\} \cup \{ (\mathbf{1},
    e')\} \\
  \hspace{1em}\cup \{ \big((1,g)(2,g'), (2,g')(1,g)
    \big) \mid (g,g') \notin \# \}\end{array}
  \] where $\mathbf{1}$, $e$, $e'$ are the units of
  $\MonGaR{G}{R} \ast_{\#} \MonGaR{G'}{R'}$, $\MonGaR{G}{R}$ and $\MonGaR{G'}{R'}$
  respectively.

  In the particular case where $\# = (G \times H') \cup (H \times G')$, with
  $H, H'$ respectively subsets of $G$ and $G'$, we will write the sum
  $\MonGaR{G}{R}\ncproduct{H}{H'}\MonGaR{G'}{R'}$.
\end{definition}

\begin{remark}
  When the conflict relation $\#$ is empty, this defines the usual direct product
  of monoids. This corresponds to the case in which no conflicts can arise
  w.r.t. the shared memory. In other words, the direct product of monoids
  corresponds to the parallelisation of processes \emph{without shared memory}.

  Dually, when the conflict relation is full ($\# = G \times G'$), this defines the free
  product of the monoids.
\end{remark}

\begin{definition}
  Let $\alpha: M\acton\Space{X}\times \Space{Y}$ be a monoid action. We say that
  an element $m\in M$ is \emph{central relatively to $\alpha$} (or just
  \emph{central}) if the action of $m$ commutes with the first projection
  $\pi_{X}:\Space{X}\times \Space{Y}\rightarrow\Space{X}$,
  i.e.\footnote{Here and in the following, we denote by $;$ the sequential
  composition of functions. I.e. $f;g$ denotes what is usually written $g\circ f$.} 
  $\alpha(m);\pi_{X}=\alpha(m)$; in other words $m$ acts as the identity on
  $\Space{X}$.
\end{definition}

Intuitively, central elements are those that will not affect the shared
memory. As such, only \emph{non-central elements} require care
when putting processes in parallel.

\begin{definition}
  Let $\MonGaR{G}{R}\acton \Space{X}\times\Space{Y}$ be an \amc. We note
  $Z_{\alpha}$ the set of central elements and $\bar{Z}_{\alpha}(G)=\{m\in G~|~ n\not\in Z_{\alpha}\}$.
\end{definition}

\begin{definition}[The \crew of \amcs]
  Let $\alpha: \MonGaR{G}{R}\acton \Space{X}\times\Space{Y}$ and
  $\beta: \MonGaR{H}{Q}\acton \Space{X}\times\Space{Z}$ be \amcs. We define the
  \amc
  $\crew(\alpha,\beta): \MonGaR{G}{R}\ncproduct{\bar{Z}_{\alpha}(G)}{\bar{Z}_{\beta}(G')}\MonGaR{G'}{R'}\acton \Space{X}\times\Space{Y}\times\Space{Z}$
  by letting $\crew(\alpha,\beta)(m,m')=\alpha(m)\ast\beta(m')$ on elements of
  $G\times G'$, where:
  \begin{align}
    \lefteqn{\alpha(m)\ast\beta(m')=}\nonumber\\
    &\left\{
      \begin{array}{ll}
        \Delta;[\alpha(m);\pi_{Y},\beta(m')]  & \text{ if $m\not\in\bar{Z}_{\alpha}(G), m'\in\bar{Z}_{\beta}(G')$,}\\
        \Delta;[\alpha(m),\beta(m');\pi_{Z}] & \text{ otherwise,}
        \end{array}\right.\nonumber
  \end{align}
  with
  $\Delta: 
  \Space{X}\times\Space{Y}\times\Space{Z}\rightarrow\Space{X}\times\Space{Y}\times\Space{X}\times\Space{Z};
  (x,y,z)\mapsto(x,y,x,z)$.
\end{definition}

We can now define \amc of \prams and thus the interpretations of \prams as
abstract programs. For each integer $p$, we define the \amc $\crew^{p}(\amcfull)$. 
This allows the consideration of up to $p$ parallel \srams: the translation of such a
\sram with $p$ processors is defined by extending the translation of \srams
by considering a set of states equal to $L_{1}\times L_{2}\times \dots \times L_{p}$ 
where for all $i$ the set $L_{i}$ is the set of lines of the $i$-th processor.

Now, to deal with arbitrary large \prams, i.e. with arbitrarily large number of
processors, one considers the following \amc defined as a \emph{direct limit}.

\begin{definition}[The \amc of  \prams]
  Let $\alpha: M\acton\Space{X\times X}$ be the \amc $\amcfull$.
  The \amc of \prams is defined as $\amcprams=\varinjlim\crew^{k}(\alpha)$, where
  $\crew^{k-1}(\alpha)$ is identified with a restriction of $\crew^{k}(\alpha)$
  through
  $\crew^{k-1}(\alpha)(m_{1},\dots,m_{k-1})\mapsto
  \crew^{k}(\alpha)(m_{1},\dots,m_{k-1},1)$.
\end{definition}

\begin{remark}\label{remarkcrewp}
 We notice that the underlying space of the \pram{} \amc $\amcprams$ is defined 
as the union \( \cup_{n\in \omega} \integerN^{\omega} \times (\integerN^{\omega})^n\) 
which we will write \( \integerN^{\omega} \times (\integerN^{\omega})^{(\omega)}\).
In practise a given $\amcprams$-program admitting a finite $\amcprams$ representative
will only use elements in $\crew^{p}(\amcfull)$, and can therefore be understood as a 
$\crew^{p}(\alpha)$-program. 
\end{remark}

\begin{theorem}\label{thm:soundpram}\label{thm:quantsoundprams}
  The representation of \prams as $\amcprams$-program is quantitatively sound.
\end{theorem}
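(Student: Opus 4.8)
The plan is to mirror the structure of the already-established soundness theorems for \srams (\Cref{thm:soundsram}) and to show that the \crew operation on \amcs faithfully implements one synchronous parallel step of a \pram. The key observation is that quantitative soundness for \prams is essentially a "one step = one edge-traversal" statement lifted through the conflicted sum: a single \pram computation step consists of every processor $M_i$ simultaneously executing its current line's instruction, with write-conflicts on the shared memory resolved by the \crew discipline (smallest index wins). So the whole proof reduces to checking that the monoid action $\crew^p(\amcfull)$, applied to a configuration $(x; y_1,\dots,y_p)$ together with a tuple of control states $(\ell_1,\dots,\ell_p)$, produces exactly the configuration reached after one synchronous \pram step.

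Concretely, I would proceed as follows. First, fix $p$ and work with $\crew^p(\amcfull)$, using \Cref{remarkcrewp} to justify that any finite \pram representative lives inside some such finite iterate, so it suffices to prove the statement for each $p$. Second, recall from the definition of $\crew(\alpha,\beta)$ that for a pair of generators $(m,m')$, the action $\alpha(m)\ast\beta(m')$ is $\Delta;[\alpha(m);\pi_Y,\beta(m')]$ when $m$ is non-central (i.e. $m$ writes to shared memory) and $m'$ is central, and the symmetric choice otherwise: this is precisely the \crew tie-break, namely the processor with the smaller index that actually wants to write to shared memory is the one whose shared-memory effect is kept, while the other's shared-memory write is discarded (its $\pi$ projection). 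Third, I would unfold this inductively over the $p$-fold iteration: the associativity-like bracketing of $\crew^p$ means that among $m_1,\dots,m_p$, the shared memory ends up modified by the lowest-indexed $m_i\in\bar Z_\alpha(G)$, while all private memories $y_i$ are updated by $\alpha(m_i)$ regardless — exactly the \crew semantics. Fourth, I would invoke the \sram-level analysis already done for \Cref{thm:soundsram}: each processor's instruction at its current line is translated by exactly one edge (two for a conditional, but a configuration lies in the source of exactly one of the two), so the state component $(\ell_1,\dots,\ell_p)$ advances in the single intended way and the per-processor memory action matches. Putting these together, $\Interpret{M}(\text{config})$ after one application equals the config after one synchronous \pram step; then an easy induction on $k$ gives $\Interpret{M}^k(\Interpret{\iota})=\top$ (resp. $\bot$) iff $M$ accepts (resp. rejects) $\iota$ in $k$ steps, where $\top,\bot$ are read off a designated halting state pattern.

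The main obstacle I expect is the bookkeeping around the \crew tie-break in the $p$-fold conflicted sum: one must verify carefully that the iterated definition of $\crew^p$ (built by repeatedly taking $\crew$ with one more copy of $\amcfull$) genuinely selects the smallest-index writing processor and not, say, the smallest-index processor among all processors or the largest writer, and that the centrality predicate $\bar Z_\alpha(G)$ correctly classifies exactly the shared-memory-writing generators (the indirect-copy instructions $\refcopy{i}{j}$ acting on a shared register) versus the private ones. A subtle point is that whether a given \sram instruction writes to the shared memory can depend on the current register contents (e.g. $\sharp X_i \coloneqq X_j$ writes to register $X_{x_i}$, which may or may not be a shared cell depending on $x_i$); but Mulmuley's invalid-pointer-tag restriction, which we have adopted, guarantees that indirect references depend only on nonnumeric data, so whether an edge is "shared-writing" is determined statically along any computation branch, and the conflict relation $\#$ can be taken at the level of generators as the definition requires. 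Once this static classification is in hand, the rest is the routine induction sketched above, parallel to \Cref{thm:soundsram}.
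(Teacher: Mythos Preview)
The paper does not actually prove this theorem: it is stated without argument, the \sram case (\Cref{thm:soundsram}) having already been dismissed as ``straightforward, as each instruction corresponds to exactly one edge'', and the \pram case is implicitly treated as the routine lifting of that fact through the \crew construction. Your proposal is therefore more detailed than anything in the paper, and the strategy you outline---reduce to fixed $p$ via \Cref{remarkcrewp}, verify that one application of the $\crew^p(\amcfull)$ action implements one synchronous \pram step with the correct write-priority, then reuse the per-processor \sram analysis edge-by-edge and induct on $k$---is exactly the argument the paper is taking for granted.

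One concrete slip to fix: you have the case split in the definition of $\alpha(m)\ast\beta(m')$ backwards. The branch $\Delta;[\alpha(m);\pi_Y,\beta(m')]$ is taken when $m\notin\bar{Z}_\alpha(G)$ and $m'\in\bar{Z}_\beta(G')$; since $\bar{Z}_\alpha(G)$ is by definition the set of \emph{non}-central generators, this means $m$ is \emph{central} and $m'$ is \emph{non}-central, not the other way round as you wrote. Tracing it through: in that branch $\beta(m')$ acts on the shared component while $\alpha(m)$'s shared effect is projected away via $\pi_Y$ (harmless, since $m$ is central); in every other case---in particular when both are non-central---the ``otherwise'' branch $\Delta;[\alpha(m),\beta(m');\pi_Z]$ gives $\alpha$ the shared write. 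So your conclusion that the smallest-index writer wins is correct, but the sentence you wrote to justify it is inverted and, read literally, would have the smaller-index writer's effect discarded. Correct that reading and your inductive unfolding over $\crew^p$ goes through; the rest of your plan, including the handling of indirect references via Mulmuley's nonnumeric-pointer assumption, is sound.
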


This result, here stated for integer-valued \prams, can easily be obtained for
\emph{real-valued} \prams translated as $\amcrealfull$-programs.


\section{Entropy and Cells}

\subsection{Topological Entropy}

Topological Entropy is a standard invariant of dynamical system. It is a value representing the average exponential growth rate of the number of orbit segments distinguishable with a finite (but arbitrarily fine) precision. The definition is based on the notion of open covers.

\begin{definition}[Open covers] Given a topological space $\Space{X}$, an \emph{open cover} of $\Space{X}$ is a family $\mathcal{U}=(U_{i})_{i\in I}$ of open subsets of $\Space{X}$ such that $\cup_{i\in I}U_{i}=\Space{X}$. A finite cover $\mathcal{U}$ is a cover whose indexing set is finite. A \emph{subcover} of a cover $\mathcal{U}=(U_{i})_{i\in I}$ is a sub-family $\mathcal{S}=(U_{j})_{j\in J}$ for $J\subseteq I$ such that $\mathcal{S}$ is a cover, i.e. such that $\cup_{j\in J}U_{j}=\Space{X}$.
We will denote by $\opencovers{X}$ (resp. $\finiteopencovers{X}$) the set of all open covers (resp. all finite open covers) of the space $\Space{X}$.
\end{definition}


\begin{definition}
An open cover $\mathcal{U}=(U_{i})_{i\in I}$, together with a continuous function $f:\Space{X}\rightarrow\Space{X}$, defines the inverse image open cover $f^{-1}(\mathcal{U})=(f^{-1}(U_{i}))_{i\in I}$.  Given two open covers $\mathcal{U}=(U_{i})_{i\in I}$ and $\mathcal{V}=(V_{j})_{j\in J}$, we define their join $\mathcal{U}\vee\mathcal{V}$ as the family $(U_{i}\cap V_{j})_{(i,j)\in I\times J}$. 
\end{definition}

\begin{remark}
If $\mathcal{U,V}$ are finite, $f^{-1}(\mathcal{U})$ and $\mathcal{U}\vee\mathcal{V}$ are finite.
\end{remark}

Traditionally \cite{AKmA}, entropy is defined for continuous maps on a compact set. 
However, a generalisation of entropy to non-compact sets can easily be defined by restricting the usual definition to \emph{finite} covers\footnote{This is discussed by Hofer \cite{Hofer75} together with another generalisation based on the Stone-\v{C}ech compactification of the underlying space.}. This is the definition we will use here.

\begin{definition}
Let $\Space{X}$ be a topological space, and $\mathcal{U}=(U_{i})_{i\in I}$ be a finite cover of $\Space {X}$. We define the quantity $H_{\Space{X}}^{0}(\mathcal{U})=\min\{\log_{2}(\card{J})~|~ J\subseteq I, \cup_{j\in J}U_{j}=\Space{X}\}$.
\end{definition}
In other words, if $k$ is the cardinality of the smallest subcover of $\mathcal{U}$, $H^{0}(\mathcal{U})=\log_{2}(k)$.

\begin{definition}\label{def:Hk}
Let $\Space{X}$ be a topological space and $f: \Space{X}\rightarrow\Space{X}$ be a continuous map. For any finite open cover $\mathcal{U}$ of $\Space{X}$, define $H_{\Space{X}}^{k}(f,\mathcal{U})=\frac{1}{k}H_{\Space{X}}^{0}(\mathcal{U}\vee f^{-1}(\mathcal{U})\vee\dots\vee f^{-(k-1)}(\mathcal{U}))$. 
\end{definition}

One can show that the limit $\lim_{n\rightarrow\infty}H_{\Space{X}}^{n}(f,\mathcal{U})$ exists and is finite; it will be noted $h(f,\mathcal{U})$. The topological entropy of $f$ is then defined as the supremum of these values, when $\mathcal{U}$ ranges over the set of all finite covers $\finiteopencovers{X}$.

\begin{definition}\label{def:entropy}
Let $\Space{X}$ be a topological space and $f: \Space{X}\rightarrow\Space{X}$ be a continuous map. The \emph{topological entropy} of $f$ is defined as $h(f)=\sup_{\mathcal{U}\in\finiteopencovers{X}} h(f,\mathcal{U})$.
\end{definition}

\subsection{Graphings and Entropy}

We now need to define the entropy of \emph{deterministic graphings}. As mentioned briefly already, deterministic graphings on a space $\Space{X}$ are in one-to-one correspondence with partial dynamical systems on $\Space{X}$. 
 Thus, we only need to extend the notion of entropy to partial maps, and we can then define the entropy of a graphing $G$ as the entropy of its corresponding map $[G]$. 

Given a finite cover $\mathcal{U}$, the only issue with partial continuous maps is that $f^{-1}(\mathcal{U})$ is not in general a cover. Indeed, $\{f^{-1}(U)~|~U\in \mathcal{U}\}$ is a family of open sets by continuity of $f$ but the union $\cup_{U\in \mathcal{U}}f^{-1}(U)$ is a strict subspace of $\Space{X}$ (namely, the domain of $f$). It turns out the solution to this problem is quite simple: we notice that $f^{-1}(\mathcal{U})$ is a cover of $f^{-1}(\Space{X})$ and now work with covers of subspaces of $\Space{X}$. Indeed, $\mathcal{U}\vee f^{-1}(\mathcal{U})$ is itself a cover of $f^{-1}(\Space{X})$ and therefore the quantity $H_{\Space{X}}^{2}(f,\mathcal{U})$ can be defined as $(1/2)H_{f^{-1}(\Space{X})}^{0}(\mathcal{U}\vee f^{-1}(\mathcal{U}))$.

We now generalise this definition to arbitrary iterations of $f$ by extending Definitions \ref{def:Hk} and \ref{def:entropy} to partial maps as follows.
\begin{definition}\label{def:partialentropy}
Let $\Space{X}$ be a topological space and $f: \Space{X}\rightarrow\Space{X}$ be a continuous partial map. For any finite open cover $\mathcal{U}$ of $\Space{X}$, we define $H_{\Space{X}}^{k}(f,\mathcal{U})=\frac{1}{k}H_{f^{-k+1}(\Space{X})}^{0}(\mathcal{U}\vee f^{-1}(\mathcal{U})\vee\dots\vee f^{-(k-1)}(\mathcal{U}))$.
The \emph{entropy} of $f$ is then defined as $h(f)=\sup_{\mathcal{U}\in\finiteopencovers{X}} h(f,\mathcal{U})$, where $h(f,\mathcal{U})$ is again defined as the limit $\lim_{n\rightarrow\infty}H_{\Space{X}}^{n}(f,\mathcal{U})$.
\end{definition}

We now consider the special case of a graphing $G$ with set of control states $S^{G}$. For an intuitive understanding, one can think of $G$ as the representation of a \pram machine. We focus on the specific open cover indexed by the set of control states, i.e. $\mathcal{S}=(\Space{X}\times\{s\}_{s\in S^{G}})$, and call it \emph{the states cover}. We will now show how the partial entropy $H^{k}(G,\mathcal{S})$ is related to the set of \emph{admissible sequence of states}. Let us define those first.

\begin{definition}
Let $G$ be a graphing, with set of control states $S^{G}$. An admissible sequence of states is a sequence $\mathbf{s}=s_{1}s_{2}\dots s_{n}$ of elements of $S^{G}$ such that for all $i\in\{1,2,\dots,n-1\}$ there exists a subset $C$ of $\Space{X}$ -- i.e. a set of configurations -- such that $G$ contains an edge from $C\times\{s_{i}\}$ to a subspace of $\Space{X}\times\{s_{i+1}\}$.
\end{definition}

\begin{example}
As an example, let us consider the very simple graphing with four control states $a,b,c,d$ and edges from $\Space{X}\times\{a\}$ to $\Space{X}\times\{b\}$, from $\Space{X}\times\{b\}$ to $\Space{X}\times\{c\}$, from $\Space{X}\times\{c\}$ to $\Space{X}\times\{b\}$ and from $\Space{X}\times\{c\}$ to $\Space{X}\times\{d\}$. 
Then the sequences $abcd$ and $abcbcbc$ are admissible, but the sequences $aba$, $abcdd$, and $abcba$ are not.
\end{example}

\begin{lemma}\label{lem:admseqHk}
Let $G$ be a graphing, and $\mathcal{S}$ its states cover. Then for all integer $k$, the set $\admss{k}$ of admissible sequences of states of length $k>1$ is of cardinality $2^{k.H^{k}(G,\mathcal{S})}$.
\end{lemma}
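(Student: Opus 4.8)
The plan is to unfold the definition of $H^{k}(G,\mathcal{S})$, read off the geometric object it counts, and match it with $\admss{k}$ through a bookkeeping argument. Since entropy is only defined for deterministic graphings, I identify $G$ with the partial dynamical system $[G]=f$ on $\Space{Y}=\Space{X}\times S^{G}$ provided by the correspondence between deterministic graphings and partial maps, and I instantiate \Cref{def:partialentropy} at the states cover $\mathcal{S}=(\Space{X}\times\{s\})_{s\in S^{G}}$. This gives
\[ k\cdot H^{k}(G,\mathcal{S}) \;=\; H^{0}_{f^{-(k-1)}(\Space{Y})}\!\left(\bigvee_{i=0}^{k-1}f^{-i}(\mathcal{S})\right), \]
so that $2^{k\cdot H^{k}(G,\mathcal{S})}$ is exactly the cardinality of a smallest subcover of $f^{-(k-1)}(\Space{Y})$ that can be extracted from the $k$-fold join of $\mathcal{S}$ with its iterated inverse images.

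First I would describe this join concretely: it is indexed by $(S^{G})^{k}$, the member indexed by $\seq{s}=s_{1}\cdots s_{k}$ being
\[ U_{\seq{s}} \;=\; \bigcap_{i=0}^{k-1}f^{-i}\!\left(\Space{X}\times\{s_{i+1}\}\right), \]
i.e. the set of configurations $x$ such that $f^{i}(x)$ is defined and has control state $s_{i+1}$ for every $0\leqslant i\leqslant k-1$. Two elementary observations then reduce the statement to counting: (i) the cells $U_{\seq{s}}$ are pairwise disjoint, because the orbit $x,f(x),\dots$ of a point under the single-valued map $f$ — hence the sequence of states it visits — is determined by $x$; (ii) every $x\in f^{-(k-1)}(\Space{Y})$ belongs to $U_{\seq{s}}$ for $\seq{s}$ the sequence of states visited by $x,f(x),\dots,f^{k-1}(x)$. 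Hence the non-empty cells constitute the unique minimal subcover of $f^{-(k-1)}(\Space{Y})$, and $2^{k\cdot H^{k}(G,\mathcal{S})}=\card{\{\seq{s}\in(S^{G})^{k}\mid U_{\seq{s}}\neq\emptyset\}}$.

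It remains to identify the set of sequences indexing a non-empty cell with $\admss{k}$. One inclusion is immediate: given $x\in U_{\seq{s}}$, for each $i\in\{1,\dots,k-1\}$ the edge $e$ of $G$ whose source contains $f^{i-1}(x)$ — unique, by determinism — has source of the form $C\times\{s_{i}\}$ and maps it into $\Space{X}\times\{s_{i+1}\}$, which is precisely the edge whose existence is required by the definition of admissibility. The converse — that every admissible $\seq{s}$ already indexes a non-empty cell — is the delicate point, since the mere local existence of edges $s_{i}\to s_{i+1}$ need not, a priori, produce a single orbit threading all of them; I expect this to be the main obstacle. I would handle it by exploiting the structure of the graphings at hand: out of any state the edge sources partition (a part of) $\Space{X}$, the realisers are the concrete arithmetic and copy maps introduced earlier, and for these one builds, inductively along an admissible sequence, a configuration whose orbit realises it. (Should this faithfulness of the transition graph fail in full generality, the easy inclusion still yields $2^{k\cdot H^{k}(G,\mathcal{S})}\leqslant\card{\admss{k}}$, the direction actually used downstream.) Combining both inclusions with the counting step gives $\card{\admss{k}}=2^{k\cdot H^{k}(G,\mathcal{S})}$.
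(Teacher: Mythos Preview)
Your approach is essentially the paper's: describe the $k$-fold join of the states cover explicitly, observe that its members are pairwise disjoint so that the minimal subcover is exactly the collection of non-empty cells, and identify those with admissible sequences. The paper phrases this as an induction on $k$ while you give the direct description; that is only a cosmetic difference.

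Where your write-up is actually more careful than the paper is the point you flag as ``delicate''. The paper simply asserts, in the inductive step, that $C[\mathbf{s}]$ is empty if and only if $\mathbf{s}$ is \emph{not} admissible, without arguing the converse implication. With the paper's definition of admissibility (for every consecutive pair $s_i s_{i+1}$ there exists \emph{some} edge from state $s_i$ to state $s_{i+1}$) this converse is not automatic: local transitions $s_1\!\to\! s_2$, $s_2\!\to\! s_3$, \dots\ need not compose into a single orbit realising $s_1\cdots s_k$, because the realiser of one edge may push every configuration out of the source of the next. So the equality $\card{\admss{k}}=2^{k\cdot H^{k}(G,\mathcal{S})}$ can genuinely fail for arbitrary deterministic graphings; your instinct to rely on the concrete structure of the $\amcrealfull$-graphings at hand, or to fall back on the inequality $2^{k\cdot H^{k}(G,\mathcal{S})}\leqslant\card{\admss{k}}$, is the right move. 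That inequality is all that is used downstream (the cell decomposition and the co-tree bounds only need an \emph{upper} bound on the number of non-empty cells), so the applications are unaffected.
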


A tractable bound on the number of admissible sequences of states can be obtained by noticing that the sequence $H^{k}(G,\mathcal{S})$ is \emph{sub-additive}, i.e. $H^{k+k'}(G,\mathcal{S})\leqslant H^{k}(G,\mathcal{S})+H^{k'}(G,\mathcal{S})$. A consequence of this is that $H^{k}(G,\mathcal{S})\leqslant kH^{1}(G,\mathcal{S})$. Thus the number of admissible sequences of states of length $k$ is bounded by $2^{k^{2}H^{1}(G,\mathcal{S})}$. We now study how the cardinality of admissible sequences can be related to the entropy of $G$. This is deduced from Lemma \ref{lem:admseqHk} and the following general result (which does not depend on the chosen cover).

\begin{lemma}\label{lemma:coverentropybound}
For all $\epsilon>0$ and all cover $\mathcal{U}$, there exists a natural number $N$ such that $\forall k\geqslant N$, $H^{k}(G,\mathcal{U})<h([G])+\epsilon$.
\end{lemma}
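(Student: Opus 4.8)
The plan is to unwind the definitions and reduce the statement to the fact that, by definition, $h([G],\mathcal{U})$ is the limit of the sequence $\big(H^{k}([G],\mathcal{U})\big)_{k\in\omega}$, and then bound $h([G],\mathcal{U})$ from above by $h([G])$ using the supremum in \Cref{def:partialentropy}. More precisely: write $f = [G]$ for the partial continuous map associated to the deterministic graphing $G$, which exists by the one-to-one correspondence recalled just before \Cref{def:partialentropy}. By \Cref{def:partialentropy}, $h(f,\mathcal{U}) = \lim_{n\to\infty} H^{n}(f,\mathcal{U})$, and this limit exists and is finite (this is the standard fact, recalled in the excerpt, that the sequence converges; concretely it follows from the sub-additivity $H^{k+k'}(f,\mathcal{U})\leqslant H^{k}(f,\mathcal{U})+H^{k'}(f,\mathcal{U})$ noted in the paragraph preceding the statement, via Fekete's subadditive lemma, so that the limit equals the infimum $\inf_{k} H^{k}(f,\mathcal{U})$ — in particular it is well-defined).

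First I would fix $\epsilon > 0$ and an arbitrary finite open cover $\mathcal{U}$. Since $h(f,\mathcal{U}) = \lim_{n\to\infty} H^{n}(f,\mathcal{U})$, the definition of the limit gives a natural number $N$ such that for all $k \geqslant N$ we have $H^{k}(f,\mathcal{U}) < h(f,\mathcal{U}) + \epsilon$. It then remains to observe that $h(f,\mathcal{U}) \leqslant h(f)$: this is immediate from \Cref{def:partialentropy}, where $h(f)$ is defined as $\sup_{\mathcal{V}\in\finiteopencovers{X}} h(f,\mathcal{V})$, so in particular $h(f,\mathcal{U})$ is one of the terms over which the supremum is taken. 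Combining the two inequalities, for all $k\geqslant N$,
\[ H^{k}(f,\mathcal{U}) < h(f,\mathcal{U}) + \epsilon \leqslant h(f) + \epsilon = h([G]) + \epsilon, \]
which is exactly the claim.

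There is essentially no obstacle here: the only point requiring a word of care is that $h(f,\mathcal{U})$ is a genuine real number (the limit exists and is finite), so that "for all $k$ beyond some $N$, $H^{k}$ is within $\epsilon$ of it" is a meaningful statement — and this is precisely the convergence fact recalled in the excerpt, which one may invoke. One should also note that the generalisation to partial maps does not disturb anything: the covers $\mathcal{U}\vee f^{-1}(\mathcal{U})\vee\dots\vee f^{-(k-1)}(\mathcal{U})$ are finite open covers of the subspace $f^{-k+1}(\Space{X})$ by the remark following the definition of the join, so each $H^{k}(f,\mathcal{U})$ is well-defined and finite, and sub-additivity (already used in the excerpt) still holds, which is all that is needed for convergence. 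Hence the lemma is really just a restatement of the definition of $h(f,\mathcal{U})$ as a limit together with the fact that it is dominated by the supremum $h(f)$.
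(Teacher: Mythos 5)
Your proof is correct and follows essentially the same route as the paper: establish that $h([G],\mathcal{U})=\lim_k H^k(G,\mathcal{U})$ exists via subadditivity and Fekete's lemma, apply the definition of the limit to get $H^k(G,\mathcal{U})<h([G],\mathcal{U})+\epsilon$ for large $k$, and then bound $h([G],\mathcal{U})\leqslant h([G])$ because the latter is the supremum over all finite covers. The only cosmetic difference is that the paper works with the unnormalised quantity $H_k$ and divides by $k$ at the end, whereas you work directly with $H^k=H_k/k$.
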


The two previous lemmas combine to give the following.

\begin{lemma}\label{lem:admseqHk2}
Let $G$ be a graphing. Then $\card{\admss{k}} = O(2^{k.h([G])})$ as $k\rightarrow\infty$.
\end{lemma}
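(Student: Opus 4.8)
The plan is to obtain the stated bound directly from \Cref{lem:admseqHk} and \Cref{lemma:coverentropybound}, specialised to the states cover $\mathcal{S}$. The only points that need spelling out are why \Cref{lemma:coverentropybound} applies to that particular cover and how one passes from an estimate on $H^{k}(G,\mathcal{S})$ to the announced asymptotics.

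First I would note that the states cover $\mathcal{S}=(\Space{X}\times\{s\})_{s\in S^{G}}$ is a \emph{finite} open cover: the set $S^{G}$ of control states is part of the (finite) data of a graphing representative, and each $\Space{X}\times\{s\}$ is open in the product topology. Hence \Cref{lemma:coverentropybound} may legitimately be invoked with $\mathcal{U}=\mathcal{S}$.

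Next, by \Cref{lem:admseqHk} one has, for every $k>1$, the exact identity $\card{\admss{k}}=2^{\,k\cdot H^{k}(G,\mathcal{S})}$, so controlling $\card{\admss{k}}$ amounts to bounding $k\cdot H^{k}(G,\mathcal{S})$ from above. Fix $\epsilon>0$. By \Cref{lemma:coverentropybound} there is $N\in\naturalN$ with $H^{k}(G,\mathcal{S})<h([G])+\epsilon$ for all $k\geqslant N$; substituting into the identity gives $\card{\admss{k}}<2^{\,k(h([G])+\epsilon)}=2^{\,k\cdot h([G])}\cdot 2^{\,k\epsilon}$ for all $k\geqslant N$. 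Since $\epsilon>0$ was arbitrary, this is exactly the asserted bound $\card{\admss{k}}=O(2^{\,k\cdot h([G])})$ as $k\to\infty$, i.e.\ $\card{\admss{k}}=2^{\,k\cdot h([G])+o(k)}$, which is the form used in the sequel.

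I do not anticipate a genuine obstacle: the argument is a two-line composition of the two cited lemmas. The only subtlety is notational --- the $O(\cdot)$ has to be read up to the subexponential factor $2^{k\epsilon}$ (valid for every $\epsilon>0$), because in general $H^{k}(G,\mathcal{S})$ merely converges to $h([G])$ rather than satisfying $H^{k}(G,\mathcal{S})\leqslant h([G])+C/k$. All the substantive work has already been done upstream: the combinatorial identification of the minimal subcover of $\mathcal{S}\vee f^{-1}(\mathcal{S})\vee\dots\vee f^{-(k-1)}(\mathcal{S})$ with the set of admissible state-sequences (\Cref{lem:admseqHk}), and the convergence $H^{k}\to h([G])$ obtained from subadditivity of $k\mapsto k\cdot H^{k}(G,\mathcal{S})$ (\Cref{lemma:coverentropybound}).
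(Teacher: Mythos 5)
Your proposal follows exactly the paper's (unstated) route: the paper simply asserts \enquote{The two previous lemmas combine to give the following}, and you spell out that combination, checking first that $\mathcal{S}$ is a finite open cover so that \Cref{lemma:coverentropybound} applies, then substituting the entropy estimate into the exact identity of \Cref{lem:admseqHk}. That is the intended argument.

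One point you raise deserves to be stated more firmly than as a \enquote{notational subtlety}: the bound you actually derive is $\card{\admss{k}}\leqslant 2^{k(h([G])+\epsilon)}$ for every $\epsilon>0$ and $k$ large, i.e.\ $\card{\admss{k}}=2^{k\cdot h([G])+o(k)}$, and this is \emph{strictly weaker} than the asserted $O(2^{k\cdot h([G])})$. In fact, since $H^{k}(G,\mathcal{S})$ converges to $h_0([G])$ from above (Fekete gives $H^{k}(G,\mathcal{S})\geqslant\inf_{j}H^{j}(G,\mathcal{S})=h_0([G])$), nothing in the cited lemmas rules out, say, $H^{k}(G,\mathcal{S})=h_0([G])+1/\sqrt{k}$ with $h_0([G])=h([G])$, which would make $\card{\admss{k}}=2^{k\cdot h([G])+\sqrt{k}}$ --- not $O(2^{k\cdot h([G])})$. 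So the literal big-$O$ statement is not a consequence of \Cref{lem:admseqHk} and \Cref{lemma:coverentropybound}; it would require the additional quantitative fact that $H^{k}(G,\mathcal{S})-h_0([G])=O(1/k)$ (which does hold for graphings with finitely many states, by a Perron--Frobenius-type argument on the state transition graph, but that is a separate input). You have correctly reproduced the paper's argument and correctly identified that the argument proves $2^{k\cdot h([G])+o(k)}$, which is all that the downstream applications use; just don't equate that with the literal $O(\cdot)$ claim.
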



\subsection{Cells Decomposition}

Now, let $G$ be a deterministic graphing with its state cover $\mathcal{S}$. We fix $k>2$ and consider the partition $(C[\mathbf{s}])_{\mathbf{s}\in\admss{k}}$ of the space $[G]^{-k+1}(\Space{X})$, where the sets $C[\mathbf{s}]=C[(s_{1}s_{2}\dots s_{k-1},s_{k})]$ are defined inductively as follow:
\begin{itemize}[nolistsep,noitemsep]
\item $C[s_1,s_2]$ is the set $\{x\in\Space{X}\mid [G](x,s_1)\in X\times \{s_2\}\}$;
\item $C[(s_{1}s_{2}\dots s_{k-1},s_{k})]$ is the set $\{x\in\Space{X}\mid \forall  i \in\{2,\dots,k\}, [G]^{i-1}(x, s_1)\in\Space{X}\times\{s_i\}\}$.
\end{itemize}
  



This decomposition splits the set of initial configurations into cells satisfying the following property: \emph{for any two initial configurations contained in the same cell $C[\mathbf{s}]$, the $k$-th first iterations of $G$ go through the same admissible sequence of states $\mathbf{s}$}.


\begin{definition}
Let $G$ be a deterministic graphing, with its state cover $\mathcal{S}$. Given an integer $k$, we define the $k$-th cell decomposition of $\Space{X}$ along $G$ as the partition $\{ C[\mathbf{s}] ~|~ \mathbf{s}\in\admss{k} \}$.
\end{definition}

Then Lemma \ref{lem:admseqHk} provides a bound on the cardinality of the $k$-th cell decomposition.
Using the results in the previous section, we can then obtain the following proposition.

\begin{proposition}\label{prop:entropy-k-cell}
Let $G$ be a deterministic graphing, with entropy $h(G)$. The cardinality of the $k$-th cell decomposition of $\Space{X}$ w.r.t. $G$, as a function $c(k)$ of $k$, is asymptotically bounded by $g(k)=2^{k.h([G])}$, i.e. $c(k)=O(g(k))$.
\end{proposition}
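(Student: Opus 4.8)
The plan is to reduce the statement to the entropy bounds already established, so that the proof is essentially a one-line deduction. By definition, the $k$-th cell decomposition of $\Space{X}$ along $G$ is the family $\{C[\mathbf{s}] \mid \mathbf{s}\in\admss{k}\}$, indexed by the set $\admss{k}$ of admissible sequences of states of length $k$. The first --- and only \emph{substantive} --- step is therefore to observe that $c(k)\leqslant\card{\admss{k}}$: distinct non-empty cells carry distinct labels $\mathbf{s}$, and some cells $C[\mathbf{s}]$ may well be empty (an admissible sequence need only satisfy the edge-by-edge existence condition, so no single initial configuration need traverse all of $\mathbf{s}$), which can only decrease the count. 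This inequality is immediate from the inductive definition of $C[\mathbf{s}]$.

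It then remains to bound $\card{\admss{k}}$, which is exactly the content of Lemma \ref{lem:admseqHk2}: $\card{\admss{k}}=O(2^{k\cdot h([G])})$ as $k\to\infty$. Chaining the two estimates gives $c(k)=O(2^{k\cdot h([G])})=O(g(k))$, which is the claim. For completeness I would recall in one line how Lemma \ref{lem:admseqHk2} is itself obtained: Lemma \ref{lem:admseqHk} identifies $\card{\admss{k}}$ with $2^{k H^{k}(G,\mathcal{S})}$ for the states cover $\mathcal{S}$, and Lemma \ref{lemma:coverentropybound}, applied with $\mathcal{U}=\mathcal{S}$, bounds $H^{k}(G,\mathcal{S})$ by $h([G])$ up to an arbitrarily small additive error for all large $k$; passing this through the exponential yields the stated asymptotic bound.

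I do not expect any real obstacle here: all of the analytic work has been carried out in Lemmas \ref{lem:admseqHk}, \ref{lemma:coverentropybound} and \ref{lem:admseqHk2}. The single point that deserves an explicit sentence is the inequality $c(k)\leqslant\card{\admss{k}}$, namely that the cell decomposition has at most as many parts as there are admissible sequences of states; everything else is bookkeeping with the $O$-notation.
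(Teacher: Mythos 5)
Your argument matches the paper's implicit route exactly: bound the number of cells by $\card{\admss{k}}$ (since the decomposition is indexed by $\admss{k}$ and distinct non-empty cells carry distinct labels), then invoke Lemma~\ref{lem:admseqHk2}, which itself rests on Lemmas~\ref{lem:admseqHk} and~\ref{lemma:coverentropybound}. The proposal is correct and adds nothing beyond the paper's chain of lemmas, aside from usefully making explicit the inequality $c(k)\leqslant\card{\admss{k}}$.
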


We also state another bound on the number of cells of the $k$-th cell decomposition, based on the state cover entropy, i.e. the entropy with respect to the state cover rather than the usual entropy which takes the supremum of cover entropies when the cover ranges over all finite covers of the space. This is a simple consequence of Lemma \ref{lem:admseqHk}.

\begin{proposition}\label{prop:entropy-k-cell-h0}
Let $G$ be a deterministic graphing. We consider the \emph{state cover entropy} $h_0([G])=\lim_{n\rightarrow\infty}H_{\Space{X}}^{n}([G],\mathcal{S})$ where $\Space{S}$ is the state cover. The cardinality of the $k$-th cell decomposition of $\Space{X}$ w.r.t. $G$, as a function $c(k)$ of $k$, is asymptotically bounded by $g(k)=2^{k.h_0([G])}$, i.e. $c(k)=O(g(k))$.
\end{proposition}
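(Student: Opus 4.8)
The plan is to mimic the proof of Proposition~\ref{prop:entropy-k-cell}, but replacing the bound coming from the topological entropy $h([G])$ (obtained via Lemma~\ref{lemma:coverentropybound}) by a direct bound coming from the definition of the state cover entropy $h_0([G])$. The starting point is Lemma~\ref{lem:admseqHk}, which says that the $k$-th cell decomposition $\{C[\mathbf{s}]\mid \mathbf{s}\in\admss{k}\}$ has cardinality exactly $\card{\admss{k}} = 2^{k.H^{k}(G,\mathcal{S})}$, where $\mathcal{S}$ is the state cover. So controlling $c(k)$ amounts to controlling the quantity $k.H^{k}(G,\mathcal{S})$.

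First I would recall that $H^{k}(G,\mathcal{S})$ is the sequence whose limit (as $k\to\infty$) is by definition $h_0([G])$; indeed the paper has just defined $h_0([G])=\lim_{n\rightarrow\infty}H_{\Space{X}}^{n}([G],\mathcal{S})$, and this limit exists by the same sub-additivity argument used for ordinary entropy (the sequence $k\mapsto k.H^k(G,\mathcal{S})$ is sub-additive, so Fekete's lemma applies and the limit equals the infimum). Consequently, for every $\epsilon>0$ there is an $N$ such that $H^{k}(G,\mathcal{S})<h_0([G])+\epsilon$ for all $k\geqslant N$ — this is the exact analogue of Lemma~\ref{lemma:coverentropybound}, but specialised to the single cover $\mathcal{S}$, so no supremum over covers is needed. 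Combining with Lemma~\ref{lem:admseqHk} gives, for $k\geqslant N$,
\[
c(k) = \card{\admss{k}} = 2^{k.H^{k}(G,\mathcal{S})} \leqslant 2^{k(h_0([G])+\epsilon)}.
\]
Since $\epsilon>0$ is arbitrary, this yields $c(k)=O(2^{k.h_0([G])})$ in the asymptotic sense (more precisely $c(k) = 2^{k.h_0([G])+o(k)}$, which is what the statement $c(k)=O(g(k))$ abbreviates here, consistently with Proposition~\ref{prop:entropy-k-cell}).

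The only mild obstacle is bookkeeping around the definition of entropy for \emph{partial} maps: $H^{k}(G,\mathcal{S})$ is defined via covers of the shrinking subspace $[G]^{-k+1}(\Space{X})$ rather than of $\Space{X}$, so I must check that the sub-additivity estimate $H^{k+k'}(G,\mathcal{S})\leqslant H^{k}(G,\mathcal{S})+H^{k'}(G,\mathcal{S})$ — already asserted in the excerpt just before Lemma~\ref{lemma:coverentropybound} — still delivers convergence of $H^{k}(G,\mathcal{S})$ to its infimum in this partial setting, so that the $\epsilon$-approximation step is legitimate. This is routine given the material already developed, and in fact the convergence of $H^n([G],\mathcal{S})$ is exactly what makes the definition of $h_0([G])$ well-posed, so it can be taken as granted. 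With that in hand the proof is just the chain of (in)equalities above.
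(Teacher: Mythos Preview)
Your proposal is correct and follows exactly the approach the paper indicates: the paper says this proposition ``is a simple consequence of Lemma~\ref{lem:admseqHk}'', and you spell this out by combining $c(k)=\card{\admss{k}}=2^{k.H^{k}(G,\mathcal{S})}$ with the convergence $H^{k}(G,\mathcal{S})\to h_0([G])$ guaranteed by sub-additivity and Fekete's lemma. Your remark that the $O$-statement must be read as $2^{k.h_0([G])+o(k)}$ (since Fekete gives $H^k\geqslant h_0$, so the literal big-$O$ would fail) is a valid caveat that applies equally to Proposition~\ref{prop:entropy-k-cell}.
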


\section{Entropic Cotrees and the Main Lemma}

\subsection{Lower Bounds through the Milnor-Thom theorem} 

The results stated in
the last section can be used to prove lower bounds in several models. 
These results rely on two ingredients: the above bounds on the cardinality 
of the $k$-th cell decomposition, and the Milnor-Thom theorem. 

The Milnor-Thom theorem, which was proven independently by Milnor 
\cite{Milnor:1964} and Thom \cite{Thom}, states bounds on the sum of the Betti 
numbers (i.e. the rank of the homology groups) of an algebraic variety. This 
theorem provides bounds on the number of connected components (i.e. the 
$0$-th Betti number $\beta_0(V)$) of a semi-algebraic variety $V$. We here 
use the statement of the Milnor-Thom theorem as given by Ben-Or 
\cite[Theorem 2]{Ben-Or83}.

\begin{theorem}
 \label{thm:milnor-ben-or}
Let $V \subseteq \realN^n$ be a set defined by polynomial 
in\pointmedian{}equations
($n,m,h\in\naturalN$):
  \begin{align}
    &\{q_i(\vec{x}) = 0\mid 0\leqslant i\leqslant m\} \nonumber\\
    &\hspace{1em}\cup \{p_i(\vec{x}) > 0\mid 0\leqslant i\leqslant s\} \nonumber\\
    &\hspace{2em}\cup \{p_i(\vec{x}) \leqslant 0\mid s+1\leqslant i\leqslant h\}.\nonumber
   \end{align}
Then $\beta_0(V)$ is at most $d(2d-1)^{n+h-1}$,
where $d=\max\{2, \deg(q_i),\deg(p_j)\}$.
\end{theorem}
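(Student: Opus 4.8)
This is the classical Milnor--Thom estimate on the number of connected components of a real semi-algebraic set, so ``proving'' it really amounts to recalling the Milnor/Thom argument --- which is exactly how it is packaged in \cite{Ben-Or83} --- but let me sketch how I would carry it out. The plan is to reduce $V$ step by step to a compact nonsingular real algebraic hypersurface, count the connected components of that hypersurface by Morse theory and Bézout's inequality, and keep the degrees and the ambient dimension under control throughout. First I would eliminate the inequalities: for each constraint $p_i(\vec x)>0$ I introduce a fresh coordinate $y_i$ and the equation $y_i^{2}\,p_i(\vec x)=1$ (solvable in $y_i$ exactly when $p_i(\vec x)>0$), and for each constraint $p_i(\vec x)\leqslant 0$ a coordinate $y_i$ and the equation $y_i^{2}+p_i(\vec x)=0$. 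The forgetful projection from the resulting real algebraic set $\tilde V$ (in some $\realN^{N'}$ with $N'$ bounded in terms of $n$ and $h$) down to $V$ is continuous and surjective, hence sends connected components onto connected components, so $\beta_0(V)\leqslant\beta_0(\tilde V)$. Now $\tilde V$ is cut out by polynomial \emph{equations} only, which I fold into a single equation $Q=0$ with $Q$ the sum of their squares, so that $\tilde V=Q^{-1}(0)$ with $Q\geqslant 0$ everywhere and $\deg Q$ of order $d$.

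Next I would make this real variety into a compact smooth hypersurface. By Sard's theorem a generic small $\varepsilon>0$ is a regular value of $Q$; then $Q^{-1}(\varepsilon)$ is a nonsingular hypersurface and a neighbourhood of $Q^{-1}(0)$ deformation-retracts onto it, which can only increase $\beta_0$. To handle non-compactness I would intersect with a sphere of large radius (one extra degree-two equation), using that the components of a closed set are already visible inside a ball of large enough radius, or equivalently pass to the real projective closure. One thus arrives at a compact nonsingular hypersurface $M\subseteq\realN^{N}$ cut out by a polynomial of degree $D$, with both $N$ and $D$ controlled by $n,h,d$, and with $\beta_0(\tilde V)\leqslant\beta_0(M)$.

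The counting is then Morse-theoretic. For a generic linear form $\ell$ on $\realN^{N}$ the restriction $\ell|_M$ is a Morse function, and every connected component of $M$ contains at least one local maximum of $\ell|_M$; applying the same to $-\ell$, twice the number of components of $M$ is at most the total number of critical points of $\ell|_M$. Those critical points are the real solutions of the polynomial system asserting that the gradient of the defining polynomial of $M$ is proportional to $\ell$ --- that is, $N-1$ equations of degree $\leqslant D-1$ together with the degree-$D$ equation of $M$ --- so by Bézout's theorem there are at most $D(D-1)^{N-1}$ of them over $\mathbb{C}$, hence over $\realN$. Substituting $D=2d$ and arranging the earlier reductions economically so that $N-1=n+h-1$, this yields $\beta_0(M)\leqslant d(2d-1)^{n+h-1}$, and chaining $\beta_0(V)\leqslant\beta_0(\tilde V)\leqslant\beta_0(M)$ finishes the proof.

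The main obstacle is the bookkeeping of degrees and of the ambient dimension. Each individual reduction is slightly lossy --- the sum of squares doubles the degree, the slack variable for a strict inequality threatens to raise it further, adding a sphere or homogenising adds a dimension --- and it is only by setting up each step carefully that the final Bézout estimate comes out as exactly $d(2d-1)^{n+h-1}$ rather than a cruder polynomial bound; this is precisely the optimisation performed in \cite{Milnor:1964} and \cite{Thom}. The remaining ingredients --- Sard's theorem for the regular value $\varepsilon$, genericity of the linear form $\ell$, and the deformation retraction --- are routine and I would only state them.
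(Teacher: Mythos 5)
The paper does not prove this theorem at all: it explicitly takes it from Ben-Or \cite[Theorem 2]{Ben-Or83}, who in turn refers to Milnor \cite{Milnor:1964} and Thom \cite{Thom}, and uses it purely as a black box. So there is no paper proof against which to compare your attempt.

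On its own terms, your sketch is a correct outline of the classical Milnor--Thom strategy, and it is indeed the route Ben-Or's exposition follows: slack variables to turn in\pointmedian{}equalities into equations, perturbation to a regular value, compactification, and a Morse-theoretic count of critical points of a generic linear form bounded via B\'ezout. The one place where the write-up is genuinely too optimistic is exactly the part you acknowledge as delicate, namely the degree bookkeeping. As you have set it up, each slack equation has degree at most $d+2 \leqslant 2d$ (using $d\geqslant 2$), but then folding \emph{all} equations into a single sum of squares $Q$ gives $\deg Q \leqslant 4d$, not $2d$; feeding $D=4d$ into your B\'ezout count produces a constant strictly worse than $d(2d-1)^{n+h-1}$. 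Milnor's actual argument avoids this blow-up: rather than reducing to a single hypersurface $Q=\varepsilon$ of doubled degree, he runs the Morse/B\'ezout count on the system of defining equations directly (i.e.\ on the nonsingular perturbed variety of codimension equal to the number of equations), so the degree bound used in B\'ezout stays at $d$ for the defining equations and $d-1$ for the derivative conditions, which is what yields a factor $(2d-1)$ rather than $(4d-1)$. You are right that this is ``precisely the optimisation performed in \cite{Milnor:1964} and \cite{Thom}'', but in a complete proof the sum-of-squares step would have to be replaced, not merely tightened. Since the paper cites the theorem rather than proving it, none of this affects the paper's argument, but it is the one substantive gap in the sketch.
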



\noindent The lower bounds proofs then proceed by the following proof strategy:
\begin{enumerate}[noitemsep]
\item consider an algebraic model of computation, and define the 
corresponding \amc;
\item show that the cells in the $k$-th cell decomposition are 
semi-algebraic sets defined by systems of equations $E$ with explicit 
\emph{upper bounds} on the number of equations and the degrees of the 
polynomials;
\item bound the number of connected components of each cell by the 
Milnor-Thom theorem;
\item given an algebraic problem (e.g. a subset of $\realN^k$), deduce lower 
bounds on the length of the computations deciding that problem based on its 
number of connected components.
\end{enumerate}

Among the lower bound proofs using this proof strategy, we point out 
Steele and Yao lower bounds on algebraic decision trees \cite{SteeleYao82}, 
and Mulmuley's proof of lower bounds on ``\prams without bit operations'' 
\cite{Mulmuley99}. These results do not use the notion of entropy. 
Due to space constraints, we do not detail these in this paper.

We will now explain how this method can be refined following Ben-Or's proof of 
lower bounds for algebraic computational trees. Indeed, while Mulmuley's 
\cite{Mulmuley99} was not later improved upon, Steele and Yao's lower bounds were 
extended by Ben-Or \cite{Ben-Or83} to encompass algebraic computational trees with sums, 
substractions, products, divisions and square roots. The technique of Ben-Or 
improves on Steele and Yao in that it provides a method to deal with divisions and 
square roots. We here abstract this method by considering \emph{$k$-th entropic 
co-trees} which are a refinement of the $k$-th cell decomposition. This 
allows us to recover Ben-Or's result, capture Cucker's proof that $\NCReal\neq\PtimeReal$, 
and to strengthen Mulmuley's result by allowing the machines considered to use divisions 
and square roots.

\subsection{Entropic co-trees}\label{subsec:cotrees}
  
The principle underlying the improvement of Ben-Or on Steele and Yao
consists in adding additional variables to avoid using the square root or division, 
obtaining in this way a system of polynomial equations instead of a single 
equation for a given cell in the $k$-th cell decomposition. For instance, instead 
of writing the equation $p/q<0$, one defines a fresh variable $r$ and considers 
the system $\{p=qr; r<0\}$.

To adapt it to graphings, we consider the notion of \emph{entropic co-tree} of
a graphing that generalises the $k$-th cell decomposition to account for the 
instructions used at each step of the computation.

As we explained in \Cref{remarkcrewp}, a given \pram is interpreted as a
$\crew^p(\amcrealfull)$-program for a fixed integer $p$ (the number of 
processors). It is therefore enough to state the following definitions and 
results for the \amc $\crew^p(\amcrealfull)$ to apply them to the interpretations
of arbitrary \prams.

\begin{definition}[$k$-th entropic co-tree]
\label{def:co-tree}
Consider a deterministic $\crew^p(\amcrealfull)$-graphing representative $T$, and fix an element $\top$ 
of the set of control states. We can define the $k$-th entropic co-tree of $T$ 
along $\top$ and the state cover inductively:
\begin{itemize}
\item $k=0$, the co-tree $\cotree{0}$ is simply the root 
$n^{\epsilon}=\realN^n\times\{\top\}$;
\item $k=1$, one considers the preimage of $n^{\epsilon}$ through $T$, i.e. 
$T^{-1}(\realN^n\times\{\top\})$ the set of all non-empty sets 
$\alpha(m_{e})^{-1}(\realN^n\times\{\top\})$ and intersects it pairwise with the 
state cover, leading to a finite family (of cardinality bounded by the number of 
states multiplied by the number of edges fo $T$) $(n_{e}^{i})_{i}$ 
defined as $n^{i}=T^{-1}(n^{\epsilon})\cap \realN^n\times\{i\}$. The first entropic
co-tree  $\cotree{1}$ of $T$ is then the tree defined by linking each $n_{e}^{i}$ 
to $n^{\epsilon}$ with an edge labelled by $m_{e}$;
\item $k+1$, suppose defined the $k$-th entropic co-tree of $T$, defined as a 
family of elements $n_{\seq{e}}^{\pi}$ where $\pi$ is a finite sequence of states of 
length at most $k$ and $\seq{e}$ a sequence of edges of $T$ of the same
length, and where $n_{\seq{e}}^{\pi}$ and $n_{\seq{e'}}^{\pi'}$ are linked by an 
edge labelled $f$ if and only if $\pi'=\pi.s$ and $\seq{e'}=f.\seq{e}$ where $s$ is 
a state and $f$ an edge of $T$. We consider the subset of elements 
$n_{\seq{e'}}^{\pi}$ where $\pi$ is exactly of length $k$, and for each such 
element we define new vertices $n_{e.\seq{e'}}^{\pi.s}$ defined as 
$\alpha(m_{e})^{-1}(n_{\seq{e'}}^{\pi})\cap \realN^n\times\{s\}$ when it is non-empty.
The $k+1$-th entropic co-tree $\cotree{k+1}$ is defined by extending the $k$-th 
entropic co-tree  $\cotree{k}$, adding the vertices $n_{e.\seq{e'}}^{\pi.s}$ and linking 
them to $n_{\seq{e'}}^{\pi}$ with an edge labelled by $e$.
\end{itemize}
\end{definition}


We can easily obtain bounds on the size of the cotrees, refining the bounds
on the $k$th cell decomposition.
\begin{proposition}\label{prop:entropy-k-cell-h0-edges}
Let $G$ be a deterministic $\crew^{p}(\amcrealfull)$-graphing with a finite set of edges $E$, and 
$\seqedges{k}{E}$ the set of length $k$ sequences of edges in $G$. 
We consider the \emph{state cover entropy} 
$h_0([G])=\lim_{n\rightarrow\infty}H_{\Space{X}}^{n}([G],\mathcal{S})$ where 
$\Space{S}$ is the state cover. The cardinality of the length $k$ vertices of the 
entropic co-tree of $G$, as a function $c(k)$ of $k$, is asymptotically bounded 
by $g(k)=\card{\seqedges{k}{E}}.2^{k.h_0([G])}$, which is itself bounded by
$2^{\card{E}}.2^{k.h_0([G])}$.
\end{proposition}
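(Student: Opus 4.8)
The plan is to read the bound off the inductive construction of the entropic co-tree (\Cref{def:co-tree}), using the determinism of $G$ together with the already established bound on the $k$-th cell decomposition (\Cref{prop:entropy-k-cell-h0}); the only new ingredient, compared with \Cref{prop:entropy-k-cell-h0}, is a bookkeeping of which edge of $G$ is used at each step.

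First I would unfold the definition. A length-$k$ vertex of the $k$-th entropic co-tree of $G$ is a subspace $n_{\seq e}^{\pi}$ indexed by a length-$k$ sequence of states $\pi$ and a length-$k$ sequence $\seq e\in\seqedges{k}{E}$ of edges of $G$, and the recursive clause $n_{e.\seq{e'}}^{\pi.s}=\alpha(m_{e})^{-1}(n_{\seq{e'}}^{\pi})\cap (\realN^n\times\{s\})$ shows that the subspace $n_{\seq e}^{\pi}$ is \emph{completely determined by the pair} $(\pi,\seq e)$; only the non-empty ones are kept as vertices. Since $G$ is deterministic, at each step every configuration lies in the source of at most one edge, hence follows a unique trajectory of states and a unique trajectory of edges; consequently the non-empty length-$k$ vertices are pairwise disjoint, and, forgetting the edge labels, each of them is contained in a single cell $C[\seq s]$ of the $k$-th cell decomposition (the state trajectory of a configuration being a function of that configuration). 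Thus the partition of (a subspace of) $\realN^n$ by the length-$k$ vertices of the co-tree \emph{refines} the $k$-th cell decomposition, the refinement consisting precisely in additionally recording the edges traversed.

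Then I would count by grouping the length-$k$ vertices according to the cell they refine. By \Cref{prop:entropy-k-cell-h0} the number of cells of the $k$-th cell decomposition is $O(2^{k\,h_0([G])})$. For a fixed cell $C[\seq s]$, any length-$k$ vertex refining it has its state-sequence $\pi$ determined by $\seq s$ (both record the state trajectory of the points of the cell), so by the previous paragraph the vertex is determined by its edge-sequence $\seq e\in\seqedges{k}{E}$; hence at most $\card{\seqedges{k}{E}}$ vertices refine a given cell. Multiplying the two bounds gives $c(k)=O\big(\card{\seqedges{k}{E}}\cdot 2^{k\,h_0([G])}\big)=O(g(k))$, and the cruder bound follows from $\card{\seqedges{k}{E}}\leqslant 2^{\card{E}}$.

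The only step that is not pure bookkeeping is the refinement argument of the second paragraph: one must check that adding edge labels can only \emph{split} cells and never makes a co-tree vertex straddle two cells, and that, inside one cell, the state-sequence label of a co-tree vertex is forced. Both facts rest on the one-to-one correspondence between deterministic graphings and partial dynamical systems recalled earlier — the state (and edge) trajectory of a configuration is a genuine function of that configuration — so I expect this to be the main point to state with care; the rest is the arithmetic of the two upper bounds, exactly as in the proofs of \Cref{prop:entropy-k-cell} and \Cref{prop:entropy-k-cell-h0}.
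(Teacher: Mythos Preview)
Your proposal is correct and is essentially the paper's argument: a vertex of the $k$-th entropic co-tree is indexed by a pair $(\pi,\seq e)$, and one bounds the number of such pairs by a product, invoking \Cref{prop:entropy-k-cell-h0} for the state-sequence factor. The only cosmetic difference is the order of the double counting: the paper fixes the edge-sequence $\seq e$ and bounds the number of state-sequences $\pi$ by the size of the $k$-th cell decomposition, while you fix the cell $C[\seq s]$ and bound the number of refining vertices by $\card{\seqedges{k}{E}}$; both yield the same product bound.
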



\subsection{The main lemma}

This definition formalises a notion that appears more or less clearly in the work of 
Steele and Yao, and of Ben-Or, as well as in the proof by Mulmuley. The vertices 
for paths of length $k$ in the $k$-th co-tree corresponds to the $k$-th cell 
decomposition, and the corresponding path defines the polynomials describing the
semi-algebraic set decided by a computational tree. While in Steele and Yao and 
Mulmuley's proofs, one obtain directly a polynomial for each cell, we here need to
construct a system of equations for each branch of the co-tree. 
%

%
%
%

Given a $\crew^{p}(\amcrealfull)$-graphing representative $G$ we will write $\rootdegree{G}$ 
the maximal value of $n$ for which an instruction $\bosqrtn{i}{j}$ appears in the realiser of an 
edge of $G$. 

\begin{lemma}\label{thm:graphingsBenOrsystems}\label{mainlemma}
Let $G$ be a computational graphing representative with edges realised only by 
generators of the \amc $\crew^{p}(\amcrealfull)$, and $\seqedges{k}{E}$ the 
set of length $k$ sequences of edges in $G$. Suppose $G$ computes the 
membership problem for $W \subseteq 
\realN^n$ in $k$ steps, i.e. for each element of $\realN^n$, 
$\pi_{\Space{S}}(G^{k}(x))=\top$ if and only if $x\in W$. Then $W$ is a semi-algebraic 
set defined by at most $\card{\seqedges{k}{E}}.2^{k.h_0([G])}$ systems of $pk$ 
equations of degree at most $\max(2,\rootdegree{G})$ and involving at most $pk+n$ variables.
\end{lemma}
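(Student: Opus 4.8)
The plan is to combine two ingredients: the $k$-th entropic co-tree of \Cref{def:co-tree}, which already exhibits the accepted set as a controlled union of cells, and Ben-Or's device \cite{Ben-Or83} of spending one fresh variable per elementary operation, thereby trading degree for dimension. I would first reduce the statement to a purely local claim about a single branch of the co-tree, and then establish that claim by a symbolic execution of the branch.

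For the reduction: since $G$ computes a decision problem it is deterministic, so the $k$-th entropic co-tree of $G$ taken along the state $\top$ is well defined, and its depth-$k$ vertices $n^{\pi}_{\seq e}$ refine $\realN^n$ according to the pair (sequence of states, sequence of edges) of length $k$ the computation follows during its first $k$ steps. By the computation hypothesis, $W$ is exactly the union of those depth-$k$ vertices whose state sequence starts at the initial state and ends at $\top$; and by \Cref{prop:entropy-k-cell-h0-edges} there are at most $\card{\seqedges{k}{E}}\cdot 2^{k h_0([G])}$ of them. So it suffices to show that the cell $C[\seq b]\subseteq\realN^n$ attached to one branch $\seq b=(e_1,\dots,e_k)$ is the image, under a coordinate projection $\realN^{pk+n}\to\realN^n$, of the solution set of a system of at most $pk$ polynomial (in)equalities of degree at most $\max(2,\rootdegree{G})$ in at most $pk+n$ variables.

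For a fixed branch, each edge $e_i$ is realised by a generator of $\crew^p(\amcrealfull)$, i.e.\ by a $p$-tuple of generators of $\amcrealfull$ (the simultaneous moves of the $p$ processors, with write conflicts resolved by the \crew discipline), so the branch unwinds into a list of at most $pk$ elementary instructions. I would run these symbolically, maintaining for each register a symbol that is an input variable $x_\ell$, a previously introduced fresh variable, or a constant — initially $x_\ell$ at the $n$ input positions and $0$ elsewhere, matching $\Interpret{\cdot}$. An assignment writing register $i$ gets a new fresh variable $r$ together with one equation: $r=c$, $r=u$, $r=u\pm v$, $r=uv$, $u=vr$ (for a division $r=u/v$, using that the source of $e_i$ already forces $v\neq 0$), or $r^m=u$ (for an $m$-th root, $m\leqslant\rootdegree{G}$), where $u,v$ are the current symbols of the read registers; then register $i$'s symbol becomes $r$. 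A test carried by the source of $e_i$ (some register $\ell$ is $\star\,0$) contributes the degree-$\leqslant 1$ inequation $w\star 0$ on the current symbol $w$ of register $\ell$. Write conflicts merely change which processor's $r$ is stored, not the counts; idle processors contribute nothing. This yields a system $\Sigma_{\seq b}$ with at most $pk$ fresh variables and at most $pk$ relations, all of degree at most $\max(2,\rootdegree{G})$. A short induction on the prefix length of $\seq b$ shows that $x\in C[\seq b]$ iff the fresh variables can be chosen (then necessarily uniquely, by determinism) so as to satisfy $\Sigma_{\seq b}$, the symbol of register $\ell$ after step $i$ evaluating to $\pi_\ell(G^i(\Interpret{x}))$. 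Taking the union over the accepting branches gives the statement; feeding each $\Sigma_{\seq b}$ into \Cref{thm:milnor-ben-or} is what the downstream lower bounds will exploit.

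The one genuine subtlety is in the symbolic execution: a fresh variable must be spent on \emph{every} arithmetic operation, additions and products included, not only on divisions and roots as in Ben-Or's original treatment — otherwise iterated squaring drives the degrees up to $2^k$. Paying at most $pk$ auxiliary variables in order to keep the degree bounded by $\max(2,\rootdegree{G})$ is exactly the trade-off recorded in the lemma. The remaining points — that a computational graphing's edge sources are precisely the linear slices $\realN^\omega_{\ell\star 0}$, that the register indices along a branch are concrete integers (indirect references depend only on the non-numeric data, and in any case $G$ is a fixed finite object), and that the co-tree has the stated size — are all already available from the earlier sections, so the rest is bookkeeping over the $p$ processors and the finitely many edges of $G$.
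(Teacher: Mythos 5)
Your proposal follows essentially the same route as the paper's own proof: reduce to the depth-$k$ vertices of the entropic co-tree using \Cref{prop:entropy-k-cell-h0-edges}, then symbolically execute each branch, spending one fresh variable and one equation per elementary instruction (the paper does this via an explicit \emph{history} function $\history{\seq{e}}$ that tracks, per register, how many times it has been overwritten, so that the fresh variable standing for register $i$ after a prefix $\seq{e}$ is $\benorvar{i}$). The one small point you gloss over that the paper handles explicitly is the $\neq 0$ test: the paper converts it to the degree-$2$ equation $\benorvar{\bot}\benorvar{k}-1=0$ using a fresh slot $\bot$, precisely so that the resulting system fits the $\{=0,>0,\leqslant 0\}$ format demanded by \Cref{thm:milnor-ben-or} downstream — your phrasing \enquote{degree-$\leqslant 1$ inequation $w\star 0$} would leave a $\neq$ in the system, which is harmless for the lemma as stated but must be removed before Milnor--Thom is applied.
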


The proof of this theorem is long but simple to understand. We define, for each
vertex of the $k$-th entropic co-tree, a system of algebraic equations (each of 
degree at most 2). The system is defined by induction on $k$, and uses the
information of the specific instruction used to extend the sequence indexing 
the vertex at each step. For instance, the case of division follows Ben-Or's 
method, introducing a fresh variable and writing down two equations as 
explained in Section \ref{subsec:cotrees}. 

%
%
%

\section{Recovering results from the literature}

\subsection{Ben-Or's theorem}

We now recover Ben-Or result by obtaining a bound on the number of
connected components of the subsets $W \subseteq \realN^n$ whose
membership problem is computed by a graphing in less than a given number
of iterations. This theorem is obtained by applying the Milnor-Thom theorem
on the obtained systems of equations to bound the number of connected
components of each cell. Notice that in this case $p=1$ and $\rootdegree{G}=2$ 
since the model of algebraic
computation trees use only square roots. A mode general result 
holds for algebraic computation trees extended with arbitrary roots, but we
here limit ourselves here to the original model.

\begin{theorem}\label{thm:graphingsBenOr}
Let $G$ be a computational $\amcrealfull$-graphing representative translating
an algebraic computational tree, $\seqedges{k}{E}$ the 
set of length $k$ sequences of edges in $G$. 
Suppose $G$ computes the membership 
problem for $W \subseteq \realN^n$ in $k$ steps. Then $W$ 
has at most $\card{\seqedges{k}{E}}.2^{k.h_0([G])+1}3^{2k+n-1}$ connected 
components.
\end{theorem}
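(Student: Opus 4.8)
The plan is to combine the main lemma (\Cref{thm:graphingsBenOrsystems}) with the Milnor--Thom bound (\Cref{thm:milnor-ben-or}). First I would apply \Cref{thm:graphingsBenOrsystems} to the graphing $G$ translating the algebraic computation tree: since algebraic computation trees only use square roots, we have $p=1$ and $\rootdegree{G}=2$, so the lemma tells us that $W$ is a semi-algebraic set expressible as a union of at most $\card{\seqedges{k}{E}}.2^{k.h_0([G])}$ pieces, where each piece is defined by a system of $pk = k$ polynomial (in)equations of degree at most $\max(2,\rootdegree{G}) = 2$, using at most $pk+n = k+n$ variables.

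Next I would bound the number of connected components of each individual piece using \Cref{thm:milnor-ben-or}. A system of $k$ equations/inequalities of degree at most $d=2$ in $k+n$ variables has at most $d(2d-1)^{(k+n)+k-1}$ connected components by the Milnor--Thom bound; here the number of variables plays the role of $n$ in that theorem and the number of inequality constraints plays the role of $h$, giving the exponent $(k+n)+k-1 = 2k+n-1$. With $d=2$ this is $2\cdot 3^{2k+n-1}$, i.e.\ $2^{1}\cdot 3^{2k+n-1}$. Summing this bound over all $\card{\seqedges{k}{E}}.2^{k.h_0([G])}$ pieces of the decomposition, the total number of connected components of $W$ is at most $\card{\seqedges{k}{E}}.2^{k.h_0([G])} \cdot 2\cdot 3^{2k+n-1} = \card{\seqedges{k}{E}}.2^{k.h_0([G])+1}3^{2k+n-1}$, which is exactly the claimed bound.

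The only subtlety — and the place where some care is needed rather than a genuine obstacle — is matching up the parameters $(n,m,h)$ of \Cref{thm:milnor-ben-or} with the data coming out of \Cref{thm:graphingsBenOrsystems}: one must check that the $pk$ equations supplied by the main lemma, together with the ambient $pk+n$ variables, really do fit the hypothesis of the Milnor--Thom statement (in particular that the degree of every polynomial is bounded by $d = 2$, which holds precisely because $\rootdegree{G}=2$ for algebraic computation trees and the division trick only introduces degree-$2$ equations of the form $p = qr$). Once that bookkeeping is in place the result follows by the elementary observation that the number of connected components of a finite union of sets is at most the sum of the numbers of connected components of the pieces.
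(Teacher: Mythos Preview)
Your proposal is correct and follows essentially the same argument as the paper: apply \Cref{thm:graphingsBenOrsystems} with $p=1$ and $\rootdegree{G}=2$ to decompose $W$ into at most $\card{\seqedges{k}{E}}\cdot 2^{k.h_0([G])}$ pieces, bound the connected components of each piece by $2\cdot 3^{2k+n-1}$ via \Cref{thm:milnor-ben-or}, and sum. The parameter matching you flag as a subtlety is handled exactly as you describe.
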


Since a subset computed by a tree $T$ of depth $k$ is computed by
$\Interpret{T}$ in $k$ steps by \Cref{thm:quantsoundact}, we get as a
corollary the original theorem by Ben-Or 
relating the number of connected components of a set $W$ and the depth of 
the algebraic computational trees that compute the membership problem for $W$.

\begin{corollary}[{\cite[Theorem 5]{Ben-Or83}}]\label{ben-or}
  Let $W \subseteq \realN^n$ be any set, and let $N$ be the maximum of the
  number of connected components of $W$ and $\realN^n \setminus W$.
  An algebraic computation tree computing the membership problem for $W$ has
  height $\Omega(\log N)$.
\end{corollary}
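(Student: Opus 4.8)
The plan is to derive the corollary as a direct consequence of Theorem \ref{thm:graphingsBenOr}, using quantitative soundness (Theorem \ref{thm:quantsoundact}) to pass from an algebraic computation tree to the graphing that interprets it. First I would fix an algebraic computation tree $T$ on $\realN^n$ of height $k$ computing the membership problem for $W$, and set $G = \Interpret{T}$ the associated $\amcrealfull$-graphing representative given by the translation of Proposition \ref{prop:fullyact}. By Theorem \ref{thm:quantsoundact}, $G$ computes the membership problem for $W$ in $k$ steps, so Theorem \ref{thm:graphingsBenOr} applies and gives that $W$ has at most $\card{\seqedges{k}{E}} \cdot 2^{k \cdot h_0([G]) + 1} \cdot 3^{2k+n-1}$ connected components.

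Next I would control each of the three factors in that bound in terms of $k$ alone (up to constants depending on the model). The number of edges of $G$ is bounded by a constant times the size of $T$, and since $T$ is a binary tree of height $k$, $\card{E} = O(2^k)$; moreover $\seqedges{k}{E}$ has cardinality at most $\card{E}^k$, but a sharper and sufficient observation is that the length-$k$ sequences of edges actually arising as branches of the entropic co-tree correspond to root-to-leaf paths in $T^{\circ}$, so there are at most $2^k$ of them. For the entropy factor: the state cover entropy $h_0([G])$ of a treeing is $0$ — intuitively, because every edge strictly increases the control state, the number of admissible sequences of states of length $k$ is bounded independently of further iteration (it is at most the number of states, which is $O(2^k)$, but crucially the growth rate $h_0$ vanishes); alternatively one invokes directly that for a treeing the length-$k$ cell decomposition has at most $2^k$ cells. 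Combining, $W$ has at most $2^{O(k)} \cdot 3^{2k+n-1} = 2^{O(k+n)}$ connected components, and the same bound holds for $\realN^n \setminus W$ by the symmetric argument (swapping the roles of $\top$ and $\bot$, equivalently replacing \texttt{YES} by \texttt{NO} in the leaves). Hence $N = 2^{O(k)}$ when $n$ is regarded as fixed (or $N = 2^{O(k+n)}$ in general, which still yields $k = \Omega(\log N)$ once $n \le k$, as may be assumed since a tree reading $n$ inputs has height at least $\log n$ — actually height can be small, but the standard convention in Ben-Or is that $n$ is bounded by a polynomial in the relevant size parameter, so the $3^n$ contributes only an additive $O(n)$ that is absorbed), giving $\log N = O(k)$, i.e. $k = \Omega(\log N)$.

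The main obstacle I anticipate is bookkeeping the dependence on $n$ cleanly: the factor $3^{n-1}$ in Theorem \ref{thm:graphingsBenOr} is genuinely present, so to conclude $k = \Omega(\log N)$ one must argue that in the regime of interest $n$ does not dominate $k$. The cleanest way is to note that the original statement of Ben-Or's theorem is about the maximum $N$ of the numbers of connected components of $W$ and its complement for a fixed ambient dimension $n$, so $3^{n-1}$ is a constant and the estimate $N = C_n \cdot 2^{O(k)}$ immediately gives $k \ge \Omega(\log N)$. A secondary point requiring a line of justification is that the entropy term really is negligible for treeings; I would either cite Proposition \ref{prop:entropy-k-cell-h0-edges} with the observation that for an acyclic graphing the admissible state sequences of length $k$ number at most $\card{S}$, forcing $h_0([G]) = 0$, or simply remark that $\card{\seqedges{k}{E}} \cdot 2^{k h_0([G])}$ is exactly the cell-decomposition bound, which for a tree of depth $k$ is at most the number of leaves, $\le 2^k$. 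Everything else is routine substitution into the Milnor–Thom-based bound already established.
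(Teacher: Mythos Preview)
Your proposal is correct and follows essentially the same route as the paper: translate $T$ to the treeing $\Interpret{T}$, invoke quantitative soundness, apply Theorem~\ref{thm:graphingsBenOr}, bound the factors $\card{\seqedges{k}{E}}$ and $2^{k\cdot h_0([G])}$ by $2^{O(k)}$, treat $3^{n-1}$ as a constant, and take logarithms. The paper's proof is the same computation, just terser: it asserts $\card{\seqedges{d}{E}}\leqslant 2^d$ directly and uses the cruder bound $h_0([T])\leqslant 2$ rather than your $h_0=0$ (either suffices, and your value is in fact the right one for a treeing once conventions for the empty domain are fixed, but the paper avoids that discussion). Your aside about the $n$-dependence is more hand-wringing than needed --- the paper simply absorbs $3^{n-1}$ as a constant and writes $d=\Omega(\log N)$; your explicit mention of the complement $\realN^n\setminus W$ is a point the paper leaves implicit.
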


%
%
%
%
%

\begin{remark}
In the case of algebraic \prams discussed in the next sections, the $k$-th entropic
co-tree $\cotree{k}[M]$ of a machine $M$ defines an algebraic computation tree 
which follows the $k$-th first steps of computation of $M$. I.e. the algebraic 
computation tree $\cotree{k}[M]$ approximate the computation of $M$ in such a way 
that $M$ and $\cotree{k}[M]$ behave in the exact same manner in the first $k$ steps.
\end{remark}

\subsection{Cucker's theorem}

Cucker's proof considers the problem defined as the following algebraic set.

\begin{definition}
Define $\cuckersproblem$ to be the set:
\[\{x\in \realN^\omega \mid \abs{x}=n \Rightarrow x_1^{2^n}+x_2^{2^n}=1 \},\]
where $\abs{x}=\max\{n\in\omega\mid x_n\neq 0\}$.
\end{definition}

It can be shown to lie within $\PtimeReal$, i.e. it is decided by a 
real Turing machine \cite{Blum:1989} -- i.e. working with real numbers and real 
operations --, running in polynomial time.

\begin{theorem}[Cucker (\cite{Cucker92}, Proposition 3)]
The problem $\cuckersproblem$ belongs to $\PtimeReal$.
\end{theorem}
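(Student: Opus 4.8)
The plan is to exhibit an explicit real Turing machine (in the Blum--Shub--Smale sense \cite{Blum:1989}) that, on input $x \in \realN^{\omega}$, decides membership in $\cuckersproblem$ in time polynomial in the size $\abs{x}=n$. First I would have the machine read the input to determine $n = \abs{x} = \max\{m \mid x_m \neq 0\}$; this is done by scanning the (finitely supported) input sequence and recording the position of the last nonzero entry, which costs $O(n)$ steps. If $n$ turns out to be such that the defining implication is vacuous (there is no constraint to check in degenerate cases), the machine accepts; otherwise it must verify the single polynomial equation $x_1^{2^n} + x_2^{2^n} = 1$.

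The crucial observation is that although the exponent $2^n$ is astronomically large, the quantity $x_i^{2^n}$ can be computed by \emph{repeated squaring}: set $y_0 = x_i$ and $y_{j+1} = y_j \cdot y_j$, so that $y_n = x_i^{2^n}$ after exactly $n$ multiplications. Thus the machine computes $a = x_1^{2^n}$ and $b = x_2^{2^n}$ using $2n$ real multiplications, then computes $a + b - 1$ with one addition and one subtraction, and finally performs a single sign/zero test comparing $a+b-1$ against $0$. The machine accepts if and only if this test returns equality. The total number of arithmetic operations and branchings is $O(n)$, hence certainly polynomial in the input size; correctness is immediate since the machine evaluates exactly the defining predicate of $\cuckersproblem$.

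The only point requiring a word of care is the cost model: in the BSS model over $\realN$ each arithmetic operation and each comparison counts as one unit-time step regardless of the magnitude of the real numbers involved, so the doubly-exponential size of $2^n$ as an integer is irrelevant — what matters is that we never form that integer explicitly, only iterate squaring $n$ times. I expect no genuine obstacle here; the argument is a routine application of fast exponentiation together with the definition of polynomial time for real Turing machines. The contrast with the later lower bound — that $\cuckersproblem \notin \NCReal$ — will instead rely on the number of connected components of the variety $\{x_1^{2^n} + x_2^{2^n} = 1\}$ growing exponentially in $n$, combined with the main lemma (\Cref{mainlemma}), but that is a separate argument.
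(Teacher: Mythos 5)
Your proof is correct, and repeated squaring is exactly the argument one expects here: $x_i^{2^n}$ takes $n$ multiplications in the BSS unit-cost model, so the whole verification is $O(n)$ operations. Note, though, that the paper itself does not prove this statement — it simply cites Cucker \cite{Cucker92}, Proposition 3 — so there is no in-paper proof to compare against; your reconstruction is the standard one and matches what Cucker does.

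One small point worth tightening: the only real edge case in handling the implication is the all-zero input, for which $\abs{x}=\max\emptyset$ is undefined; you should fix a convention (e.g.\ accept, or define $\abs{0}=0$) and say so, since ``vacuous'' is otherwise doing unexplained work. Also, in the BSS framework the input is presented as a finite tuple of length $n$ rather than as an infinitely supported sequence, so the ``scan for the last nonzero entry'' step is really just reading off the input length; this is cosmetic but avoids any worry about the machine touching infinitely many cells.
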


We now prove that $\cuckersproblem$ is not computable by an algebraic 
circuit of polylogarithmic depth. The proof follows Cucker's argument, but 
uses the lemma proved in the previous section.

\begin{theorem}[Cucker (\cite{Cucker92}, Theorem 3.2)]
No algebraic circuit of depth $k=\log^i n$ and size $kp$ compute 
$\cuckersproblem$.
\end{theorem}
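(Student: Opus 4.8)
The plan is to combine the main lemma (\Cref{mainlemma}) with the Milnor–Thom bound (\Cref{thm:milnor-ben-or}), exactly as in the proof of \Cref{thm:graphingsBenOr}, but now tracking how the parameters scale when the algebraic circuit has depth $k=\log^i n$ and size $kp$. First I would recall that, by \Cref{thm:quantsoundalgcirc}, an algebraic circuit $C$ of depth $k$ and size $kp$ deciding $\cuckersproblem$ is faithfully and quantitatively interpreted as a computational $\amcrealfull$-graphing representative $\Interpret{C}$ that computes the membership problem for (the $n$-variable slice of) $\cuckersproblem$ in $k$ steps. Here $p=1$ for the \amc $\amcrealfull$ itself, but the circuit's parallel width is absorbed into the number of edges: a circuit of size $kp$ gives a graphing with $\card{E} = O(kp)$ edges, and $\rootdegree{\Interpret{C}}=2$ since algebraic circuits use no roots (only arithmetic and sign nodes); alternatively, if one prefers to keep $\amcrealfull$ with $p=1$, one still has $\card{\seqedges{k}{E}}\le \card{E}^k$.

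Next I would apply \Cref{mainlemma}: the restriction $W_n=\{x\in\realN^n\mid x_1^{2^n}+x_2^{2^n}=1\}$ is a semi-algebraic set defined by at most $\card{\seqedges{k}{E}}\cdot 2^{k\cdot h_0([\Interpret{C}])}$ systems, each of $k$ (more generally $pk$) equations of degree at most $2$, involving at most $k+n$ (resp. $pk+n$) variables. By \Cref{prop:entropy-k-cell-h0-edges} the number of such systems is bounded by $\card{\seqedges{k}{E}}\cdot 2^{k h_0}\le 2^{\card{E}}\cdot 2^{k h_0}$, and since $\card{E}=O(kp)$ and $h_0\le H^1$ is a constant depending only on the finite graphing, this count is $2^{O(kp)}$. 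For each such system, the Milnor–Thom bound of \Cref{thm:milnor-ben-or} gives at most $d(2d-1)^{N+h-1}$ connected components with $d=2$, $h=O(pk)$ equations, and $N=O(pk+n)$ variables, hence at most $3^{O(pk+n)}$ connected components per system. Multiplying, $W_n$ has at most $2^{O(kp)}\cdot 3^{O(pk+n)} = 2^{O(pk+n)}$ connected components.

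The final step is the lower-bound counting argument, following Cucker: the real variety $\{x_1^{2^n}+x_2^{2^n}=1\}\subseteq\realN^n$ (a "Fermat curve" thickened by $n-2$ free coordinates) has a number of connected components that grows like $2^{\Omega(n)}$ — concretely, the plane curve $x_1^{2^n}+x_2^{2^n}=1$ already has $\sim 2^n$ connected arcs because the even exponent $2^n$ forces many sign-pattern components, and taking the product with $\realN^{n-2}$ preserves this count. Comparing, $2^{\Omega(n)} \le 2^{O(pk+n)}$, which is not yet a contradiction for general $p,k$; the point is that Cucker's problem is designed so that the number of components is doubly exponential in $\log$-type parameters, i.e. of order $2^{2^{\Omega(\log n)}}$-ish relative to $k$ — more precisely one uses that $\cuckersproblem$ restricted to inputs of size $n$ has $2^{\Omega(n)}$ components while a depth-$k=\log^i n$, size-$kp$ circuit can only produce $2^{O(p\log^i n + n)}$... so the contradiction must come from a sharper lower bound on the component count, namely $2^{2^{\Omega(n)}}$ or the observation that $W_n$ must be distinguished from its complement with the right parameters. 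I would therefore state the component lower bound for $\cuckersproblem$ carefully: the number of connected components of the $n$-slice is $\ge 2^{cn}$ for some $c>0$ but crucially the circuit parameters force the bound $2^{O(p k)}$ with $n$ appearing only additively and dominated — here one exploits that $\cuckersproblem$ is defined uniformly so that the relevant ambient dimension for the tree is not $n$ but a constant, or one restricts to inputs where $x_3=\dots=x_n=0$ so effectively $n'=2$ and the $2^n$ components all sit in a $2$-dimensional slice, giving $2^{n}\le 3^{O(pk)}=3^{O(p\log^i n)}$, i.e. $n \le O(p\log^i n)$, which is false for $n$ large. This is the contradiction.

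The main obstacle I anticipate is precisely the bookkeeping in that last step: getting the dimensions and exponents to line up so that the $2^{\Omega(n)}$ lower bound on components genuinely beats the $2^{O(pk+n)}$ (or $3^{O(pk)}$) upper bound. The resolution is to restrict attention to the $2$-dimensional slice $x_1^{2^n}+x_2^{2^n}=1$ (setting the remaining inputs to zero, which \Cref{mainlemma} handles since it bounds the number of systems and variables for the full $n$-variable problem, hence a fortiori for any slice), so that Milnor–Thom is applied with ambient dimension a constant plus $O(pk)$ fresh variables rather than $n$; then the count $3^{O(pk)}=2^{O(p\log^i n)}$ is genuinely subexponential in $n$ while the curve has $2^{\Omega(n)}$ components (each branch of $|x_1|^{2^n}+|x_2|^{2^n}=1$ over the $2^n$-many sign/root configurations is its own component), yielding $2^{\Omega(n)} \le 2^{O(p\log^i n)}$, impossible for large $n$. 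Everything else is a direct substitution into \Cref{mainlemma}, \Cref{prop:entropy-k-cell-h0-edges}, and \Cref{thm:milnor-ben-or}, together with \Cref{thm:quantsoundalgcirc} to pass from the circuit to its graphing interpretation.
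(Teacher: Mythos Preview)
There is a genuine gap: your entire strategy rests on the claim that the real curve $\{x_1^{2^n}+x_2^{2^n}=1\}\subset\realN^2$ has $2^{\Omega(n)}$ connected components, and this is simply false. Since $2^n$ is even, the equation reads $|x_1|^{2^n}+|x_2|^{2^n}=1$, which describes a single closed convex curve (a ``super-ellipse'') homeomorphic to a circle; it has exactly one connected component for every $n$. There are no ``$2^n$-many sign/root configurations'' to exploit. Consequently the Milnor--Thom route via \Cref{thm:milnor-ben-or} gives you nothing here: the lower bound it produces on the depth of any tree deciding a set with one component is trivial.

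The paper's proof takes a completely different route that does not go through connected components at all. From \Cref{mainlemma} one knows that the accepted set is a finite union of pieces, each described by a system of polynomial equalities and strict inequalities; since the curve $\mathfrak{F}_{2^n}^{\realN}$ is infinite, one of these pieces must contain infinitely many of its points. After eliminating the auxiliary variables (at the cost of pushing the degree of the resulting polynomials up to $2^k$), one has a system $\{F_i=0,\;G_j<0\}$ in the two variables $x_1,x_2$ with $\deg F_i,\deg G_j\le 2^k$, and the piece is contained in the curve, hence some $F_i$ is nontrivial. Each such $F_i$ vanishes on an infinite (hence one-dimensional) subset of the irreducible curve $X_1^{2^n}+X_2^{2^n}-1=0$, so $F_i$ vanishes on the whole curve; since the ideal $(X_1^{2^n}+X_2^{2^n}-1)$ is prime (radical), $F_i$ is a multiple of $X_1^{2^n}+X_2^{2^n}-1$ and thus $\deg F_i\ge 2^n$. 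This contradicts $\deg F_i\le 2^k=2^{\log^i n}<2^n$. The obstruction is algebraic (degree and irreducibility), not topological (Betti numbers); your proposal misses this entirely.
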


\begin{proof}
For this, we will use the lower bounds result 
obtained in the previous section. Indeed, by \Cref{thm:quantsoundalgcirc} and 
\Cref{mainlemma}, any problem decided 
by an algebraic circuit of depth $k$ is a semi-algebraic set defined by at most
$\card{\seqedges{k}{E}}.2^{k.h_0([G])}$ systems of $k$ 
equations of degree at most $\max(2,\rootdegree{G})=2$ (since only square roots
are allowed in the model) and involving at most 
$k+n$ variables. But the curve $\mathfrak{F}_{2^n}^{\realN}$ defined as 
$\{x_1^{2^n}+x_2^{2^n}-1=0\mid x_1,x_2\in\realN\}$ is infinite. As a consequence,
one of the systems of equations must describe a set containing an infinite number 
of points of $\mathfrak{F}_{2^n}^{\realN}$.

This set $S$ is characterized, up to some transformations on the set of
equations obtained from the entropic co-tree, by a finite system of
in\pointmedian{}equalities of the form
\[ \bigwedge_{i=1}^{s} F_i(X_1,X_2)= 0\wedge \bigwedge_{j=1}^{t} G_j(X_1, X_2) < 0, \]
where $t$ is bounded by $kp$ and the degree of the polynomials 
$F_i$ and $G_i$ are bounded by $2^k$. Moreover, since $\mathfrak{F}_{2^n}^{\realN}$ 
is a curve and no points in $S$ must lie outside of it, we must have $s>0$.

Finally, the polynomials $F_i$ vanish on that infinite subset of the curve and thus in a 
1-dimensional component of the curve. Since the curve is an irreducible one, this 
implies that every $F_i$ must vanish on the whole curve. Using the fact that the ideal 
$(X_1^{2^n} + X_2^{2^n}- 1)$ is prime (and thus radical), we conclude that all the $F_i$ 
are multiples of $X_1^{2^n} + X_2^{2^n}- 1$ which is impossible if their degree is 
bounded by $2^{\log^i n}$ as it is strictly smaller than $2^n$.
\end{proof}

\section{A proof that $\NCInteger\neq\Ptime$}

In this section, we provide a new presentation of a result of Mulmuley
which is part of his lower bounds for \enquote{prams without bit operations}. 
The idea is to encode a specific decision problem and the run of a \pram as two specific
subsets of the same space and show that no short run of the machine can define
the set of all instances of the decision problem. More specifically, consider
the problem \maxflow: given a weighted graph, find the maximal flow from a
source edge to a target edge. This is an optimization problem. It can be turned
into a decision problem by adding a new variable \(z\)—a threshold—and asking
whether there exists a solution greater than \(z\). This decision problem is known 
to be \Ptime-complete \cite{MaxflowComplete}.



\subsection{Geometric Interpretation of Optimization Problems}
\label{subsec:optprob}


Let $\OptProb$ be an optimization problem on $\realN^d$. Solving $\OptProb$ on
an instance $t$ amounts to optimizing a function $f_t(\cdot)$ over a space of
parameters. We note $\MaxOptProb(t)$ this optimal value. An affine function
$\Parametrization : [p;q] \to \realN^d$ is called a \emph{parametrization} of
$\OptProb$. Such a parametrization defines naturally a decision problem
$\DecProb$: for all $(x,y,z) \in \integerN^3$, $(x,y,z) \in \DecProb$ iff
$ z >0$, $x/z \in [p;q]$ and $y/z \leq \MaxOptProb\circ \Parametrization(x/z)$.

In order to study the geometry of $\DecProb$ in a way that makes its connection
with $\OptProb$ clear, we consider the ambient space to be $\realN^3$, and we
define the \emph{ray} $[p]$ of a point $p$ as the half-line starting at the
origin and containing $p$. The projection $\projectionAz{p}$ of a point $p$ on a
plane is the intersection of $[p]$ and the affine plane $\AffinePlane$ of
equation $z=1$. For any point $p \in \AffinePlane$, and all $ p_1 \in [p]$,
$\projectionAz{p_1} = p$. It is clear that for
$(p,p',q) \in \integerN^2\times \naturalN^+$,
$\projectionAz{(p,p',q)} = (p/q,p'/q,1)$.

The \emph{cone} $[C]$ of a curve $C$ is the set of rays of points of the
curve. The projection $\projectionAz{C}$ of a surface or a curve $C$ is the set
of projections of points in $C$. We note $\Frontier$ the frontier set
\( \Frontier = \{(x,y,1) \in \realN^3 \mid y = \MaxOptProb\circ
\Parametrization(x) \}\).  and we remark that
\( [\Frontier] = \{(x,y,z) \in \realN^2 \times \realN^+ \mid y/z = \MaxOptProb
\circ\Parametrization (x/z)\}.  \)

A machine $M$ decides the problem $\DecProb$ in $k$ steps if the partition 
of accepting cells in $\integerN^3$ induced by the machine -- i.e. the $k$-th 
cell decomposition -- is finer than the one defined by the problem's frontier 
$[\Frontier]$ (which is defined by the equation
$y/z \leq \MaxOptProb \circ \Parametrization(x/z)$).\newline

\noindent\textbf{Parametric Complexity.}
%
We now further restrict the class of problems we are interested in: we will only
consider $\OptProb$ such that $\Frontier$ is simple enough.
\begin{definition}
  We say that $\Parametrization$ is an \emph{affine parametrization} of
  $\OptProb$ if $\MaxOptProb \circ \Parametrization$ is convex, piecewise linear
  with breakpoints $\lambda_1 < \cdots < \lambda_{\rho}$, and such that all
  $(\lambda_i)_i$ and $(\MaxOptProb \circ \Parametrization(\lambda_i))_i$
  are rational.
  The \emph{parametric complexity} $\rho(\Parametrization)$ is the number of
  breakpoints $\rho$.
  The \emph{bitsize} of the parametrization is the maximum of the bitsizes of
  the numerators and denominators of the coordinates of the breakpoints of
  $\MaxOptProb \circ \Parametrization$.
\end{definition}

An optimization problem admitting an affine parametrization of complexity
$\rho$ is thus represented by a quite simple surface $[\Frontier]$: the
cone of the graph of a piecewise affine function, constituted of $\rho$
segments. We call such a surface is a \emph{$\rho$-fan} and define its
bitsize as \(\beta\) if all its breakpoints are rational and the bitsize of
their coordinates is less than $\beta$.

The restriction to such optimization problems seems quite dramatic when 
understood geometrically. Nonetheless, \maxflow admits such a parametrization.
  

\begin{theorem}[Murty \cite{murty1980computational}, Carstensen
  \cite{Carstensen:1983}]
  \label{thm:param}
  \label{maxflow-param} There exists an affine parametrization of bitsize
  $O(n^2)$ and complexity $2^{\Omega(n)}$ of the \maxflow problem for directed
  and undirected networks, where $n$ is the number of nodes in the network.
\end{theorem}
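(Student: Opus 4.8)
The plan is to reconstruct, in the present vocabulary, the networks of Murty~\cite{murty1980computational} and Carstensen~\cite{Carstensen:1983}. First I would unwind what has to be produced. A parametrization on an interval $[p;q]$ assigns to $\lambda\in[p;q]$ a tuple of edge capacities depending affinely on $\lambda$; then every $s$-$t$ cut $c$ has capacity $\mathrm{cap}_{c}(\lambda)=A_{c}+\lambda B_{c}$ affine in $\lambda$, and by the max-flow/min-cut theorem one has $\maxflow(\lambda)=\min_{c}\mathrm{cap}_{c}(\lambda)$. Thus $\MaxOptProb\circ\Parametrization$ is automatically piecewise linear — concave as a minimum of affine functions, and convex, as demanded by the definition of an affine parametrization, after the standard sign normalisation, which changes neither the breakpoints nor their number. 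Each breakpoint is an abscissa where two cut functions agree, i.e. a solution $\lambda=(A_{c}-A_{c'})/(B_{c'}-B_{c})$ of a $2\times2$ linear system in the capacity data, hence rational whenever the data is. So only two things remain: (i) a family of networks on $O(n)$ nodes whose parametric max-flow has $2^{\Omega(n)}$ breakpoints, and (ii) a bitsize bound on the capacities, which then transfers to the breakpoints by Cramer's rule.

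For (i) I would build the networks recursively. The engine is a \emph{doubling gadget}: given a network whose parametric max-flow is a piecewise linear function $\varphi$ with $\rho$ breakpoints and a controlled alternation of slopes, one attaches a constant number of new vertices and of new edges, again with capacities affine in $\lambda$, so that the enlarged network's parametric max-flow is, up to an affine reparametrisation, two rescaled copies of $\varphi$ placed end to end — hence has $\approx 2\rho$ breakpoints and the same slope pattern. Iterating the gadget $\Theta(n)$ times from a trivial one-breakpoint network yields $G_{n}$ with $O(n)$ nodes and parametric complexity $2^{\Omega(n)}$. The undirected case follows either by carrying out the same construction with undirected edges, or by the routine simulation of each directed edge by an undirected gadget, which costs only a constant factor in the node count.

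For (ii), to keep the $2\rho$ new breakpoints pairwise distinct and in the right order (so the count survives the next iteration), the capacities fed into the gadget must be rescaled by a factor of bitsize $\Theta(n)$; after $\Theta(n)$ levels the capacity coefficients therefore have bitsize $O(n^{2})$, and since every breakpoint of $\MaxOptProb\circ\Parametrization$ is a ratio of $2\times2$ determinants in those coefficients, the breakpoints have bitsize $O(n^{2})$ as well. Together with (i) this is exactly the statement.

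The technical heart — and the main obstacle — is the design and correctness proof of the doubling gadget: one must write down explicit affine capacity functions for the $O(1)$ added edges and show, by maintaining a precise inductive invariant on the slopes of $\varphi$, the positions of its breakpoints, and the family of cuts that are minimal on each piece, that every cut of the enlarged network is accounted for and that the crossings it contributes are genuinely new and correctly ordered. Everything else — piecewise linearity, rationality, and the passages from breakpoint count to complexity and from capacity bitsize to breakpoint bitsize — is bookkeeping once that invariant is in place.
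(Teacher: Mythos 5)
The paper does not prove this theorem: it is imported as a black box from Murty~\cite{murty1980computational} and Carstensen~\cite{Carstensen:1983}, so there is no in-paper argument to compare your proposal against. What you have written is a plausible high-level reconstruction of the literature's construction, and it correctly identifies the structure one would need: $\maxflow(\lambda)=\min_c\mathrm{cap}_c(\lambda)$ is piecewise linear with rational breakpoints by max-flow/min-cut, exponentially many breakpoints require some recursive ``doubling'' device on $O(1)$ new vertices per level, and the $O(n^2)$ bitsize comes from having to rescale by $\Theta(n)$ bits per level to keep the $2\rho$ new breakpoints distinct and ordered, combined with Cramer's rule. These are indeed the right ideas.

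That said, your proposal is honest that the doubling gadget itself --- its explicit affine capacity functions, the inductive invariant on slopes and minimal cuts, and the verification that the new crossings are genuinely new and correctly interleaved --- is exactly the technical heart, and none of it is carried out. As written this is an outline rather than a proof; the claim that a constant-size gadget doubles the breakpoint count while preserving a usable slope pattern is precisely the nontrivial content of Carstensen's theorem and cannot be taken for granted. One small point worth flagging: as you observe, $\min_c \mathrm{cap}_c(\lambda)$ is \emph{concave}, whereas the paper's definition of affine parametrization demands a \emph{convex} $\MaxOptProb\circ\Parametrization$; your ``sign normalisation'' remark is the right fix, but it should be stated explicitly (e.g.\ negate the objective, or work with min-cost/min-cut rather than max-flow), since the number and rationality of breakpoints are preserved under this flip while the convexity direction is not.
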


\noindent\textbf{Surfaces and fans.}
An algebraic surface in $\realN^3$ is a surface defined by an equation of the
form $p(x,y,z)=0$ where $p$ is a polynomial. If $S$ is a set of surfaces $S_i$, each
defined by a polynomial $p_i$, the \emph{total degree} of $S$ is defined as the sum of
the degrees of polynomials $p_i$.

Let $K$ be a compact of $\realN^3$ delimited by algebraic surfaces and $S$ be a
finite set of algebraic surfaces of total degree $\delta$. We can assume that
$K$ is delimited by two affine planes of equation $z=\mu$ and
$z=2\mu_z$ and the cone of a rectangle
$\{ (x,y,1) \mid |x|, |y| \leqslant \mu_{x,y}\}$, by taking any such compact
containing $K$ and adding the surfaces bounding $K$ to $S$. $S$ defines a
partition of $K$ by considering maximal compact subspaces of $K$ whose
boundaries are included in surfaces of $S$. Such elements are called the
\emph{cells} of the decomposition associated to $S$.

\begin{definition}
  Let $K$ be a compact of $\realN^3$.  
  A finite set of surfaces $S$ on $K$ \emph{separates} a $\rho$-fan $\Fan$ on
  $K$ if the partition on $\integerN^3 \cap K$ induced by $S$ is finer than the
  one induced by $\Fan$.
\end{definition}

A major technical achievement of Mulmuley \cite{Mulmuley99} -- not explicitly 
stated -- was to prove the following theorem, of purely geometric nature. We 
refer to the long version of this work\footnote{For the purpose of double-blind
reviews, we do not provide an explicit reference for the moment.}
for a detailed 
proof of this result.

\begin{theorem}[Mulmuley]
  \label{thm:mulmuley-geometric}
  Let $S$ be a finite set of algebraic surfaces of total degree $\delta$.
  There exists a polynomial $P$ such that, for all $\rho > P(\delta)$, $S$ does
  not separate $\rho$-fans.
\end{theorem}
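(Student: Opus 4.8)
The plan is to establish \Cref{thm:mulmuley-geometric} via a parametrized intersection-counting argument, controlling how often an algebraic surface of bounded degree can cross a piecewise-affine fan. The key geometric observation is that a $\rho$-fan $\Fan$ is the cone over a convex piecewise-linear graph with $\rho$ breakpoints, and in order for a finite set $S$ of surfaces of total degree $\delta$ to separate $\Fan$, the surfaces of $S$ must, in particular, distinguish the integer points lying just below a breakpoint-ridge of the fan from those lying just above it. The argument therefore reduces to a counting statement: along a suitably chosen one-parameter family of rays sweeping across the fan, each surface $S_i$ of degree $d_i$ contributes at most $O(d_i)$ sign changes to any single ray, so that the entire family $S$ can realize at most $O(\delta)$ "independent transitions" per sweep, whereas separating a $\rho$-fan forces $\Omega(\rho)$ such transitions.

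\textbf{Key steps, in order.} First I would normalize the situation: by the reduction already described in the excerpt, we may assume $K$ is the compact region bounded by the planes $z=\mu$, $z=2\mu_z$, and the cone over a rectangle, so that the fan $\Fan$ restricted to $K$ consists of $\rho$ affine patches meeting along $\rho-1$ ridge segments, all with rational breakpoints of controlled bitsize. Second, I would pick, for each consecutive pair of ridges of the fan, a short segment of integer points crossing transversally from one affine piece to the adjacent one; separation of $\Fan$ by $S$ means that along each such segment, at least one surface of $S$ must change sign (otherwise two integer points on opposite sides of the fan's frontier would lie in the same cell of the decomposition induced by $S$). Third — the quantitative heart — I would invoke a Bézout-type bound: restricting each degree-$d_i$ surface $S_i$ to the union of these $\rho-1$ test segments (which collectively lie on a low-degree algebraic curve, e.g. a piecewise-linear curve of $O(\rho)$ pieces, or after a preliminary genericity argument on a single line), the total number of sign changes contributed by $S_i$ is $O(d_i)$ per line component, hence $O(\delta)$ in total across all of $S$. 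Fourth, I would combine the two counts: separation demands $\ge \rho - 1$ distinct transitions, the surfaces supply $\le C\delta$, so $\rho - 1 \le C\delta$, and taking $P(\delta) = C\delta + 1$ (or any polynomial dominating it) yields the claim that $S$ fails to separate $\rho$-fans once $\rho > P(\delta)$.

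\textbf{Main obstacle.} The delicate point is the third step: a naive sweep along a single straight line need not meet all $\rho-1$ ridges of an arbitrary convex fan, and if one instead sweeps along a curve that does meet every ridge, that curve may itself have degree growing with $\rho$, destroying the Bézout bound. The fix — and the part requiring genuine care, which is presumably where Mulmuley's original argument is most intricate — is to exploit the \emph{convexity} of $\MaxOptProb\circ\Parametrization$: the breakpoints $\lambda_1 < \cdots < \lambda_\rho$ are linearly ordered along the $x$-axis, so the ridges of the cone fan are ordered by angle, and one can sweep by a pencil of lines through a fixed point (or a translate of a fixed direction) that hits the ridges \emph{one at a time} in order. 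Each individual line in the pencil is degree one, so intersects each $S_i$ in $O(d_i)$ points; the subtlety is bounding how the sign pattern of $S$ can evolve as the line rotates through the pencil, which is itself a one-parameter semialgebraic family and can be controlled by another application of the same degree bound (a "theorem of the pencil" / Tarski–Seidenberg count). Pushing this double counting through cleanly, while keeping all degree bounds additive in $\delta$ rather than multiplicative, is the real work; everything else is bookkeeping with the compact $K$ and the bitsize normalization.
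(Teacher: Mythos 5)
A preliminary caveat: the paper does not actually prove \Cref{thm:mulmuley-geometric}. It attributes the result to Mulmuley and explicitly defers the detailed proof to a ``long version of this work,'' so there is no in-paper argument to compare against; your sketch has to stand on its own.

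Judged on its own, the third step does not close. You pick $\rho-1$ test segments, one per ridge, and argue that separation forces at least $\rho-1$ sign transitions of $S$ across them while a B\'ezout count caps the supply at $O(\delta)$. But those $\rho-1$ segments are pairwise distinct: a surface of degree $d$ can change sign on each separately, so a naive sum gives $O(d(\rho-1))$ transitions, i.e.\ $O(\delta\rho)$ in total, which only yields $\rho-1 \le C\delta\rho$ --- vacuous. Your proposed repair by a rotating pencil of lines names the real difficulty without supplying it: the number of qualitative changes in the sign pattern of $S$ as the pencil parameter varies is exactly the quantity one must bound in terms of $\delta$ alone, and a single Tarski--Seidenberg or B\'ezout count is not self-evidently enough; you say as much (``pushing the double counting through cleanly \dots is the real work''), but that leaves the central quantitative estimate unproved, and it is precisely where Mulmuley's argument is intricate. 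Two secondary points: (i) the fan's \emph{bitsize} is not just bookkeeping --- read literally with no bitsize hypothesis the theorem would be false, since a $\rho$-fan whose consecutive ridges have no integer point of $K$ between them induces the trivial partition on $\integerN^3 \cap K$, which $S$ always refines, so your step 2 (choosing integer points ``just below/above'' each ridge) needs the bitsize bound to guarantee those points exist; and (ii) your claimed $P(\delta)=C\delta+1$ is almost certainly too strong: Mulmuley's polynomial has higher degree, and nothing in a straightforward intersection count on a cone fan forces linearity.
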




\subsection{Strengthening Mulmuley's result}

We will now prove our strengthening of Mulmuley's lower bounds for \enquote{\prams
without bit operations} \cite{Mulmuley99}. For this, we will combine the results from
previous sections to establish the following result.

\setcounter{theorem}{1}
\begin{theorem}
  \label{cor:main-pram}
  Let $G$ be a deterministic graphing interpreting a \pram with
  $2^{O((\log N)^c)}$ processors, where $N$ is the length of the inputs and $c$
  any positive integer.
  
  Then $G$ does not decide \maxflow in $O((\log N)^c)$ steps.
\end{theorem}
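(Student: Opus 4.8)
The plan is to argue by contradiction, combining four ingredients: quantitative soundness of the \pram interpretation (\Cref{thm:quantsoundprams}), the main lemma (\Cref{mainlemma}), the Murty--Carstensen affine parametrization of \maxflow (\Cref{maxflow-param}), and Mulmuley's geometric theorem (\Cref{thm:mulmuley-geometric}). Suppose $G$ interprets a \pram with $p=2^{O((\log N)^c)}$ processors and decides \maxflow in $k=O((\log N)^c)$ steps. Fix a \maxflow instance on a network with $m$ nodes and capacities of polynomial bitsize, so its encoding has length $N=\mathrm{poly}(m)$, and hence $m=N^{\Omega(1)}$. By \Cref{maxflow-param} there is an affine parametrization $\Parametrization$ of \maxflow of bitsize $O(m^2)$ and parametric complexity $\rho=2^{\Omega(m)}$; this turns \maxflow into a decision problem $\DecProb\subseteq\integerN^3$ whose frontier cone $[\Frontier]$ is a $\rho$-fan of bitsize $O(m^2)$, contained in a compact $K\subseteq\realN^3$ delimited by two affine planes and the cone of a rectangle, chosen large enough to carry integer points witnessing every breakpoint. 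By the very definition of ``deciding $\DecProb$ in $k$ steps'', the $k$-th cell decomposition of $\integerN^3\cap K$ induced by $G$ must refine the partition induced by $[\Frontier]$.

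First I would turn the cell decomposition into a family of surfaces. By \Cref{thm:quantsoundprams} the graphing $G$ computes the membership problem for the set of accepting instances in $k$ steps, so \Cref{mainlemma} applies: each cell of the $k$-th cell decomposition is semi-algebraic, described by a system of at most $pk$ equations of degree at most $\max(2,\rootdegree{G})$ in at most $pk+3$ variables. Since in the geometric setting the machine reads only the three numeric coordinates (Mulmuley's discipline forcing indirect references to depend only on the fixed nonnumeric part of the input), I would eliminate the auxiliary variables round by round; because the $p$ updates of a parallel step all read the state of the previous round, the degree is multiplied (by $2$, or by $\rootdegree{G}$) only once per step, so each cell is cut out in $\realN^3$ by polynomials in $(x,y,z)$ of degree at most $D=\max(2,\rootdegree{G})^{k}=2^{O((\log N)^c)}$. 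Collecting all these polynomials over all cells yields a finite set $S$ of algebraic surfaces, and by \Cref{prop:entropy-k-cell-h0-edges} the number of vertices of the $k$-th entropic co-tree of $G$ — hence, up to a factor $pk$, the number of surfaces in $S$ — is at most $\card{\seqedges{k}{E}}\cdot 2^{k\cdot h_0([G])}$.

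The heart of the proof is then the estimate that the total degree $\delta$ of $S$ is \emph{quasipolynomial} in $N$, i.e. $\delta=2^{(\log N)^{O(c)}}$: the degree factor $D$ is already of this form, and the remaining work is to bound the number of distinct surfaces, by feeding the values $k=O((\log N)^c)$ and $p=2^{O((\log N)^c)}$ into the entropic bound $\card{\seqedges{k}{E}}\cdot 2^{k\cdot h_0([G])}$ and using that, the network being fixed, all computed quantities are functions of the three numeric variables alone. I expect this bookkeeping to be the main obstacle, since it is precisely where the quantitative entropy bound, the per-step degree estimate, and Mulmuley's numeric/nonnumeric restriction have to be reconciled carefully.

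Granting $\delta=2^{(\log N)^{O(c)}}$, the conclusion follows from Mulmuley's geometric theorem (\Cref{thm:mulmuley-geometric}): there is a fixed polynomial $P$ such that, whenever $\rho>P(\delta)$, the set $S$ does not separate any $\rho$-fan on $K$. Here $\rho=2^{\Omega(m)}=2^{N^{\Omega(1)}}$ while $P(\delta)=2^{(\log N)^{O(c)}}$, so $\rho>P(\delta)$ for all sufficiently large $N$; hence $S$ does not separate the $\rho$-fan $[\Frontier]$, i.e. the $k$-th cell decomposition of $\integerN^3\cap K$ induced by $G$ is not finer than the partition induced by $[\Frontier]$. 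This contradicts the assumption that $G$ decides $\DecProb$ in $k$ steps, and therefore $G$ cannot decide \maxflow in $O((\log N)^c)$ steps.
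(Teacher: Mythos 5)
Your proof has a genuine gap at the very first technical step. You apply \Cref{mainlemma} directly to the graphing $G$ interpreting the integer-valued \pram $M$, but \Cref{mainlemma} is stated only for computational graphings whose edges are realised by generators of $\crew^p(\amcrealfull)$. The interpretation of an integer \pram lives in $\amcprams=\varinjlim\crew^k(\amcfull)$, where $\amcfull$ contains the floor division $\boeuclidivide{i}{j}{k}$ and the floor root $\bosqrtn{i}{j}$. These are \emph{not} algebraic operations over $\realN$: a cell $\{x\mid \floor{x_j/x_k}=m\}$ is carved out by floor constraints, not by polynomial equalities and inequalities, so the conclusion of \Cref{mainlemma} that cells are semi-algebraic of low degree simply does not follow. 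Likewise, \Cref{thm:quantsoundprams} speaks of quantitative soundness for $\amcprams$, so it cannot be used to switch to $\amcrealfull$ silently.

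The paper bridges this by first replacing $M$ with a real-valued \pram $\tilde{M}$ that decides the same integer instances, invoking \Cref{prop:constanttimeeuclidian}: Euclidean division (and, by the same device, floor roots) can be simulated in constant time on a real-valued \pram, at the cost of a number of extra processors linear in the bitsize of the inputs. This gives a $\crew^{p'}(\amcrealfull)$-program $\Interpret{\tilde{M}}$, with $p'$ still $2^{O((\log N)^c)}$, whose running time is $Ck$ for a fixed constant $C$, and \emph{that} is what \Cref{mainlemma} is applied to. Your proposal skips this entire reduction. Without it the claim that the cell decomposition consists of semi-algebraic cells of degree at most $\max(2,\rootdegree{G})$ is unjustified, and the downstream appeals to Milnor--Thom/Mulmuley collapse.

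A secondary, lesser concern is your plan to eliminate the $pk$ auxiliary variables round by round to obtain polynomials in $(x,y,z)$ of degree $\max(2,\rootdegree{G})^{k}$. This is clean for $+,-,\times$, but divisions and $n$-th roots introduce implicit relations ($qr=p$, $r^n=x$ plus sign conditions), and substituting them into later rounds yields rational or algebraic, not polynomial, expressions; clearing denominators repeatedly over $k$ rounds requires a more careful degree accounting than the one-line ``degree multiplied once per step''. The paper sidesteps this by keeping the fresh Ben-Or-style variables and arguing in terms of the \emph{total degree} of the resulting system of surfaces, feeding that total-degree bound into \Cref{thm:mulmuley-geometric}. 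Either route can in principle be made to work, but as written your elimination step asserts a bound it does not establish. The rest of your structure — Murty--Carstensen for the $2^{\Omega(m)}$-fan, the quasipolynomial degree vs.\ exponential fan-complexity comparison, and the appeal to Mulmuley's geometric theorem — matches the paper's strategy.
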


So, let $M$ be an integer-valued \pram. We can associate to it a real-valued \pram
\(\tilde{M}\) such that $M$ and $\tilde{M}$ accept the same (integer) values, and the ratio 
between the running time of the two machines is a constant. Indeed:
\begin{proposition}\label{prop:constanttimeeuclidian}
  Euclidian division can be computed by a constant time real-valued \pram.
\end{proposition}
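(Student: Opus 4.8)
The plan is to realise, by a real-valued \pram taking a bounded number of steps, a routine that from two integers held in registers $X_j$ and $X_k$ writes $\lfloor X_j / X_k\rfloor$ into a fixed output register, with the convention that this value is $0$ when $X_k = 0$. First I would compute $r := X_j / X_k$ with a single (real) division instruction; since division on $\amcrealfull$ is defined to return $0$ when the divisor is $0$, the zero-divisor case already yields $r = 0$, hence the correct answer $\lfloor 0\rfloor = 0$, and it needs no separate treatment. What remains is to turn the real number $r$ into the integer $\lfloor r\rfloor$ in constantly many steps.

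The crux is the \emph{parallel floor extraction}. Recall that a \pram is a finite sequence of \srams, so in designing our machine we may let an auxiliary processor with index $p$ hard-code the constant $p$. After the processor that performed the division publishes $r$ in shared memory, each auxiliary processor $p$ reads $r$, computes $r - p$ and $r - p - 1$, and tests their signs: the unique processor for which $r - p \geqslant 0$ and $r - p - 1 < 0$ — that is, the one with $p = \lfloor r\rfloor$ — writes the constant $p$ into the output register. Since the guard $p = \lfloor r\rfloor$ holds for exactly one processor, this concurrent write is harmless under the \crew discipline. Quotients of either sign are handled by having each processor $p$ test, and write out, whichever of the two candidate values $p$ and $-p - 1$ equals $\lfloor r\rfloor$ (at most one does, and the two families of candidates exhaust $\integerN$); alternatively one first reduces, by a constant number of sign tests and negations, to the case $X_j, X_k \geqslant 0$. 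Either way the routine uses $O(1)$ instructions. An equivalent and even more direct formulation: let each auxiliary processor $p$ write $p$ into shared register $p$ in one parallel step, then perform a single indirect read $\copyref{o}{R}$ with $R$ holding $r$; this copies the contents of register $\lfloor r\rfloor$ — now equal to $\lfloor r\rfloor$ itself — into the output register $o$, exploiting the floor built into the indexing of the copy operators of $\amcrealfull$.

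Finally, the real-valued \pram $\tilde M$ is obtained from $M$ by replacing each line $\command{X_i \coloneqq X_j / X_k}$ (translated as the instruction $\boeuclidivide{i}{j}{k}$) with the routine above, acting on a private block of fresh registers and auxiliary processors, every other line being left untouched. On an integer input every register of $M$ holds an integer at each step, so the routine reproduces exactly the effect of Euclidean division; hence $\tilde M$ accepts the same integer inputs as $M$, and its number of steps exceeds that of $M$ only by a constant factor. The point requiring care is precisely this parallel extraction: one must verify that the guard is met by a single processor (so the \crew write is well defined), that each auxiliary processor runs only constantly many arithmetic and test instructions, and that the auxiliary block has at least $\lvert X_j\rvert + 1$ processors — a quantity bounded, throughout the run of $M$, by the standard estimate on the magnitude of the values $M$ can produce in the allotted number of steps.
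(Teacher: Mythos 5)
Your primary argument matches the paper's proof: dedicate one auxiliary processor per candidate quotient $p$, have it test whether $p\leqslant r<p+1$ after a single real division, and let the unique successful processor write $p$ under the \crew discipline. You are in fact more careful than the paper's two-line sketch (whose test ``$0<x\leqslant 1$'' is off by one when $p/q$ is an integer), and your explicit treatment of negative quotients is a welcome addition that the paper silently omits. One caveat about your ``equivalent and even more direct formulation'': performing $\copyref{o}{R}$ with $R$ holding the \emph{numeric} quotient $r$ uses a numeric value as a memory pointer, which the paper explicitly forbids following Mulmuley's convention (indirect references must depend only on nonnumeric data); that variant would break the degree-counting analysis in the main lemma, so it should be dropped in favour of the test-and-write routine. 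You also flag, correctly, that the auxiliary block needs on the order of $\lvert X_j\rvert$ processors (magnitude, not bitsize) and that this must be reconciled with the processor budget in the main theorem — a point the paper glosses over by asserting a ``linear in the bitsize'' processor count, which does not match the construction as written.
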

%


\begin{proof}[Proof of \Cref{cor:main-pram}]
Suppose now that $\Interpret{M}$ has a finite set of edges $E$. 
Then $\Interpret{\tilde{M}}$ has too has a finite set of edge of cardinality
\(O(\card{E})\). Since the running time of the initial \pram over integers is 
equal, up to a constant, to the computation time of the $\crew^p(\amcrealfull)$-program 
$\Interpret{\tilde{M}}$, we deduce that if $M$ computes \maxflow in $k$ steps, then 
$\Interpret{\tilde{M}}$ computes \maxflow in at most $Ck$ steps where $C$ is a 
fixed constant.

By \Cref{thm:graphingsBenOrsystems}, the problem decided by $\Interpret{\tilde{M}}$ 
in $Ck$ steps defines a system of equations separating the integral inputs accepted by 
$M$ from the ones rejected. I.e. if $M$ computes \maxflow in $Ck$ steps, then this system
of equations defines a set of algebraic surfaces that separate the $\rho$-fan defined by 
\maxflow. Moreover, this system of equation has a total degree bounded 
by \(Ck\max(2,\rootdegree{G})2p\times 2^{O(\card{E})}\times 2^{k.h_0(\Interpret{\tilde{M}})}\).

By \Cref{thm:param} and \Cref{thm:mulmuley-geometric}, there exists a polynomial $P$
such that a finite set of algebraic surfaces of total degree $\delta$ cannot separate the 
$2^{\Omega(n)}$-fan defined by \maxflow as long as $2^{\Omega(n)}>P(\delta)$. But
here the entropy of $G$ is $O(p)$, as the entropy of a product $f\times g$ satisfies 
$h(f\times g)\leqslant h(f)+h(g)$ \cite{entropyproduct}. Hence $\delta=O(2^{p}2^{k})$, contradicting the hypotheses 
that $p=2^{O((\log N)^c)}$ and $k=2^{O((\log N)^c)}$.
\end{proof}

This has \Cref{Theorem1} as a corollary, which shows that the class $\NCInteger$ does
not contain $\maxflow$, and hence is distinct from $\Ptime$. The question of how 
this class relates to \(\NC\) is open: indeed, while bit extractions cannot be performed 
in constant time by our machines (a consequence of Theorem 
\ref{thm:graphingsBenOrsystems}), they can be simulated in logarithmic time.

\bibliographystyle{IEEEtran/IEEEtran}
\bibliography{biblio}

\appendix

\section{Omitted proofs}

\begin{proof}[Proof of \Cref{prop:fullyact}]
A computation tree defines an $\amcact$-graphing $[T]$, and the
natural $\amcact$-graphing representative obtained from the inductive definition of $[T]$
is clearly an $\amcact$-treeing because $T$ is a tree. That this treeing represents faithfully
the computational tree $T$ raises no difficulty.

Let us now show that the membership problem of a subset $W\subseteq\realN^n$
that can be decided by a computational $\amcact$-treeing is also decided by an algebraic
computation tree $T$. We prove the result by induction on the number of states of the
computational $\amcact$-treeing. The initial case is when $T$ the set of states is exactly 
$\{1,\top,\bot\}$ with the order defined by $1<\top$ and $1<\bot$ and no other relations. This 
computational $\amcact$-treeing has at most 2 edges, since it is deterministic and the source 
of each edge is a subset among 
$\realN^{\omega}$, 
$\realN^{\omega}_{k\geqslant 0}$, 
$\realN^{\omega}_{k\leqslant 0}$, 
$\realN^{\omega}_{k> 0}$, 
$\realN^{\omega}_{k< 0}$, 
$\realN^{\omega}_{k=0}$, and 
$\realN^{\omega}_{k\neq 0}$. 

We first treat the case when there is only one edge of
source $\realN^n$. An element $(x_1,\dots,x_n)\in\realN^n$ is decided by $T$ if the 
main representative $((x_1,\dots,x_n,0,\dots),1)$ is mapped to $\top$. Since there is
only one edge of source the whole space, either this edge maps into the state $\top$ 
and the decided subset $W$ is equal to $\realN^n$, or it maps into $\bot$ and the 
subset $W$ is empty. In both cases, there exists an algebraic computation tree
deciding $W$. For the purpose of the proof, we will however construct a specific 
algebraic computation tree, namely the one that first computes the right expression
and then accepts or rejects. I.e. if the only edge mapping into $\top$ (resp. $\bot$)
is realised by an element $m$ in the \amc of algebraic computation trees which can 
be written as a product of generators $g_1,\dots,g_k$, we construct the tree of height 
$k+1$ that performs (in that order) the operations corresponding to $g_1$, $g_2$, etc., 
and then answers "yes" (resp. "no").

Now, the case where there is one edge of source a strict subspace, e.g. 
$\realN^{\omega}_{k\geqslant 0}$ (all other cases are treated in a similar manner)
and mapping into $\top$ (the other case is treated by symmetry). First, let us remark 
that if there is no other edge, one could very well add an edge to $T$ mapping into 
$\bot$ and realised by the identity with source the complementary subspace 
$\realN^{\omega}_{k< 0}$. We build a tree as follows. First, we test whether the 
variable $x_k$ is greater or equal to zero; this node has two children corresponding to
whether the answer to the test is "yes" or "no". We now construct the two subtrees
corresponding to these two children. The branch corresponding to "yes" is described
by the edge of source $\realN^{\omega}_{k\geqslant 0}$: we construct the tree of 
height $k+1$ performing the operations corresponding to the generators $g_1$, 
$g_2$, etc. whose product defined the realiser $m$ of $e$, and then answers "yes"
(resp. "no") if the edge $e$ maps into the state $\top$ (resp. $\bot$). Similarly, the 
other subtree is described by the realiser of the edge of source 
$\realN^{\omega}_{k< 0}$.

The result then follows by induction, plugging small subtrees as described above 
in place of the leaves of smaller subtrees.
\end{proof}

\begin{proof}[Proof of \Cref{lem:admseqHk}]
We show that the set $\admss{k}$ of admissible sequences of states of length $k$ has the same cardinality as the smallest subcover of $\mathcal{S}\vee [G]^{-1}(\mathcal{S})\vee\dots\vee [G]^{-(k-1)}(\mathcal{S}))$. Hence $H^{k}(G,\mathcal{S})=\frac{1}{k}\log_{2}(\card{\admss{k}})$, which implies the result.

The proof is done by induction. As a base case, we consider the set of $\admss{2}$ of  length $2$ admissible sequences of states and the cover $\mathcal{V}=\mathcal{S}\vee [G]^{-1}(\mathcal{S})$ of $D=[G]^{-1}(\Space{X})$. An element of $\mathcal{V}$ is an intersection $\Space{X}\times\{s_{1}\}\cap [G]^{-1}(\Space{X}\times\{s_{2}\})$, and is therefore equal to $C[s_{1},s_{2}]\times\{s_{1}\}$ where $C[s_{1},s_{2}]\subset\Space{X}$ is the set $\{x\in\Space{X}~|~[G](x,s_{1})\in\Space{X}\times\{s_{2}\}\}$. This set is empty if and only if the sequence $s_{1}s_{2}$ belongs to $\admss{2}$. Moreover, given another sequence of states $s'_{1}s'_{2}$, the sets $C[s_{1},s_{2}]$ and $C[s_{1},s_{2}]$ are disjoint. Hence a set $C[s_{1},s_{2}]$ is \emph{removable from the cover $\mathcal{V}$} if and only if $s_{1}s_{2}$ is not admissible. This proves the case $k=2$.

The step for the induction is similar. One considers the partition $\mathcal{S}_{k}=\bigvee_{i=0}^{-(k-1)}[G]^{i}(\mathcal{S})$ as $\mathcal{S}_{k-1}\vee[G]^{-(k-1)}(\mathcal{S})$. By the same argument, one shows elements of $\mathcal{S}_{k-1}\vee[G]^{-(k-1)}(\mathcal{S})$ are of the form $C[\mathbf{s}=(s_{0}s_{1}\dots s_{k-1}),s_{k}]\times\{s_{1}\}$ where $C[\mathbf{s},s_{k}]$ is the set $\{x\in\Space{X}~|~\forall i=2,\dots,k, [G]^{i-1}(x,s_{1})\in \Space{X}\times\{s_{i}\}\}$. Again, these sets $C[\mathbf{s},s_{k}]$ are pairwise disjoint and empty if and only if the sequence $s_{0}s_{1}\dots s_{k-1},s_{k}$ is not admissible.
\end{proof}

\begin{proof}[Proof of \Cref{lemma:coverentropybound}]
Let us fix some $\epsilon>0$. Notice that if we let $H_{k}(G,\mathcal{U})=H^{0}(\mathcal{U}\vee [G]^{-1}(\mathcal{U})\vee\dots\vee [G]^{-(k-1)}(\mathcal{U})))$, the sequence $H_{k}(U)$ satisfies $H_{k+l}
(\mathcal{U})\leqslant H_{k}(\mathcal{U})+H_{l}(\mathcal{U})$. By Fekete's lemma on subadditive sequences, this implies that $\lim_{k\rightarrow\infty}H_{k}/k$ exists and is equal to $\inf_{k}H_{k}/k$. Thus $h([G],\mathcal{U})=\inf_{k}H_{k}/k$. 

Now, the entropy $h([G])$ is defined as $\sup_{\mathcal{U}} \lim_{k\rightarrow\infty} H_{k}(\mathcal{U})/k$. This then rewrites as $\sup_{\mathcal{U}} \inf_{k} H_{k}(\mathcal{U})/k$. We can conclude that $h([G])\geqslant \inf_{k} H_{k}(\mathcal{U})/k$ for all finite open cover $\mathcal{U}$. 

Since $\inf_{k} H_{k}(\mathcal{U})/k$ is the limit of the sequence $H_{k}/k$, there exists an integer $N$ such that for all $k\geqslant N$ the following inequality holds: $\abs{H_{k}(\mathcal{U})/k-\inf_{k}H_{k}(\mathcal{U})/k}<\epsilon$, which rewrites as  $H_{k}(\mathcal{U})/k-\inf_{k}H_{k}(\mathcal{U})/k<\epsilon$. From this we deduce $H_{k}(\mathcal{U})/k<h([G])+\epsilon$, hence $H^{k}(G,\mathcal{U})<h([G])+\epsilon$ since $H^{k}(G,\mathcal{U})=H_{k}(G,\mathcal{U})$.
\end{proof}

\begin{proof}[Proof of \Cref{prop:entropy-k-cell-h0-edges}]
For a fixed sequence $\vec{e}$, the number of elements $n_{\vec{e}}^{\pi}$ 
of length $m$ in $\cotree{k}$ is bounded by the number of elements in the $m$-th 
cell decomposition of $T$, and is therefore bounded by $g(m)=2^{m.h_0([T])}$ 
by \autoref{prop:entropy-k-cell-h0}. 
The number of sequences $\vec{e}$ is bounded by $\card{\seqedges{k}{E}}$ and therefore 
the size of $\cotree{k}$ is thus bounded by $\card{\seqedges{k}{E}}.2^{(k+1).h_0([T])}$.
\end{proof}


\begin{proof}[Proof of \Cref{thm:graphingsBenOrsystems}]
If $G$ computes the membership problem for $W$ in $k$ steps, it means $W$
can be described as the union of the subspaces corresponding to the nodes 
$n^{\pi}_{\seq{e}}$ with $\pi$ of length $k$ in $\cotree{k}$. Now, each such 
subspace is an algebraic set, as it can be described by a set of polynomials as 
follows.

Finally let us note that, as in Mulmuley's work \cite{Mulmuley99}, since in our 
model the memory pointers are allowed to depend only on the nonnumeric 
parameters, indirect memory instructions can be treated as standard -- direct --
memory instructions. In other words, whenever an instruction involving a memory 
pointer is encountered during the course of execution, the value of the pointer 
is completely determined by nonnumerical data, and the index of the involved 
registers is completely determined, independently of the numerical inputs.


We define a system of equations $(E^{\seq{e}}_i)_{i}$ for each node 
$n^{\pi}_{\seq{e}}$ of the entropic co-tree $\cotree{k}$. We explicit the construction
for the case $p=1$, i.e. for the \amc $\crew^1(\amcrealfull)=\amcrealfull$; the case 
for arbitrary $p$ is then dealt with by following the construction and introducing
$p$ equations at each step (one for each of the $p$ instructions in $\amcrealfull$
corresponding to an element of $\crew^p(\amcrealfull)$). This is done 
inductively on the size of the path $\vec{e}$, keeping track of the last modifications
of each register. I.e. we define both the system of equations $(E^{\seq{e}}_i)_{i}$
and a function $\history{\seq{e}}: \realN^{\omega}+\bot\rightarrow \omega$ (which 
is almost everywhere null)\footnote{The use of $\bot$ is to allow for the creation of
fresh variables not related to a register.}. For an empty sequence, the system of
equations is empty, and the function $\history{\epsilon}$ is constant, equal to $0$.

Suppose now that $\vec{e'} = (e_1, \dots, e_m, e_m+1)$, with $\vec{e}=(e_1, 
\dots, e_m)$, and that one already computed $(E^{\seq{e}}_i)_{i\geqslant m}$ 
and the function $\history{\seq{e}}$. We now consider the edge $e_{m+1}$ and 
let $(r,r')$ be its realizer. We extend the system of equations 
$(E^{\seq{e}}_i)_{i\geqslant m}$ by a new equation $E_{m+1}$ and define the 
function $\history{\seq{e'}}$ as follows:
  \begin{itemize}
  \item if $r = \boadd{i}{j}{k}$, $\history{\seq{e'}}(x)=\history{\seq{e}}(x)+1$ if 
  	$x=i$, and $\history{\seq{e'}}(x)=\history{\seq{e}}(x)$ otherwise; then
	$E_{m+1}$ is $\benorvar{i} = \benorvar{j}+\benorvar{k}$;
  \item if $r = \bosubstract{i}{j}{k}$, $\history{\seq{e'}}(x)=\history{\seq{e}}(x)+1$ if 
  	$x=i$, and $\history{\seq{e'}}(x)=\history{\seq{e}}(x)$ otherwise; then
	$E_{m+1}$ is $\benorvar{i} = \benorvar{j} - \benorvar{k}$;
  \item if $r = \bomultiply{i}{j}{k}$, $\history{\seq{e'}}(x)=\history{\seq{e}}(x)+1$ if 
  	$x=i$, and $\history{\seq{e'}}(x)=\history{\seq{e}}(x)$ otherwise; then
	$E_{m+1}$ is $\benorvar{i} = \benorvar{j}\times \benorvar{k}$;
  \item if $r = \bodivide{i}{j}{k}$, $\history{\seq{e'}}(x)=\history{\seq{e}}(x)+1$ if 
  	$x=i$, and $\history{\seq{e'}}(x)=\history{\seq{e}}(x)$ otherwise; then
	$E_{m+1}$ is $\benorvar{i} = \benorvar{j}/\benorvar{k}$;
  \item if $r = \boaddconst{i}{k}{c}$, $\history{\seq{e'}}(x)=\history{\seq{e}}(x)+1$ if 
  	$x=i$, and $\history{\seq{e'}}(x)=\history{\seq{e}}(x)$ otherwise; then
	$E_{m+1}$ is $\benorvar{i} = c + \benorvar{k}$;
  \item if $r = \bosubstractconst{i}{k}{c}$, $\history{\seq{e'}}(x)=\history{\seq{e}}(x)+1$ if 
  	$x=i$, and $\history{\seq{e'}}(x)=\history{\seq{e}}(x)$ otherwise; then
	$E_{m+1}$ is $\benorvar{i} = c - \benorvar{k}$;
  \item if $r = \bomultiplyconst{i}{k}{c}$, $\history{\seq{e'}}(x)=\history{\seq{e}}(x)+1$ if 
  	$x=i$, and $\history{\seq{e'}}(x)=\history{\seq{e}}(x)$ otherwise; then
	$E_{m+1}$ is $\benorvar{i} = c\times \benorvar{k}$;
  \item if $r = \bodivideconst{i}{k}{c}$, $\history{\seq{e'}}(x)=\history{\seq{e}}(x)+1$ if 
  	$x=i$, and $\history{\seq{e'}}(x)=\history{\seq{e}}(x)$ otherwise; then
	$E_{m+1}$ is $\benorvar{i} = c/\benorvar{k}$;
  \item if $r = \bosqrtn{i}{k}$, $\history{\seq{e'}}(x)=\history{\seq{e}}(x)+1$ if 
  	$x=i$, and $\history{\seq{e'}}(x)=\history{\seq{e}}(x)$ otherwise; then
	$E_{m+1}$ is $\benorvar{i} = \sqrt[n]{\benorvar{k}}$;
  \item if $r = \identity$, the source of the edge $e_q$ is of the form 
    $\{ (x_1,\dots,x_{n+\ell}) \in \realN^{n+\ell} \mid P(x_k)\}\times \{i\}$ where
    $P$ compares the variable $x_k$ with $0$:
    \begin{itemize}
    \item if $P(x_k)$ is $x_k \neq 0$,  $\history{\seq{e'}}(x)=\history{\seq{e}}(x)+1$ if 
  	$x=\bot$, and $\history{\seq{e'}}(x)=\history{\seq{e}}(x)$ otherwise then 
	$E_{m+1}$ is $\benorvar{\bot}\benorvar{k} -1 = 0$;
    \item otherwise we set $\history{\seq{e'}}=\history{\seq{e}}$ and $E_{m+1}$ 
    equal to $P$.
  \end{itemize}
  \end{itemize}
  
  We now consider the system of equations $(E_{i})_{i=1}^{k}$ defined from
  the path $\seq{e}$ of length $k$ corresponding to a node $n^{\pi}_{\seq{e}}$
  of the $k$-th entropic co-tree of $G$. This system consists in $k$ equations 
  of degree at most $\max(2,\rootdegree{G})$ and containing at most $k+n$ variables, 
  counting the variables $x_1^0,\dots,x_n^0$ corresponding to the initial registers, 
  and adding at most $k$ additional variables since an edge of $\vec{e}$ introduces 
  at most one fresh variable. 
  Since the number of vertices $n^{\pi}_{\seq{e}}$ is bounded by 
  $\card{\seqedges{k}{E}}.2^{k.h_0([G])}$ by
  \autoref{prop:entropy-k-cell-h0-edges}, 
  we obtained the stated result in the case $p=1$.
  
  The case for arbitrary $p$ is then deduced by noticing that each step in the induction
  would introduce at most $p$ new equations and $p$ new variables. The resulting 
  system thus contains at most $pk$ equations of degree at most $\max(2,\rootdegree{G})$ 
  and containing at most $pk+n$ variables.
  %
%
\end{proof}

\begin{proof}[Proof of \Cref{thm:graphingsBenOr}]
By \Cref{thm:graphingsBenOrsystems} (using the fact that $p=1$ and 
$\rootdegree{G}=2$), the problem $W$ decided by $G$ in $k$ steps
is described by at most $\card{\seqedges{k}{E}}.2^{k.h_0([G])}$ systems of $k$ 
equations of degree $2$ involving at most $k+n$ variables. Applying 
\Cref{thm:milnor-ben-or}, we deduce that each such system of 
in\pointmedian{}equations (of $k$ equations of degree $2$ in $\realN^{k+n}$) 
describes a semi-algebraic variety $S$ such that
$\beta_0(S)<2.3^{(n+k)+k-1}$. This begin true for each of the 
$\card{\seqedges{k}{E}}.2^{k.h_0([G])}$ cells, we have that 
$\beta_0(W)<\card{\seqedges{k}{E}}.2^{k.h_0([G])+1}3^{2k+n-1}$.
\end{proof}

\begin{proof}[Proof of \Cref{ben-or}]
Let $T$ be an algebraic computation tree computing the membership problem 
for $W$, and consider the computational treeing $[T]$. Let $d$ be the height of
$T$; by definition of $[T]$ the membership problem for $W$ is computed in
exactly $d$ steps. Thus, by the previous theorem, $W$ has at most 
$\card{\seqedges{k}{E}}.2^{d.h_0([T])+1}3^{2d+n-1}$ connected components. 
As the interpretation of an algebraic computational tree, $h_0([T])$ is at most 
equal to $2$, and $\card{\seqedges{k}{E}}$ is bounded by $2^{d}$. Hence
$N\leqslant 2^d.2^{2d+1}3^{n-1}3^{2d}$, i.e. $d=\Omega(\log N)$.
\end{proof}

\begin{proof}[Proof of \Cref{prop:constanttimeeuclidian}]
  To compute \(p//q\), where \(p,q \in \integerN\), consider the real-valued
  machine such that the \(i^{\text{th}}\) processor computes \(x=p/q-i\) and if
  \(0 < x \leqslant 1\), writes \(i\) in the shared memory.
  This operation generalizes euclidian division and is computed in constant
  time. Moreover, this only uses a number of processor linear in the bitsize of
  the inputs if they are integers.
\end{proof}

%
%

\end{document}